\newtheorem{theorem}{Theorem}[section]
\newtheorem{proposition}[theorem]{Proposition}
\newtheorem{lemma}[theorem]{Lemma}
\newtheorem{remark}[theorem]{Remark}
\newtheorem{definition}[theorem]{Definition}
\numberwithin{equation}{section}
\newcommand{\wutilde}[1]{\vrule depth 0pt width 0pt%
{\raise0.8pt\hbox{$\smash{{\mathop{#1} \limits_{\displaystyle\widetilde{}}}}$}}}
\long\def\@makecaption#1#2{
 \vskip 10pt
 \setbox\@tempboxa\hbox{#1. #2}
 \ifdim \wd\@tempboxa >\hsize #1. #2\par \else \hbox
to\hsize{\hfil\box\@tempboxa\hfil}
 \fi}
\newcommand{\set}[2]{\left\{\left. #1 ~\right|~ #2 \right\}}
\begin{document}
\title[Lattice equations arising from discrete Painlev\'e systems (I)]
{Lattice equations arising from discrete Painlev\'e systems (I): 
$(A_2+A_1)^{(1)}$ and $(A_1+A_1')^{(1)}$ cases}
\author{Nalini Joshi}
\author{Nobutaka Nakazono}
\address{School of Mathematics and Statistics, The University of Sydney, New South Wales 2006, Australia.}
\email{nobua.n1222@gmail.com}
\author{Yang Shi}
\begin{abstract}
We introduce the concept of  {\it $\omega$-lattice}, constructed from $\tau$ functions of Painlev\'e systems,
on which quad-equations of ABS type appear.
In particular, we consider the $A_5^{(1)}$- and $A_6^{(1)}$-surface $q$-Painlev\'e systems 
corresponding affine Weyl group symmetries are of $(A_2+A_1)^{(1)}$- and $(A_1+A_1)^{(1)}$-types, respectively.
\end{abstract}

\subjclass[2010]{
33E15, 33E17, 39A13, 39A14}
\keywords{$q$-Painlev\'e equation; ABS equation;  periodic reduction; $\tau$ function; affine Weyl group; projective reduction}
\maketitle

\section{Introduction}
\subsection{Main result}
Although two important and widely used classifications of integrable discrete systems
have been known for more than a decade, no satisfactory relation between the two is yet understood.
The first is the ABS classification of integrable partial difference equations\cite{ABS2003:MR1962121,ABS2009:MR2503862,BollR2011:MR2846098,BollR2012:MR3010833,BollR:thesis}
while the second is Sakai's classification of integrable nonlinear ordinary difference equations\cite{SakaiH2001:MR1882403}.
Using geometry and symmetry groups of the equations as our main tool,
we present a new, general approach to connect the two classifications.

The framework we describe is based on the $\omega$-lattice, which is related to $\tau$-function theory.
While this framework is general, we explain its construction
for the $A_5^{(1)}$- and $A_6^{(1)}$-surface $q$-Painlev\'e systems 
and show how the ABS quad-equations appear.

The construction of the $\omega$-lattice is essential for knowledge about
how ABS-type equations can be reduced to a discrete Painlev\'e equation.
It provides not only the type of equation, but also the combinatorial structure of the lattice before reduction.
In \cite{JNS2014:MR3291391}, we showed how to use this information to find a reduction of equations 
on a $4$-dimensional hypercube (4D cube) but we did not provide details of the $\omega$-lattice construction.
Subsequently, in \cite{JNS:paper3} we provided a comprehensive method 
for constructing Lax pairs of the $A_5^{(1)}$-surface $q$-Painlev\'e equations.
The construction of the $\omega$-lattices for $A_5^{(1)}$- and $A_6^{(1)}$-surface $q$-Painlev\'e systems
provided in the present paper leads to the following main result.
\begin{theorem}\label{maintheorem}
All quad-equations appearing on the $\omega$-lattice for $A_5^{(1)}$-surface $q$-Painlev\'e system, 
defined by Equation \eqref{eqn:A5_omegafun},
and those for $A_6^{(1)}$-surface $q$-Painlev\'e system, 
defined by Equation \eqref{eqn:A6_omegafun},
are of ABS type.
\end{theorem}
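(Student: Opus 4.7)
The plan is to reduce the statement to a finite verification: one quad-equation per translation-inequivalent face of each $\omega$-lattice, checked against the ABS classification (together with Boll's extensions).

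First, I would enumerate the elementary squares of the $\omega$-lattice. Because the lattice is generated by the translation sublattices of the extended affine Weyl groups of types $(A_2+A_1)^{(1)}$ and $(A_1+A_1')^{(1)}$, there are only finitely many 2-faces modulo the symmetry action, and each is labelled by an unordered pair of generating translations. For each such face the four vertices correspond to four $\omega$-values via the defining formulas \eqref{eqn:A5_omegafun} and \eqref{eqn:A6_omegafun}, so the quad-equation is determined once these vertex labels are fixed.

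Second, I would derive the quad-equations themselves. Since each $\omega$ is a ratio of $\tau$-functions, translating around an elementary square produces a bilinear identity among the corresponding $\tau$-functions. These bilinear identities are available from the affine Weyl group action on the Painlev\'e $\tau$-functions (essentially Hirota--Miwa relations specialised to the chosen translation sublattice). Dividing such a bilinear relation by an appropriate monomial in the $\tau$-functions converts it into a polynomial equation in the four $\omega$-variables on the face; by construction it is multi-affine in these four variables.

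Third, for each candidate quad-equation I would bring it into a canonical multi-affine form by a gauge transformation on the $\omega$-variables and a M\"obius reparametrisation of the spectral/lattice parameters coming from the Painlev\'e root system. The resulting normal form is then matched against the ABS list ($Q1$--$Q4$, $H1$--$H3$) and, where appropriate, against Boll's extensions for asymmetric or non-tetrahedral variants. This identification completes the proof.

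The main obstacle will be bookkeeping and gauge choice rather than a single conceptual step: each surface type contributes several face types, each face type yields a bilinear $\tau$-relation whose $\omega$-form looks superficially different from the ABS normal forms, and the gauge must be chosen carefully so that the final polynomial matches an entry on the ABS/Boll list rather than a non-ABS polynomial of the same bidegree. The $A_6^{(1)}$-surface case is delicate in this respect, because the smaller symmetry group produces fewer natural translations and so more of the faces are expected to require asymmetric forms from Boll's extended classification.
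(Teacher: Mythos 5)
Your overall strategy---derive an explicit multi-affine relation for each type of quadrilateral and match it against the ABS/Boll list---is the same in spirit as the paper's, which in Lemma \ref{lemma:quad-eqn_on_omega_lattice} and Lemma \ref{lemma:A1A1_quad-eqn_on_omega_lattice} writes down the quad-equations directly from the birational action of the translations on the $\omega_i$ and identifies them as $H3_{(\delta,\epsilon)=(0,0)}$, $H3_{(\delta,\epsilon)=(0,1)}$ and $D4_{(\delta_1,\delta_2,\delta_3)=(1,0,0)}$ (the latter from Boll's $H^6$ list, as you anticipate). However, there is a genuine gap in your enumeration and exhaustiveness step. First, indexing the candidate quadrilaterals by unordered pairs of generating translations misses quad-equations supported on non-coordinate quadrilaterals: Equation \eqref{eqn:T4_action} relates the four points $\omega_{l_1,l_2,l_3,l_4+1}$, $\omega_{l_1+1,l_2,l_3,l_4}$, $\omega_{l_1,l_2+1,l_3,l_4}$, $\omega_{l_1,l_2,l_3,l_4}$, which do not form an elementary face of the translation lattice, yet it is a quad-equation (of $H3_{(0,1)}$ type) that the theorem must account for. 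Second, and more seriously, you never prove that your list is complete: since any three initial $\omega$'s determine every point of the lattice, \emph{a priori} many $4$-tuples of lattice points could satisfy multi-affine relations, and the theorem claims that \emph{all} such relations are of ABS type. The paper closes this gap by invoking Propositions \ref{prop:JNS_1} and \ref{prop:JNS_2}, which realise the $\omega$-lattice as (the vertex set of) an asymmetric 4D cube, respectively 3D cube, whose faces carry precisely the listed $H3$- and $D4$-type equations; exhaustiveness is then read off from that combinatorial structure rather than from a face count in the translation lattice.

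A secondary, repairable weakness is the claim that each face yields a single bilinear $\tau$-identity which, after division by a monomial, becomes multi-affine in the four $\omega$'s. Since each $\omega$ is a ratio of two $\tau$-functions at points differing in the $T_4$-direction, a relation among four $\omega$'s involves eight $\tau$'s, and one generally has to combine and eliminate between several bilinear relations (as the paper does implicitly by working with the rational action on the $\omega_i$ themselves, e.g.\ eliminating $\omega_2$ between \eqref{eqn:proof_lemma_H3_2} and the expression for $T_4(\omega_0)$ to obtain \eqref{eqn:proof_lemma_T4_action}). This step would need to be carried out explicitly before the gauge-and-match stage of your argument can begin.
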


\subsection{Background}
Discrete Painlev\'e equations and ABS equations have been studied from various viewpoints.
In particular, Sakai \cite{SakaiH2001:MR1882403} gave a classification of discrete Painlev\'e equations
based on the geometric structure of rational surfaces, and their corresponding affine Weyl symmetry group.
On the other hand, Adler, Bobenko and Suris \cite{ABS2003:MR1962121,ABS2009:MR2503862},
and later Boll \cite{BollR2011:MR2846098,BollR2012:MR3010833,BollR:thesis}, showed how to classify quad-equations
(partial difference equations on quadrilateral lattices) based on consistency of the equations on
$3$-dimensional cubes (see Section \ref{subsection:ABS}).
The resulting equations are called ABS equations.

Many types of periodic reductions from ABS equations to discrete Painlev\'e equations have been investigated
\cite{NP1991:MR1098879,GRSWC2005:MR2117991,JNS2014:MR3291391,FJN2008:MR2425981,HHJN2007:MR2303490,OrmerodCM2012:MR2997166,HHNS2015:MR3317164,OrmerodCM:2014arXiv1308.4233}.
It is well known that some discrete Painlev\'e equations
can be derived from ABS equations by periodic reductions with suitable choice of dependent variables.
However, 
(i) after applying a periodic reduction to an ABS equation,
we do not know which of discrete Painlev\'e equations appear;
(ii) discrete Painlev\'e equations obtained by periodic reductions often have insufficient number of parameters
(such an example appeared in \cite{GRSWC2005:MR2117991}, 
where the reduction which is given by Equations \eqref{eqn:intro_U_Omega}--\eqref{eqn:intro_qP3} for the special value $\lambda=1$ is discussed).
To solve the above-mentioned problems systematically, 
it is necessary to study the periodic reductions not only from the viewpoint of ABS equations
but also from that of Painlev\'e systems.
Unlike other investigations, which start with a quad-equation and obtain a discrete Painlev\'e equation,
we show how to obtain the reverse, by investigating underlying bilinear structure of $\tau$ functions
for the discrete Painlev\'e equation.

\subsection{ABS equation}\label{subsection:ABS}
In \cite{ABS2003:MR1962121,ABS2009:MR2503862,BollR2011:MR2846098,BollR2012:MR3010833,BollR:thesis}, 
Adler {\it et al.} classified polynomials in four variables into eleven types:
$Q4$, $Q3$, $Q2$, $Q1$, $H3$, $H2$, $H1$, $D4$, $D3$, $D2$, $D1$.
The first four types, the next three types and the last four types are collectively called $Q$-, $H^4$- and $H^6$-types, respectively.
The resulting polynomial $P$ satisfies the following properties.
\begin{description}
\item[(1) Linearity]
Polynomial $P$ is linear in each argument, i.e., it has the following form:
\begin{equation}
 P(x_1,x_2,x_3,x_4)=A_1x_1x_2x_3x_4+\cdots+A_{16},
\end{equation}
where coefficients $A_i$ are complex parameters.
\item[(2) 3D consistency and tetrahedron property]
There exist seven polynomials in four variables: $P^{(i)}$, $i=1,\dots,7$,
which satisfy the property {\bf(1)}
and a cube $C$ on whose six faces the following quad-equations are assigned
\begin{subequations}
\begin{align}
 &P(x_0,x_1,x_2,x_{12})=0,
 &&P^{(1)}(x_0,x_2,x_3,x_{23})=0,\\
 &P^{(2)}(x_0,x_3,x_1,x_{31})=0,
 &&P^{(3)}(x_3,x_{31},x_{23},x_{123})=0,\\
 &P^{(4)}(x_1,x_{12},x_{31},x_{123})=0,
 &&P^{(5)}(x_2,x_{23},x_{12},x_{123})=0,
\end{align}
\end{subequations}
where eight variables $x_i$ are on the vertices of the cube,
such that 
$x_{123}$ can be uniquely expressed by the four variables $x_i$, $i=0,1,2,3$,
({\it 3D consistency})
and the following relations hold ({\it tetrahedron property}):
\begin{equation}
 P^{(6)}(x_0,x_{12},x_{23},x_{31})=0,\quad
 P^{(7)}(x_1,x_2,x_3,x_{123})=0.
\end{equation}
\end{description} 

We here list some polynomials of ABS type as follows:
\begin{align*}
 Q1&:Q1(x_1,x_2,x_3,x_4;\alpha_1,\alpha_2;\epsilon)\\
 &\quad=\alpha_1(x_1x_2+x_3x_4)-\alpha_2(x_1x_4+x_2x_3)
 -(\alpha_1-\alpha_2)(x_1x_3+x_2x_4)+\epsilon\alpha_1\alpha_2(\alpha_1-\alpha_2),\\
 H3&:H3(x_1,x_2,x_3,x_4;\alpha_1,\alpha_2;\delta;\epsilon)\\
 &\quad=\alpha_1(x_1x_2+x_3x_4)-\alpha_2(x_1x_4+x_2x_3)
 +({\alpha_1}^2-{\alpha_2}^2)\left(\delta+\cfrac{\epsilon}{\alpha_1\alpha_2}\,x_2x_4\right),\\
 H1&:H1(x_1,x_2,x_3,x_4;\alpha_1,\alpha_2;\epsilon)
 =(x_1-x_3)(x_2-x_4)+(\alpha_2-\alpha_1)(1-\epsilon x_2x_4),\\
 D4&:D4(x_1,x_2,x_3,x_4;\delta_1,\delta_2,\delta_3)
 =x_1x_3+x_2x_4+\delta_1x_1x_4+\delta_2x_3x_4+\delta_3,
\end{align*}
where $\alpha_1,\alpha_2\in\mathbb{C}^\ast$ and $\epsilon,\delta,\delta_1,\delta_2,\delta_3\in\{0,1\}$.
It is well known that assigning a polynomial of ABS type to all faces of the integer lattice $\mathbb{Z}^2$,
we can obtain an integrable partial difference equation, e.g.
\begin{description}
\item[discrete Schwarzian KdV equation\cite{NC1995:MR1329559,NCWQ1984:MR763123}]
\begin{equation}\label{eqn:intro_DSKdV_1}
 Q1(U,\overline{U},\widehat{\overline{U}},\widehat{U};\alpha,\beta;0)=0
 ~\Leftrightarrow~
 \cfrac{(U-\overline{U})(\widehat{U}-\widehat{\overline{U}})}{(U-\widehat{U})(\overline{U}-\widehat{\overline{U}})}=\cfrac{\alpha}{\beta}\,;
\end{equation}
\item[lattice modified KdV equation\cite{NC1995:MR1329559,NQC1983:MR719638,ABS2003:MR1962121}]
\begin{equation}\label{eqn:intro_LMKdV_1}
 H3(U,\overline{U},-\widehat{\overline{U}},\widehat{U};\alpha,\beta;0;0)=0
 ~\Leftrightarrow~
 \cfrac{\widehat{\overline{U}}}{U}=\cfrac{\alpha\overline{U}-\beta\widehat{U}}{\alpha\widehat{U}-\beta\overline{U}}\,;
\end{equation}
\item[lattice potential KdV equation\cite{HirotaR1977:MR0460934,NC1995:MR1329559}]
\begin{equation}
 H1(U,\overline{U},\widehat{\overline{U}},\widehat{U};\alpha,\beta;0)=0
 ~\Leftrightarrow~
 (U-\widehat{\overline{U}})(\overline{U}-\widehat{U})=\alpha-\beta\,;
\end{equation}
\item[discrete version of Volterra-Kac-van Moerbeke equation\cite{NC1995:MR1329559}]
\begin{align}
 &D4(1-(\alpha^{-1}\beta-1)U,\widehat{U},\overline{U},-1+(\alpha^{-1}\beta-1)\widehat{\overline{U}};0,0,0)=0\notag\\
 &~\Leftrightarrow~
 \cfrac{\widehat{U}}{\overline{U}}=\cfrac{(\beta-\alpha)U-\alpha}{(\beta-\alpha)\widehat{\overline{U}}-\alpha}\,,
\end{align}
\end{description}
where 
\begin{equation}\label{eqn:intro_notation_1}
 U=U_{l,m},\quad
 \alpha=\alpha_l,\quad
 \beta=\beta_m,\quad
 \bar{ }:l\to l+1,\quad
 \hat{ }:m\to m+1,\quad
 l,m\in\mathbb{Z}.
\end{equation}
Throughout this paper, we often call a partial difference equation by the type of corresponding ABS polynomial.

The relations between ABS equations and discrete Painlev\'e equations on the level of equations have been intensively investigated,
but those on the level of underlying structure have not been clarified.
Here, we show an example of such a relation 
(the special case $\lambda=1$ is first obtained in \cite{GRSWC2005:MR2117991}).
By letting
\begin{equation}\label{eqn:intro_U_Omega}
 U_{l,m}=\lambda^l \Omega_{l,m},
\end{equation}
and applying the $(1,-2)$-periodic condition
\begin{equation}
 \Omega_{l+1,m-2}=\Omega_{l,m},
\end{equation}
which implies the condition on the parameters
\begin{equation}
 \cfrac{\overline{\alpha}}{~\alpha~}=\cfrac{~\beta~}{\widehat{\widehat{\beta}}}\,,
\end{equation}
Equation \eqref{eqn:intro_LMKdV_1} can be reduced to
\begin{equation}\label{eqn:intro_LMKdV_2}
 \cfrac{\widehat{\widehat{\widehat{\Omega}}}}{\Omega}
 =\cfrac{\widehat{\Omega}-\lambda\cfrac{\alpha}{\beta}\,\widehat{\widehat{\Omega}}}
 {\lambda\left(\lambda\widehat{\widehat{\Omega}}-\cfrac{\alpha}{\beta}\,\widehat{\Omega}\right)}\,,
\end{equation}
where
\begin{equation}
 \Omega=\Omega_{l,m}.
\end{equation}
Substituting
\begin{equation}
 f=\lambda \cfrac{~\widehat{\widehat{\Omega}}~}{\widehat{\Omega}}\,,\quad
 g=\lambda \cfrac{~\widehat{\Omega}~}{\Omega}\,,\quad
 t=-\cfrac{\alpha}{\beta}\,,\quad 
 a=\cfrac{~\beta~}{\widehat{\beta}}\,,\quad
 q=\cfrac{\overline{\alpha}}{~\alpha~}=\cfrac{~\beta~}{\widehat{\widehat{\beta}}}\,,
\end{equation}
in Equation \eqref{eqn:intro_LMKdV_2}, we obtain the $A_5^{(1)}$-surface $q$-Painlev\'e equation known as a $q$-discrete analogue of Painlev\'e III equation (denoted by $q$-P$_{\rm III}$)
\cite{KTGR2000:MR1789477,SakaiH2001:MR1882403}:
\begin{equation}\label{eqn:intro_qP3}
 \overline{g}=\cfrac{\lambda^2}{gf}\,\cfrac{1+tf}{t+f}\,,\quad
 \overline{f}=\cfrac{\lambda^2}{f\overline{g}}\,\cfrac{1+at\overline{g}}{at+\overline{g}}\,.
\end{equation}

\subsection{Plan of the paper}
This paper is organized as follows: 
in Section \ref{section:tau_f}, we introduce the $\tau$ functions of $A_5^{(1)}$-surface $q$-Painlev\'e systems,
which have the extended affine Weyl group symmetry of type $(A_2+A_1)^{(1)}$.
Moreover, we show that 
$q$-Painlev\'e equations can be derived from
a birational representation of the extended affine Weyl group of type $(A_2+A_1)^{(1)}$.
In Section \ref{section:omega},
we construct a lattice where quad-equations appear,
and then derive various quad-equations of ABS type, as relations on the lattice.
In Sections \ref{section:omegaA1A1}, 
we summarize the result for the case of $A_6^{(1)}$-surface $q$-Painlev\'e systems.
Some concluding remarks are given in Section \ref{ConcludingRemarks}.
\section{Construction of lattices from affine Weyl group $\widetilde{W}((A_2+A_1)^{(1)})$}\label{section:tau_f}
In this section, we describe $3$-dimensional structures constructed by using the symmetry groups of discrete Painlev\'e equations. While the groups themselves are well known, the novel perspective we focus on is the construction of $3$-dimensional lattices based on $\tau$ functions and $q$-Painlev\'e systems.

\subsection{The $\tau$-lattice}
We describe the action of the family of B\"acklund transformations of $q$-P$_{\rm III}$ \eqref{eqn:intro_qP3}
on six particular variables associated with this system\cite{TsudaT2006:MR2207047}. 
Iterating these variables under the affine Weyl group actions, 
we obtain a system of $\tau$ functions, which form a $\tau$-lattice. 

The transformation group $\widetilde{W}((A_2+A_1)^{(1)})$ 
has 7 generators $s_0$, $s_1$, $s_2$, $\pi$, $w_0$, $w_1$, $r$.
Below, we describe their actions on parameters: $a_0$, $a_1$, $a_2$, $c$, 
and on variables: $\tau_i$, $\bar{\tau}_i$, $i=0,1,2$.  
Actions on parameters are given by 
\begin{align*}
 &s_i:(a_i,a_{i+1},a_{i+2},c)\to({a_i}^{-1},a_ia_{i+1},a_ia_{i+2},c),
 &&\pi:(a_0,a_1,a_2,c)\to(a_1,a_2,a_0,c),\\
 &w_0:(a_0,a_1,a_2,c)\to(a_0,a_1,a_2,c^{-1}),
 &&w_1:(a_0,a_1,a_2,c)\to(a_0,a_1,a_2,q^{-2}c^{-1}),\\
 &r:(a_0,a_1,a_2,c)\to(a_0,a_1,a_2,q^{-1}c^{-1}),
\end{align*}
while its actions on variables are given by
\begin{equation*}
\begin{array}{lll}
 s_i(\tau_i)=
  \cfrac{u_i\tau_{i+1}\bar{\tau}_{i-1}+\bar{\tau}_{i+1}\tau_{i-1}}
   {{u_i}^{1/2}\bar{\tau}_i},
 &&s_i(\tau_j) = \tau_j\quad (i\neq j),\\
 s_i(\bar{\tau}_i)= 
  \cfrac{v_i\bar{\tau}_{i+1}\tau_{i-1}+\tau_{i+1}\bar{\tau}_{i-1}}
   {{v_i}^{1/2}\tau_i},
 &&s_i(\bar{\tau}_j) = \bar{\tau}_j\quad (i\neq j),\\
 \pi(\tau_i)= \tau_{i+1},
 &&\pi(\bar{\tau}_i) = \bar{\tau}_{i+1},\\
 w_0(\bar{\tau}_i)= 
  \cfrac{{a_{i+1}}^{1/3}(\bar{\tau}_i\tau_{i+1}\tau_{i+2}
   + u_{i-1}\tau_i\bar{\tau}_{i+1}\tau_{i+2}
   + {u_{i+1}}^{-1}\tau_i\tau_{i+1}\bar{\tau}_{i+2})}
  {{a_{i+2}}^{1/3}\bar{\tau}_{i+1}\bar{\tau}_{i+2}},
 &&w_0(\tau_i) = \tau_i,\\
 w_1(\tau_i)=
  \cfrac{{a_{i+1}}^{1/3}(\tau_i\bar{\tau}_{i+1}\bar{\tau}_{i+2}
   + v_{i-1}\bar{\tau}_i\tau_{i+1}\bar{\tau}_{i+2}
   + {v_{i+1}}^{-1}\bar{\tau}_i\bar{\tau}_{i+1}\tau_{i+2})}
  {{a_{i+2}}^{1/3}\tau_{i+1}\tau_{i+2}},
 &&w_1(\bar{\tau}_i) = \bar{\tau}_i,\\
 r(\tau_i)= \bar{\tau}_i,
 &&r(\bar{\tau}_i) = \tau_i,
\end{array}
\end{equation*}
where
\begin{equation}
 u_i = q^{-1/3}c^{-2/3}a_i,\quad
 v_i = q^{1/3}c^{2/3}a_i,\quad
 q=a_0a_1a_2,
\end{equation}
and $i,j\in\mathbb{Z}/3\mathbb{Z}$.
For each element $w\in\widetilde{W}((A_2+A_1)^{(1)})$ and function $F=F(a_i,c,\tau_j,\bar{\tau}_k)$, 
we use the notation $w.F$ to mean $w.F=F(w.a_i,w.c,w.\tau_j,w.\bar{\tau}_k)$, that is, 
$w$ acts on the arguments from the left. 
\begin{remark}
Notations in this paper are related to those in \cite{TsudaT2006:MR2207047}
by the following correspondence:
\begin{align*}
 &(s_0,s_1,s_2,\pi,w_0,w_1,r)
 \to(s_0,s_1,s_2,\pi^2,r_1,r_0,\pi^3),\\
 &(a_0,a_1,a_2,c)
 \to(a_0,a_1,a_2,q^{-1}b_0),\\
 &(\tau_0,\tau_1,\tau_2,\bar{\tau}_0,\bar{\tau}_1,\bar{\tau}_2)
 \to(\tau_3,\tau_1,\tau_5,\tau_6,\tau_4,\tau_2).
\end{align*}
We also note that in \cite{TsudaT2006:MR2207047}
each element $w\in\widetilde{W}((A_2+A_1)^{(1)})$ acts on the arguments from the right,
whereas in the present paper it acts from the left.
\end{remark}

The following proposition shows that 
$\widetilde{W}((A_2+A_1)^{(1)})$ gives a representation of an extended affine Weyl group of type $(A_2+A_1)^{(1)}$. 
\begin{proposition}[\cite{TsudaT2006:MR2207047}]
The group of transformations $\widetilde{W}((A_2+A_1)^{(1)})=\langle s_0,s_1,s_2,\pi, w_0,w_1,r\rangle$ 
forms the extended affine Weyl group of type $(A_2+A_1)^{(1)}$. 
Namely, the transformations satisfy the fundamental relations
\begin{subequations}
\begin{align}
 &{s_i}^2=(s_is_{i+1})^3=\pi^3=1,\quad
 \pi s_i = s_{i+1}\pi,\quad
 (i\in\mathbb{Z}/3\mathbb{Z}),\\
 &{w_0}^2={w_1}^2=r^2=1,\quad
 rw_0=w_1r,
\end{align}
\end{subequations}
and the action of $\widetilde{W}(A_2^{(1)})=\langle s_0,s_1,s_2,\pi\rangle$ and 
that of $\widetilde{W}(A_1^{(1)})=\langle w_0,w_1,r\rangle$ commute.
Note that the parameters $q$ and $c$ are invariant under the action of 
$\widetilde{W}((A_2+A_1)^{(1)})$ and $\widetilde{W}(A_2^{(1)})$, respectively.
\end{proposition}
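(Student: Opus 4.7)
The proof is a direct verification, decomposed into a parameter-level check and a variable-level check. I would substitute the explicit formulas for each generator into both sides of every relation and reduce.

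At the parameter level, all relations collapse to monomial identities. For instance, ${s_i}^2=1$ holds because $s_i$ sends $a_i\mapsto{a_i}^{-1}$ and $a_{i\pm1}\mapsto a_ia_{i\pm1}$, so a second application restores each $a_j$; analogously $\pi^3=1$, the braid $(s_is_{i+1})^3=1$, and $\pi s_i\pi^{-1}=s_{i+1}$ drop out by inspection. For the $A_1^{(1)}$ part, using $w_1(q)=q$ one obtains $rw_0(c)=r(c^{-1})=qc$ and $w_1r(c)=w_1(q^{-1}c^{-1})=qc$, which match. Commutativity of the $A_2^{(1)}$- and $A_1^{(1)}$-generators on parameters is automatic since $s_i,\pi$ fix $c$ while $w_0,w_1,r$ fix every $a_j$. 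The invariance of $q=a_0a_1a_2$ under all generators is built into these same formulas, and the invariance of $c$ under $\widetilde{W}(A_2^{(1)})$ is part of the definition of $s_i,\pi$.

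At the variable level the work is more substantial, but the $\mathbb{Z}/3\mathbb{Z}$-symmetry provided by $\pi$ reduces each family of relations to a single representative: (i) the involutions ${s_i}^2=1$ and $r^2=1$ on the two-term formulas; (ii) the braid $(s_0s_1)^3=1$ on $\tau_0,\bar\tau_0$; and (iii) one commutation between each pair drawn from $\{s_0,\pi\}$ and $\{w_0,w_1,r\}$. The involutions in (i) factor neatly. Those in (ii) and (iii) require iterating the multi-term birational formulas for $s_i,w_0,w_1$ and reducing rational expressions in the six $\tau$-variables using the auxiliary identities implied by the definitions of $u_i,v_i$ and by $q=a_0a_1a_2$.

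The main obstacle is the rational-function bookkeeping in (ii) and (iii), where nested substitutions of the birational $s_i$- and $w_j$-actions produce long expressions that must cancel identically. Since the proposition originates in \cite{TsudaT2006:MR2207047}, one can appeal to the verification carried out there, using the notational dictionary given in the preceding Remark to translate conventions.
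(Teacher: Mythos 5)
The paper does not prove this proposition at all: it is imported verbatim from Tsuda \cite{TsudaT2006:MR2207047}, with the preceding Remark supplying only the dictionary between the two notations (and the left-versus-right action convention). Your closing sentence --- appeal to the verification in the reference via that dictionary --- is therefore exactly what the paper itself does, and already suffices. The direct verification you sketch on top of that is sound and is the standard way the result is established in the source: the parameter-level relations are indeed monomial identities (your computation $rw_0(c)=qc=w_1r(c)$ is correct), and the variable-level relations reduce to rational-function identities in the six $\tau$ functions. Two bookkeeping caveats if you actually carry the computation out. First, reducing each family of relations to a single representative by conjugating with $\pi$ is only legitimate after $\pi s_i=s_{i+1}\pi$ and $rw_0=w_1r$ have themselves been verified on the variables, so those intertwining relations must come first in the order of checks. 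Second, since $s_i$ moves only the pair $(\tau_i,\bar\tau_i)$ with the matching index and no generator of the group exchanges just the indices $0$ and $1$, the braid relation $(s_0s_1)^3=1$ must be checked on all four of $\tau_0,\bar\tau_0,\tau_1,\bar\tau_1$ (its action on $\tau_2,\bar\tau_2$ being trivially the identity), not only on $\tau_0,\bar\tau_0$. With those adjustments your plan is complete and consistent with what the cited source carries out.
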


To iterate each variable $\tau_i$, $\bar{\tau}_i$, we need the following translations $T_i$, $i=1,2,3,4$, defined by
\begin{equation}\label{eqn:def_translations}
 T_1=\pi s_2s_1,\quad
 T_2=\pi s_0s_2,\quad
 T_3=\pi s_1s_0,\quad
 T_4=rw_0.
\end{equation}
The actions of these on the parameters are given by
\begin{subequations}
\begin{align}
 &T_1:(a_0,a_1,a_2,c)\to(qa_0,q^{-1}a_1,a_2,c),\\
 &T_2:(a_0,a_1,a_2,c)\to(a_0,qa_1,q^{-1}a_2,c),\\
 &T_3:(a_0,a_1,a_2,c)\to(q^{-1}a_0,a_1,qa_2,c),\\
 &T_4:(a_0,a_1,a_2,c)\to(a_0,a_1,a_2,qc).
\end{align}
\end{subequations}
Note that $T_i$, $i =1,2,3,4$, commute with each other and $T_1T_2T_3=1$.
We define $\tau$ functions by
\begin{equation}
 \tau^{n,m}_{N}={T_1}^n{T_2}^m{T_4}^N(\tau_1),
\end{equation}
where $n,m,N\in\mathbb{Z}$ 
and the $\tau$-lattice is as shown in Figure \ref{fig:tau_lattice}.
We note that 
\begin{equation}
 \tau_0=\tau^{-1,0}_{0},\quad
 \tau_1=\tau^{0,0}_{0},\quad
 \tau_2=\tau^{0,1}_{0},\quad
 \bar{\tau}_0=\tau^{-1,0}_{1},\quad
 \bar{\tau}_1=\tau^{0,0}_{1},\quad
 \bar{\tau}_2=\tau^{0,1}_{1}.
\end{equation}
\begin{remark}
By definition, action of $\widetilde{W}((A_2+A_1)^{(1)})$
gives the relations of points on $\tau$-lattice (bilinear equations)
and any point of $\tau$-lattice (or $\tau$ function) is determined by six initial points:
$\tau_i$, $\bar{\tau}_i$, $i=0,1,2$.
\end{remark}

\begin{figure}[t]
\includegraphics[width=1\textwidth]{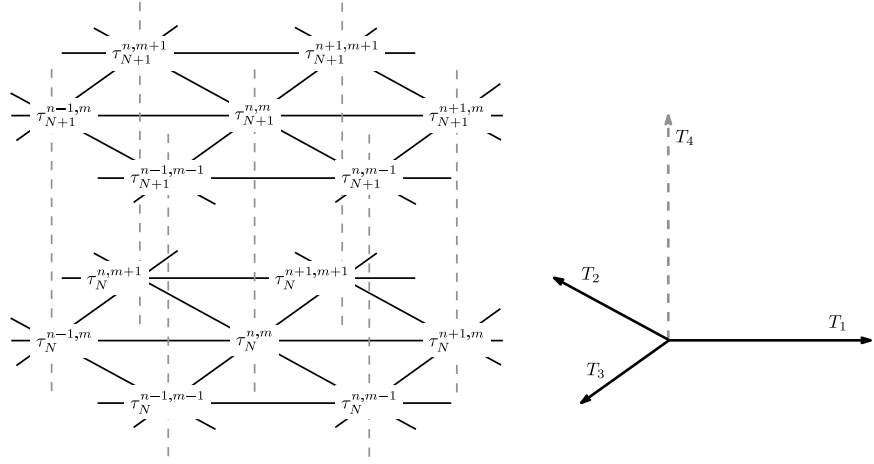}
\caption{Configuration of $\tau$ functions on the $\tau$-lattice. $\tau$ functions are defined on the intersections of four lines.}
\label{fig:tau_lattice}
\end{figure}

\subsection{The discrete Painlev\'e lattice}
In this section, we construct a $3$-dimensional lattice that relates ratios of $\tau$ functions. 
The ratios, defined in \eqref{eqn:f0f1f2}, turn out to satisfy a rich set of relations, 
which give rise not only to $q$-P$_{\rm III}$ \eqref{eqn:intro_qP3}, 
but also to $q$-P$_{\rm IV}$ \eqref{eqn:qp4} and $q$-P$_{\rm II}$ \eqref{eqn:qp2}
\cite{KNY2001:MR1876614,KNT2011:MR2773334}.  
We derive these $q$-Painlev\'e equations as relations on the $3$-dimensional lattice. 

The key starting point is the definition of the following ratios
\begin{equation}\label{eqn:f0f1f2}
 f_0=q^{1/3}c^{2/3}\cfrac{\bar{\tau}_1\tau_2}{\tau_1\bar{\tau}_2},\quad
 f_1=q^{1/3}c^{2/3}\cfrac{\bar{\tau}_2\tau_0}{\tau_2\bar{\tau}_0},\quad
 f_2=q^{1/3}c^{2/3}\cfrac{\bar{\tau}_0\tau_1}{\tau_0\bar{\tau}_1},
\end{equation}
where
\begin{equation}\label{eqn:relation_f012}
 f_0f_1f_2=qc^2.
\end{equation}
The action of $\widetilde{W}((A_2+A_1)^{(1)})$ on the variables $f_i$ is given by
\begin{align*}
 &s_i(f_{i-1})=f_{i-1}\cfrac{1+a_if_i}{a_i+f_i},\quad
 s_i(f_i)=f_i,\quad
 s_i(f_{i+1})=f_{i+1}\cfrac{a_i+f_i}{1+a_if_i},\quad
 \pi(f_i) = f_{i+1},\\
 &w_0(f_i)=\cfrac{a_ia_{i+1}(a_{i-1}a_i+a_{i-1}f_i+f_{i-1}f_i)}
  {f_{i-1}(a_ia_{i+1}+a_if_{i+1}+f_if_{i+1})},\\
 &w_1(f_i)=\cfrac{1+a_if_i+a_ia_{i+1}f_if_{i+1}}
  {a_ia_{i+1}f_{i+1}(1+a_{i-1}f_{i-1}+a_{i-1}a_if_{i-1}f_i)},\quad
  r(f_i)={f_i}^{-1},
\end{align*}
where $i\in\mathbb{Z}/3\mathbb{Z}$.
Define $f$-functions by
\begin{equation}\label{eqn:def_f0f1f2}
 f_{0,N}^{n,m}= {T_1}^n{T_2}^m{T_4}^N(f_0),\quad
 f_{1,N}^{n,m}= {T_1}^n{T_2}^m{T_4}^N(f_1),\quad
 f_{2,N}^{n,m}= {T_1}^n{T_2}^m{T_4}^N(f_2),
\end{equation}
where $n,m,N\in\mathbb{Z}$. 
These form the edges of a lattice, which we refer to as the $f$-lattice, shown in Figure \ref{fig:f_lattice}. 
This lattice is three-dimensional, with coordinate axes given by $n$, $m$, and $N$. 

\begin{figure}[t]
\includegraphics[width=1.0\textwidth]{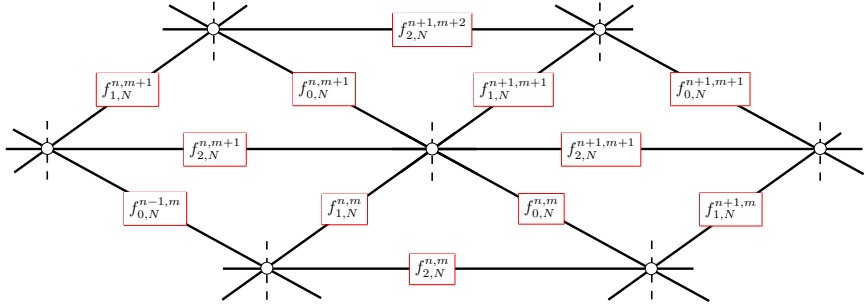}
\caption{Configuration of $f$-functions on the $f$-lattice. 
$f$-functions are defined on the edges of the triangle lattices.}
\label{fig:f_lattice}
\end{figure}

The relations in the $T_1$-direction on the lattice:
\begin{equation}
 T_1(f_1)=\cfrac{qc^2}{f_1f_0}\,\cfrac{1+a_0f_0}{a_0+f_0},\quad
 T_1(f_0)=\cfrac{qc^2}{f_0T_1(f_1)}\,\cfrac{1+a_0a_2T_1(f_1)}{a_0a_2+T_1(f_1)}
\end{equation}
lead to a system of first-order ordinary difference equations, 
which is equivalent to $q$-P$_{\rm III}$ \eqref{eqn:intro_qP3}:
\begin{equation}\label{eqn:qP3_lattice}
 f_{1,N}^{n+1,m} 
 =\cfrac{q^{2N+1}c^2}{f_{1,N}^{n,m}f_{0,N}^{n,m}}\,
  \cfrac{1+q^na_0f_{0,N}^{n,m}}{q^na_0+f_{0,N}^{n,m}},\quad
 f_{0,N}^{n+1,m}
 =\cfrac{q^{2N+1}c^2}{f_{0,N}^{n,m}f_{1,N}^{n+1,m}}\,
  \cfrac{1+q^{n-m}a_0a_2f_{1,N}^{n+1,m}}{q^{n-m}a_0a_2+f_{1,N}^{n+1,m}}.
\end{equation}
In a similar manner, in each of the $T_2$- and $T_3$-directions, we also obtain $q$-P$_{\rm III}$ \eqref{eqn:intro_qP3}.

In contrast, the action of $T_4$ on the variables $f_i$ can be expressed as
\begin{subequations}\label{eqn:action_T4}
\begin{align}
 &T_4(f_0)=a_0a_1 f_1\cfrac{1+a_2 f_2(a_0 f_0+1)}{1+a_0 f_0(a_1 f_1+1)},\\
 &T_4(f_1)=a_1a_2 f_2\cfrac{1+a_0 f_0(a_1 f_1+1)}{1+a_1 f_1(a_2 f_2+1)},\\
 &T_4(f_2)=a_2a_0 f_0\cfrac{1+a_1 f_1(a_2 f_2+1)}{1+a_2 f_2(a_0 f_0+1)},
\end{align}
\end{subequations}
or applying ${T_1}^n{T_2}^m{T_4}^N$ on System \eqref{eqn:action_T4} and using \eqref{eqn:def_f0f1f2},
we obtain
\begin{subequations}\label{eqn:qp4}
\begin{align}
 &f_{0,N+1}^{n,m}=q^m a_0a_1 f_{1,N}^{n,m}
 \cfrac{1+q^{-m}a_2 f_{2,N}^{n,m}(q^n a_0 f_{0,N}^{n,m}+1)}
  {1+q^na_0 f_{0,N}^{n,m}(q^{-n+m}a_1 f_{1,N}^{n,m}+1)},\\
 &f_{1,N+1}^{n,m}=q^{-n}a_1a_2 f_{2,N}^{n,m}
 \cfrac{1+q^na_0 f_{0,N}^{n,m}(q^{-n+m}a_1 f_{1,N}^{n,m}+1)}
  {1+q^{-n+m}a_1 f_{1,N}^{n,m}(q^{-m}a_2 f_{2,N}^{n,m}+1)},\\
 &f_{2,N+1}^{n,m}=q^{n-m}a_2a_0 f_{0,N}^{n,m}
 \cfrac{1+q^{-n+m}a_1 f_{1,N}^{n,m}(q^{-m}a_2 f_{2,N}^{n,m}+1)}
  {1+q^{-m}a_2 f_{2,N}^{n,m}(q^na_0 f_{0,N}^{n,m}+1)},
\end{align}
\end{subequations}
which is known as a $q$-discrete analogue of Painlev\'e IV equation (denoted by $q$-P$_{\rm IV}$)\cite{KNY2001:MR1876614}.

It is known that discrete dynamical systems of Painlev\'e type can be also obtained 
from elements of infinite order of (extended) affine Weyl groups which are not necessarily translations
\cite{KNT2011:MR2773334}.
We introduce the half-translation 
\begin{equation}
 R_1=\pi^2s_1
\end{equation} 
satisfying 
\begin{equation}
 {R_1}^2=T_1.
\end{equation}
Let
\begin{equation}
 f_N^M={R_1}^M{T_4}^N(f_0),
\end{equation}
where
\begin{equation}
 f_N^{2M-1}=f_{1,N}^{M,0},\quad
 f_N^{2M}=f_{0,N}^{M,0}.
\end{equation}
By considering the restricted $f$-lattice where $f_N^M$ are defined (see Figure \ref{fig:restricted_f_lattice}), 
System \eqref{eqn:qP3_lattice} becomes the following system:
\begin{equation}
 f_{1,N}^{n+1,0} = \cfrac{q^{2N+1}c^2}{f_{1,N}^{n,0}f_{0,N}^{n,0}}
  ~\cfrac{1+a_0q^nf_{0,N}^{n,0}}{a_0q^n+f_{0,N}^{n,0}},\quad
 f_{0,N}^{n+1,0} = \cfrac{q^{2N+1}c^2}{f_{0,N}^{n,0}f_{1,N}^{n+1,0}}
  ~\cfrac{1+a_2a_0q^n f_{1,N}^{n+1,0}}{a_2a_0q^n+f_{1,N}^{n+1,0}},
\end{equation}
which is equivalent to the following single equation:
\begin{equation}\label{eqn:R1_f0f1}
 f_N^{M+1}
 =\cfrac{q^{2N+1}c^2}{f_N^{M-1}f_N^M}\,\cfrac{1+{R_1}^M(a_0)f_N^M}{{R_1}^M(a_0)+f_N^M}.
\end{equation}
In addition, by assuming $a_2=q^{1/2}$, 
transformation $R_1$ becomes the translational motion in the parameter subspace:
\begin{equation}
 R_1:(a_0,a_1)\to (q^{1/2}a_0,q^{-1/2}a_1),
\end{equation}
then Equation \eqref{eqn:R1_f0f1} can be regarded as the single second-order ordinary difference equation:
\begin{equation}\label{eqn:qp2}
 f_N^{M+1}
 =\cfrac{q^{2N+1}c^2}{f_N^{M-1}f_N^M}\,\cfrac{1+a_0q^{M/2}f_N^M}{a_0q^{M/2}+f_N^M},
\end{equation}
which is known as a $q$-discrete analogue of Painlev\'e II equation (denoted by $q$-P$_{\rm II}$)\cite{RG1996:MR1399286}.
We note that the reduction from System \eqref{eqn:qP3_lattice} to Equation \eqref{eqn:qp2} is referred to as
a symmetrization or a projective reduction\cite{KNT2011:MR2773334,KN2015:MR3340349}.

\begin{figure}[t]
\includegraphics[width=1.0\textwidth]{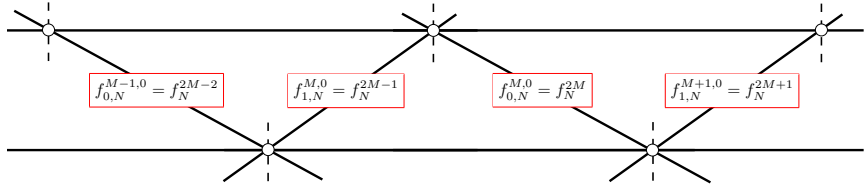}
\caption{Configuration of $f$-functions on the restricted $f$-lattice.}
\label{fig:restricted_f_lattice}
\end{figure}

\begin{remark}
By definition, action of $\widetilde{W}((A_2+A_1)^{(1)})$
gives the relations on each edge of the $f$-lattice. 
Since the variables $f_i$ satisfy Equation \eqref{eqn:relation_f012}, only two are independent.  
Therefore, any function associated with an edge on this lattice is determined by two initial edges. 
This is consistent with the observation that discrete Painlev\'e equations (which are second order ordinary)
are embedded in this lattice. 
\end{remark}

\section{Quad-equations of ABS type from the $\omega$-lattice for the $(A_2+A_1)^{(1)}$ case}
\label{section:omega}
In the previous section, we showed how to construct a $\tau$-lattice by starting with six initial variables
and how to obtain discrete Painlev\'e equations as relations on the $f$-lattice.
In this section, we show how to construct a lattice by starting with three variables 
and applying the action of the extended affine Weyl group to find their iterates. 
In the resulting $\omega$-lattice, we discover higher dimensional integrable partial difference equations, 
commonly known as quad-equations (because they relate vertices of quadrilaterals), 
that were classified by Adler {\it et al.}\cite{ABS2003:MR1962121,ABS2009:MR2503862,BollR2011:MR2846098,BollR2012:MR3010833,BollR:thesis}.  
\subsection{The $\omega$-lattice}
Let
\begin{equation}
 \kappa_0=\lambda^{\log{a_0}/\log{q}} k_0,\quad
 \kappa_1=\lambda^{\log{a_1}/\log{q}} k_1,\quad
 \kappa_2=\lambda^{\log{a_2}/\log{q}} k_2,
\end{equation}
where 
\begin{equation}
 \lambda=q^{1/2}c.
\end{equation}
Here, $k_i$ are arbitrary constants satisfying $k_0k_1k_2=1$.
The action of $\widetilde{W}((A_2+A_1)^{(1)})$ on the parameters $\kappa_i$ is given by
\begin{align*}
 &s_0:(\kappa_0,\kappa_1,\kappa_2)\to({\kappa_0}^{-1},\kappa_1\kappa_0,\kappa_2\kappa_0),\\
 &s_1:(\kappa_0,\kappa_1,\kappa_2)\to(\kappa_0\kappa_1,{\kappa_1}^{-1},\kappa_2\kappa_1),\\
 &s_2:(\kappa_0,\kappa_1,\kappa_2)\to(\kappa_0\kappa_2,\kappa_1\kappa_2,{\kappa_2}^{-1}),\\
 &\pi:(\kappa_0,\kappa_1,\kappa_2) \to(\kappa_1,\kappa_2,\kappa_0),\\
 &w_0:(\kappa_0,\kappa_1,\kappa_2)\to(a_0{\kappa_0}^{-1},a_1{\kappa_1}^{-1},a_2{\kappa_2}^{-1}),\\
 &w_1:(\kappa_0,\kappa_1,\kappa_2)\to({a_0}^{-1}{\kappa_0}^{-1},{a_1}^{-1}{\kappa_1}^{-1},{a_2}^{-1}{\kappa_2}^{-1}),\\
 &r:(\kappa_0,\kappa_1,\kappa_2)\to({\kappa_0}^{-1},{\kappa_1}^{-1},{\kappa_2}^{-1}).
\end{align*}
We note that $\kappa_i$ satisfy
\begin{equation}
 \kappa_0\kappa_1\kappa_2=\lambda.
\end{equation}
From definition \eqref{eqn:def_translations}, it follows that the actions of translations $T_i$, $i=1,2,3,4$, on parameters $\kappa_i$, $i=0,1,2$, are given by the following:
\begin{subequations}
\begin{align}
 &T_1:(\kappa_0,\kappa_1,\kappa_2)\to(\lambda\kappa_0,\lambda^{-1}\kappa_1,\kappa_2),\\
 &T_2:(\kappa_0,\kappa_1,\kappa_2)\to(\kappa_0,\lambda\kappa_1,\lambda^{-1}\kappa_2),\\
 &T_3:(\kappa_0,\kappa_1,\kappa_2)\to(\lambda^{-1}\kappa_0,\kappa_1,\lambda\kappa_2),\\
 &T_4:(\kappa_0,\kappa_1,\kappa_2)\to(a_0\kappa_0,a_1\kappa_1,a_2\kappa_2).
\end{align}
\end{subequations}

Now we are in a position to define the three initial variables 
\begin{equation}
 \omega_0=\cfrac{{\kappa_2}^{1/3}}{{\kappa_1}^{1/3}}\,\cfrac{\bar{\tau}_0}{\tau_0},\quad
 \omega_1=\cfrac{{\kappa_0}^{1/3}}{{\kappa_2}^{1/3}}\,\cfrac{\bar{\tau}_1}{\tau_1},\quad
 \omega_2=\cfrac{{\kappa_1}^{1/3}}{{\kappa_0}^{1/3}}\,\cfrac{\bar{\tau}_2}{\tau_2},
\end{equation}
whose iterates (constructed below) will provide us with the $\omega$-lattice.

The action of $\widetilde{W}((A_2+A_1)^{(1)})$  on these variables $\omega_i$ is given by the following lemma, which follows from the above definitions.
\begin{lemma}
The action of $\widetilde{W}((A_2+A_1)^{(1)})$ on variables $\omega_i$ is given by
\begin{align*}
 &s_i(\omega_i)
 =\omega_i\cfrac{a_i\lambda\omega_{i+1}+\kappa_i\omega_{i+2}}{\lambda\omega_{i+1}+a_i\kappa_i\omega_{i+2}},\quad
 s_i(\omega_{i+1})={\kappa_i}^{-1}\omega_{i+1},\quad
 s_i(\omega_{i+2})=\kappa_i\omega_{i+2},\\
 &\pi(\omega_i)=\omega_{i+1},\quad
 w_0(\omega_i)
 =\cfrac{a_{i+1}\kappa_i\kappa_{i+1}\omega_i+a_{i+1}a_{i+2}\omega_{i+1}+\kappa_i\lambda\omega_{i+2}}
 {a_{i+1}\kappa_i\kappa_{i+2}\omega_{i+1}\omega_{i+2}},\\
 &w_1(\omega_i)
 =\cfrac{a_{i+1}\kappa_i\kappa_{i+1}\omega_i}
  {a_{i+1}\kappa_i\kappa_{i+2}\omega_{i+1}\omega_{i+2}+a_{i+1}a_{i+2}\kappa_i\lambda\omega_i\omega_{i+2}+\omega_i\omega_{i+1}},\quad
 r(\omega_i)={\omega_i}^{-1},
\end{align*}
where $i\in\mathbb{Z}/3\mathbb{Z}$.
\end{lemma}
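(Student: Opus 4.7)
The plan is to verify each of the seven generator actions by direct substitution into the definition of $\omega_i$, using the previously tabulated actions of $\widetilde{W}((A_2+A_1)^{(1)})$ on $\tau_j,\bar{\tau}_j$ and on $\kappa_j$, followed by simplification that reshapes ratios of $\tau$-products back into ratios of $\omega_j$. A useful preliminary observation is that
\begin{equation*}
 \frac{\omega_{i+1}}{\omega_{i+2}} = \frac{\kappa_i\, f_i}{\lambda},\qquad
 u_i = \lambda^{-2/3} a_i,\qquad v_i = \lambda^{2/3} a_i,\qquad \kappa_0\kappa_1\kappa_2=\lambda,
\end{equation*}
so that any expression we produce in $\tau_j/\bar{\tau}_j$ can, after dividing numerator and denominator by a common $\tau$-monomial, be rewritten first in terms of $f_j$ and then in terms of $\omega_{j+1}/\omega_{j+2}$ and $\kappa_j$.

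The easy cases come first. For $\pi$, both the $\kappa$-prefactor and the $\tau$-ratio shift by one index, giving $\pi(\omega_i)=\omega_{i+1}$ immediately. For $r$, the swap $\tau_i\leftrightarrow\bar{\tau}_i$ together with $r(\kappa_j)=\kappa_j^{-1}$ gives $r(\omega_i)=\omega_i^{-1}$. For $s_i$ acting on $\omega_{i+1}$ and $\omega_{i+2}$ (with $j\neq i$), $s_i$ fixes $\tau_j,\bar{\tau}_j$ while rescaling the $\kappa$-prefactor by exactly $\kappa_i^{\mp 1}$, producing the stated $s_i(\omega_{i+1})=\kappa_i^{-1}\omega_{i+1}$ and $s_i(\omega_{i+2})=\kappa_i\omega_{i+2}$.

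The remaining case $s_i(\omega_i)$ is the computationally richest among the $s_i$. Taking $i=0$ for definiteness, substitute the formulas for $s_0(\tau_0)$ and $s_0(\bar{\tau}_0)$, divide numerator and denominator of the resulting ratio by $\tau_1\bar{\tau}_2$, and invoke $v_0\lambda^{-2/3}=a_0$ and $u_0=\lambda^{-2/3}a_0$; the $\kappa$-prefactor picks up a clean factor $1$ (since $(\kappa_2\kappa_0)^{1/3}/(\kappa_1\kappa_0)^{1/3}=\kappa_2^{1/3}/\kappa_1^{1/3}$), and what remains is $\omega_0(1+a_0f_0)/(a_0+f_0)$. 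Using $f_0=\lambda\omega_1/(\kappa_0\omega_2)$ this is exactly $\omega_0(a_0\lambda\omega_1+\kappa_0\omega_2)/(\lambda\omega_1+a_0\kappa_0\omega_2)$, the claimed formula. The same recipe works for $i=1,2$ with cyclically shifted indices.

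The main obstacle is $w_0$ and $w_1$, whose action on $\bar{\tau}_i$ (resp.\ $\tau_i$) is a sum of three trilinear $\tau$-terms and whose action on $\kappa_i$ involves the $a_j$ rather than the $\kappa_j$. My plan is: compute $w_0(\bar{\tau}_i/\tau_i)$ using the given trilinear formula for $w_0(\bar{\tau}_i)$ (and $w_0(\tau_i)=\tau_i$), factor out $\bar{\tau}_{i+1}\bar{\tau}_{i+2}$ from the denominator, and track the $\kappa$-prefactor through $w_0(\kappa_j)=a_j\kappa_j^{-1}$. After dividing the three terms in the numerator by $\bar{\tau}_{i+1}\bar{\tau}_{i+2}\tau_i$, each becomes a monomial in $f_{i-1}$, $f_i$, $f_{i+1}$; rewriting every $f_j$ as $\lambda\omega_{j+1}/(\kappa_j\omega_{j+2})$ and clearing a common $\omega_{i+1}\omega_{i+2}$ denominator yields the stated three-term numerator $a_{i+1}\kappa_i\kappa_{i+1}\omega_i+a_{i+1}a_{i+2}\omega_{i+1}+\kappa_i\lambda\omega_{i+2}$ over $a_{i+1}\kappa_i\kappa_{i+2}\omega_{i+1}\omega_{i+2}$. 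The identity $\kappa_0\kappa_1\kappa_2=\lambda$ is essential to absorb the leftover $\kappa$-powers. The case $w_1$ is dual, using $w_1(\tau_i)$ and $w_1(\kappa_j)=a_j^{-1}\kappa_j^{-1}$; alternatively, since $w_1=rw_0r$, one can derive $w_1$ from the already-verified $w_0$ and $r$ formulas, which I would use as a cross-check rather than a primary derivation.
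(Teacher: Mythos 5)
Your proposal is correct and matches the paper's (unwritten) argument: the lemma is presented as following directly from the definitions, which is precisely the generator-by-generator substitution and simplification you describe, and your route does reproduce each formula, including $s_i(\omega_i)=\omega_i(1+a_if_i)/(a_i+f_i)$ and the three-term expressions for $w_0$ and $w_1$ (with $w_1=rw_0r$ as a valid cross-check). One small imprecision: after dividing $w_0(\bar{\tau}_i)$ by $\tau_i\bar{\tau}_{i+1}\bar{\tau}_{i+2}$, the three resulting terms are \emph{not} monomials in the $f_j$ (they are not invariant under the gauge $\tau_j\mapsto c_j\tau_j$, $\bar{\tau}_j\mapsto c_j\bar{\tau}_j$, whereas the $f_j$ are), but rather monomials in the ratios $\bar{\tau}_j/\tau_j\propto\omega_j$ with $\kappa$-coefficients; since your final step substitutes $\omega_j$ directly and uses $\kappa_0\kappa_1\kappa_2=\lambda$ to absorb the leftover powers, the computation is unaffected.
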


We define $\omega$-functions by
\begin{equation}\label{eqn:A5_omegafun}
 \omega_{l_1,l_2,l_3,l_4}={T_1}^{l_1}{T_2}^{l_2}{T_3}^{l_3}{T_4}^{l_4}(\omega_0),
\end{equation}
where $l_1,l_2,l_3,l_4\in\mathbb{Z}$ 
and the $\omega$-lattice is as shown in Figure \ref{fig:omega_lattice}.
We note that
\begin{equation}
 \omega_0=\omega_{0,0,0,0},\quad
 \omega_1={\kappa_2}^{-1}\omega_{1,0,0,0},\quad
 \omega_2=\kappa_1\omega_{1,1,0,0}.
\end{equation}

\begin{lemma}
Since for all $w\in \widetilde{W}((A_2+A_1)^{(1)})$,
\begin{equation}
 w(\omega_i)\in \mathcal{L}\quad (i=0,1,2),
\end{equation}
where $\mathcal{L}=\mathcal{K}(\omega_0,\omega_1,\omega_2)$ is the field of rational functions 
in $\omega_i$\,, $i=0,1,2$, with coefficient field $\mathcal{K}=\mathbb{C}(a_i,\kappa_i,\lambda)$,
every point on the $\omega$-lattice is determined by three initial points.
This implies that quad-equations appear as relations on the $\omega$-lattice.
Moreover, relations on the $f$-lattice can be expressed by those on the $\omega$-lattice
because of the following correspondence:
\begin{subequations}\label{eqn:relation_f_omega}
\begin{align}
 &f_0=\kappa_1\kappa_2\cfrac{\omega_1}{\omega_2},
 &&\left(\text{or }\ f_{0,l_4}^{l_1-l_3,l_2-l_3}=\cfrac{\omega_{l_1+1,l_2,l_3,l_4}}{\omega_{l_1+1,l_2+1,l_3,l_4}}\right),\\
 &f_1=\kappa_2\kappa_0\cfrac{\omega_2}{\omega_0},
 &&\left(\text{or }\ f_{1,l_4}^{l_1-l_3,l_2-l_3}=q^{l_4}\lambda \cfrac{\omega_{l_1+1,l_2+1,l_3,l_4}}{\omega_{l_1,l_2,l_3,l_4}}\right),\\
 &f_2=\kappa_0\kappa_1\cfrac{\omega_0}{\omega_1},
 &&\left(\text{or }\ f_{2,l_4}^{l_1-l_3,l_2-l_3}=q^{l_4}\lambda \cfrac{\omega_{l_1,l_2,l_3,l_4}}{\omega_{l_1+1,l_2,l_3,l_4}}\right).
\end{align}
\end{subequations}
\end{lemma}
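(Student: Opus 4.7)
The lemma bundles three assertions: (i) $\widetilde{W}$-invariance of $\mathcal{L}$, (ii) three-point determination of $\omega$-lattice values and the resulting existence of quadrilateral relations, and (iii) the dictionary \eqref{eqn:relation_f_omega} between $f$- and $\omega$-lattices. I would treat them in that order.

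For (i) I would argue by induction on the word length of $w$ in the seven generators $s_0,s_1,s_2,\pi,w_0,w_1,r$. The base case is exactly the preceding lemma, which exhibits each $g(\omega_i)$ as a rational function of $\omega_0,\omega_1,\omega_2$ with coefficients in $\mathcal{K}$. For the inductive step, if $w(\omega_i)=R_i(\omega_0,\omega_1,\omega_2)$ with $R_i$ rational over $\mathcal{K}$, then for any generator $g$,
\[
 (gw)(\omega_i)=g\!\left(R_i(\omega_0,\omega_1,\omega_2)\right)=R_i^{\,g}\!\left(g(\omega_0),g(\omega_1),g(\omega_2)\right),
\]
where $R_i^{\,g}$ denotes $R_i$ with its $\mathcal{K}$-coefficients $g$-transformed. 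Each $g(\omega_j)$ lies in $\mathcal{L}$ by the base case, and $g$ preserves $\mathcal{K}$ by direct inspection of the displayed parameter actions; for instance, using $c=q^{-1/2}\lambda$, one finds $w_0(\lambda)=q/\lambda$, $w_1(\lambda)=1/(q\lambda)$, and $r(\lambda)=1/\lambda$, all in $\mathcal{K}$. Hence $(gw)(\omega_i)\in\mathcal{L}$.

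Claim (ii) is an immediate corollary: by \eqref{eqn:A5_omegafun} each $\omega_{l_1,l_2,l_3,l_4}$ is the image of $\omega_0$ under an element of $\widetilde{W}((A_2+A_1)^{(1)})$, hence by (i) an element of $\mathcal{L}=\mathcal{K}(\omega_0,\omega_1,\omega_2)$. Any four values sitting on the corners of a two-dimensional face of the lattice are then four elements of a field of transcendence degree three over $\mathcal{K}$; eliminating the three initial variables yields an algebraic relation among the four corner values, i.e.\ a quad-equation in the loose sense. The ABS character of these relations is the separate content of Theorem \ref{maintheorem}.

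For (iii), at the base point a direct substitution using $\kappa_0\kappa_1\kappa_2=\lambda=q^{1/2}c$ gives
\[
 \kappa_1\kappa_2\,\frac{\omega_1}{\omega_2}
 =(\kappa_0\kappa_1\kappa_2)^{2/3}\,\frac{\bar{\tau}_1\tau_2}{\tau_1\bar{\tau}_2}
 =q^{1/3}c^{2/3}\,\frac{\bar{\tau}_1\tau_2}{\tau_1\bar{\tau}_2}
 =f_0,
\]
and the analogous identities for $f_1,f_2$; combining with $\omega_0=\omega_{0,0,0,0}$, $\omega_1=\kappa_2^{-1}\omega_{1,0,0,0}$, $\omega_2=\kappa_1\omega_{1,1,0,0}$ verifies the indexed formulas at the origin. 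Transport to a generic lattice point is obtained by applying $T_1^{l_1}T_2^{l_2}T_3^{l_3}T_4^{l_4}$; using $T_1T_2T_3=1$ this reduces on $f_i$-side to $T_1^{l_1-l_3}T_2^{l_2-l_3}T_4^{l_4}$, and since $T_1,T_2,T_3$ fix $\lambda$ while $T_4(\lambda)=q\lambda$, the product $\kappa_0\kappa_1\kappa_2$ is transported to $q^{l_4}\lambda$, which accounts precisely for the $q^{l_4}\lambda$ prefactors in the $f_1,f_2$ identities (and for the absence of any prefactor in the $f_0$ identity, since the $\kappa_1\kappa_2$ there cancels against the rescalings of $\omega_1,\omega_2$).

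The main obstacle is bookkeeping rather than conceptual: one must verify $\mathcal{K}$-closure under every generator---in particular, that the fractional powers of $\kappa_i$ and of $q$ that appear implicitly in the definitions of $\omega_i$ and of $\lambda$ always disappear after simplification---and one must carefully reconcile the three indexing conventions ($\tau$-lattice, $f$-lattice, $\omega$-lattice) under the relation $T_1T_2T_3=1$. Once these are in hand, the lemma is a clean consequence of the formulas already established.
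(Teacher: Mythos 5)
Your proposal is correct and fills in exactly the argument the paper leaves implicit: the paper gives no separate proof of this lemma, treating it as an immediate consequence of the preceding lemma listing the generator actions on $\omega_0,\omega_1,\omega_2$ (closure of $\mathcal{L}$ under composition) together with direct substitution of the definitions of $\omega_i$ and $f_i$, which is precisely your induction on word length plus the base-point computation $\kappa_1\kappa_2\,\omega_1/\omega_2=(\kappa_0\kappa_1\kappa_2)^{2/3}\,\bar{\tau}_1\tau_2/(\tau_1\bar{\tau}_2)=f_0$ and its transport by $T_1^{l_1}T_2^{l_2}T_3^{l_3}T_4^{l_4}$ using $T_1T_2T_3=1$ and $T_4(\lambda)=q\lambda$. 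Your verifications of $\mathcal{K}$-closure (e.g.\ $w_0(\lambda)=q/\lambda$, $r(\lambda)=1/\lambda$) and of the cancellation of the $\kappa$-prefactors in the indexed formulas all check out.
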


\begin{figure}[t]
\hspace*{-5em}\includegraphics[width=1.3\textwidth]{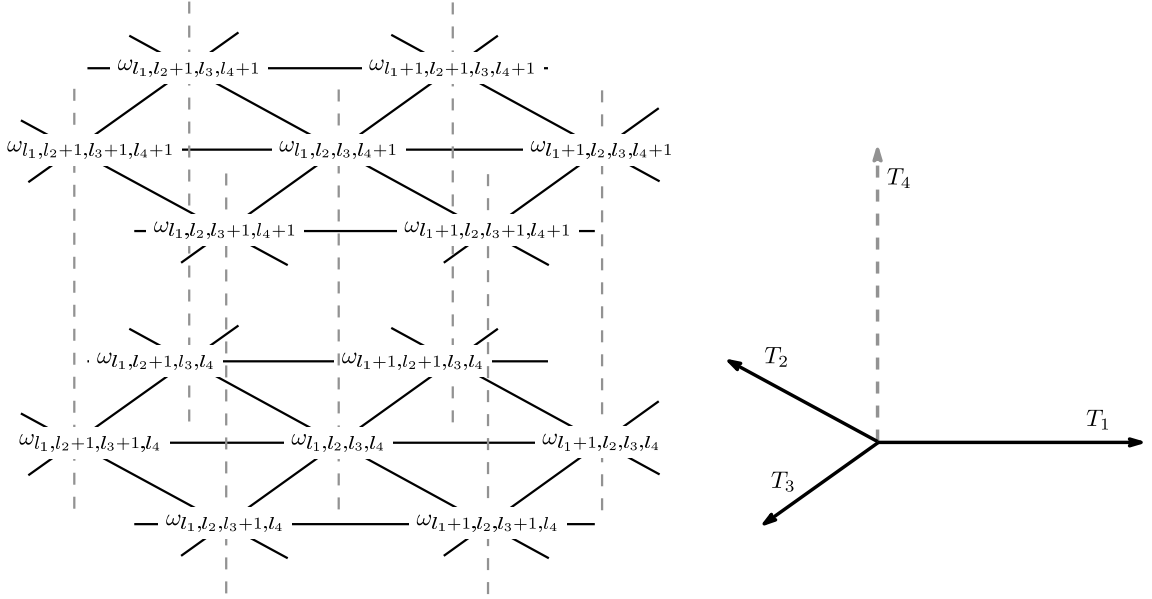}
\caption{Configuration of $\omega$-functions on the $\omega$-lattice. 
Note that each $\omega$-function is defined on the intersection of four lines.}
\label{fig:omega_lattice}
\end{figure}

We have constructed the $\omega$-lattice associated with $\widetilde{W}((A_2+A_1)^{(1)})$.
Henceforth, let us consider the quad-equations appearing on the $\omega$-lattice.
\begin{lemma}\label{lemma:quad-eqn_on_omega_lattice}
The following quad-equations:
\begin{subequations} \label{eqn:111H3}
\begin{align}
 &\cfrac{\omega_{l_1+1,l_2,l_3+1,l_4}}{\omega_{l_1,l_2,l_3,l_4}}
 =\cfrac{q^{l_1-l_3-1}a_0\omega_{l_1+1,l_2,l_3,l_4}-\omega_{l_1,l_2,l_3+1,l_4}}
 {q^{l_1-l_3-1}a_0\omega_{l_1,l_2,l_3+1,l_4}-\omega_{l_1+1,l_2,l_3,l_4}},
 \label{eqn:111H3_1}\\
 &\cfrac{\omega_{l_1+1,l_2+1,l_3,l_4}}{\omega_{l_1,l_2,l_3,l_4}}
 =\cfrac{q^{-l_1+l_2+l_4}\lambda a_1\omega_{l_1,l_2+1,l_3,l_4}-\omega_{l_1+1,l_2,l_3,l_4}}
 {q^{l_4}\lambda\left(q^{-l_1+l_2}a_1 \omega_{l_1+1,l_2,l_3,l_4}-q^{l_4}\lambda \omega_{l_1,l_2+1,l_3,l_4}\right)},
 \label{eqn:111H3_2}\\
 &\cfrac{\omega_{l_1,l_2+1,l_3+1,l_4}}{\omega_{l_1,l_2,l_3,l_4}}
 =\cfrac{q^{-l_2+l_3}a_2\omega_{l_1,l_2,l_3+1,l_4}-q^{l_4}\lambda \omega_{l_1,l_2+1,l_3,l_4}}
 {q^{l_4}\lambda\left(q^{-l_2+l_3+l_4}a_2\lambda \omega_{l_1,l_2+1,l_3,l_4}-\omega_{l_1,l_2,l_3+1,l_4}\right)},
 \label{eqn:111H3_3}
\end{align}
\end{subequations}
\begin{subequations}\label{eqn:111H6}
\begin{align}
 &\cfrac{\omega_{l_1+1,l_2,l_3,l_4+1}}{\omega_{l_1,l_2,l_3,l_4}}
  -\cfrac{\omega_{l_1,l_2,l_3,l_4+1}}{\omega_{l_1+1,l_2,l_3,l_4}}
  =\cfrac{q^{2l_4+1}\lambda^2-1}{q^{-l_1+l_2+l_4}\lambda a_1},
 \label{eqn:111H6_1}\\
 &\cfrac{\omega_{l_1,l_2+1,l_3,l_4+1}}{\omega_{l_1,l_2,l_3,l_4}}
  -\cfrac{1}{q^{2l_4+1}\lambda ^2}\,\cfrac{\omega_{l_1,l_2,l_3,l_4+1}}{\omega_{l_1,l_2+1,l_3,l_4}}
 =\cfrac{q^{2l_4+1}\lambda ^2-1}{q^{2l_4+1}\lambda ^2},
 \label{eqn:111H6_2}\\
 &\cfrac{\omega_{l_1,l_2,l_3+1,l_4+1}}{\omega_{l_1,l_2,l_3,l_4}}
  -\cfrac{\omega_{l_1,l_2,l_3,l_4+1}}{\omega_{l_1,l_2,l_3+1,l_4}}
 =\cfrac{a_2 (q^{2l_4+1}\lambda ^2-1)}{q^{l_2-l_3+l_4}\lambda},
 \label{eqn:111H6_3}
\end{align}
\end{subequations}
and an additional partial difference equation:
\begin{align}\label{eqn:T4_action}
 \omega_{l_1,l_2,l_3,l_4+1}
 =&\cfrac{(q^{2(-l_1+l_2)}{a_1}^2-1)\omega_{l_1+1,l_2,l_3,l_4}\omega_{l_1,l_2+1,l_3,l_4}}
  {q^{-l_1+l_2}a_1(q^{-l_1+l_2}a_1\omega_{l_1+1,l_2,l_3,l_4}-q^{l_4}\lambda \omega_{l_1,l_2+1,l_3,l_4})}\notag\\
 &+\cfrac{q^{-l_2+l_3}a_2
   (q^{-l_1+l_2+l_4}\lambda a_1\omega_{l_1,l_2+1,l_3,l_4}-\omega_{l_1+1,l_2,l_3,l_4})\omega_{l_1,l_2,l_3,l_4}}
 {q^{-l_1+l_2}a_1\omega_{l_1+1,l_2,l_3,l_4}-q^{l_4}\lambda \omega_{l_1,l_2+1,l_3,l_4}},
\end{align}
hold on the $\omega$-lattice.
We note that 
Equations \eqref{eqn:111H3} are $H3_{(\delta,\epsilon)=(0,0)}$-type equations,
Equations \eqref{eqn:111H6} are $D4_{(\delta_1,\delta_2,\delta_3)=(1,0,0)}$-type equations
and Equation \eqref{eqn:T4_action} is a $H3_{(\delta,\epsilon)=(0,1)}$-type equation.
\end{lemma}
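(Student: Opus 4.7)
\medskip\noindent\textbf{Proof proposal.}
The plan is to reduce every identity in the statement to its base case at $l_1=l_2=l_3=l_4=0$ and then transport that base case across the whole $\omega$-lattice by translation covariance. By the definition $\omega_{l_1,l_2,l_3,l_4}=T_1^{l_1}T_2^{l_2}T_3^{l_3}T_4^{l_4}(\omega_0)$, each equation in \eqref{eqn:111H3}--\eqref{eqn:T4_action} is the image of its $l_1=l_2=l_3=l_4=0$ version under a suitable product of the $T_i$. Because the $T_i$ act on the parameters $a_j$, $\kappa_j$, $c$ in the explicit fashion recorded in Section~\ref{section:tau_f}, the $q^{l_j}$ and $\lambda$-factors that appear in the coefficients are produced automatically, provided one checks the bookkeeping once. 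So the entire lemma reduces to verifying seven base identities.

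Next I would compute, using the action formulas of the preceding lemma, the seven shifted $\omega$-functions that occur at the base point, namely $T_1(\omega_0)$, $T_3(\omega_0)$, $T_1T_3(\omega_0)=T_2^{-1}(\omega_0)$, $T_2(\omega_0)$, and $T_4(\omega_0)$, $T_1 T_4(\omega_0)$, $T_2 T_4(\omega_0)$, $T_3 T_4(\omega_0)$, as rational expressions in $\omega_0,\omega_1,\omega_2$ with coefficients in $\mathcal{K}=\mathbb{C}(a_i,\kappa_i,\lambda)$. For the first three equations \eqref{eqn:111H3}, each $T_i=\pi s_{j}s_{k}$ applied to $\omega_0$ is a short composition of the generators whose action on $\omega_i$ is already given; this yields rational expressions whose denominators are exactly the linear combinations $\lambda\omega_{i+1}+a_i\kappa_i\omega_{i+2}$ appearing on the right-hand sides. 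For \eqref{eqn:111H6}, the shift in the $l_4$-direction is governed by $T_4=rw_0$; substituting the explicit $w_0$-formula (and then $r$, which inverts each $\omega_i$) yields a rational expression whose denominator factors cancel against the numerators on both sides of each equation, leaving the claimed affine relations in the $q^{l_4}$-dependent coefficients. Equation \eqref{eqn:T4_action} is just the explicit formula for $T_4(\omega_0)$ itself, rewritten to isolate the dependence on the $T_1$- and $T_2$-shifted variables already computed.

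Each of the seven resulting identities, after clearing denominators, is a polynomial equality in $\omega_0,\omega_1,\omega_2$ over $\mathcal{K}$ that can be checked term-by-term. Finally, I would classify the resulting equations against the ABS templates listed in Section~\ref{subsection:ABS}: matching \eqref{eqn:111H3} with $H3(x_1,x_2,x_3,x_4;\alpha_1,\alpha_2;0;0)$, \eqref{eqn:111H6} with $D4(\,\cdot\,;1,0,0)$, and \eqref{eqn:T4_action} with $H3(\,\cdot\,;\alpha_1,\alpha_2;0;1)$, which requires reading off the parameters $(\alpha_1,\alpha_2)$ from the $q^{l_i}$-weights produced by the $T_i$-action on the $a_j$.

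The main obstacle is algebraic bulk rather than conceptual difficulty: because $w_0$ (and hence $T_4$) acts as a cubic rational transformation on the $\omega_i$ with denominators involving $\omega_{i+1}\omega_{i+2}$, verifying \eqref{eqn:111H6} and \eqref{eqn:T4_action} requires careful cancellation. The subtle step is the parameter bookkeeping in the covariance argument: one must check that the substitutions $\kappa_i\to T_1^{l_1}T_2^{l_2}T_3^{l_3}T_4^{l_4}(\kappa_i)$ produce precisely the $q^{\pm l_j}\lambda$ factors written in the statement, and in particular that the constraint $\kappa_0\kappa_1\kappa_2=\lambda$ is preserved. Once those matching identities on the parameter side are secured, the base-point verifications propagate across the entire $\omega$-lattice and the lemma follows.
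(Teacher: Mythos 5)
Your proposal is correct and follows essentially the same route as the paper: the paper likewise verifies each identity as a base relation among $\omega_0,\omega_1,\omega_2$ and their one-step translates (using the explicit generator actions, including deriving \eqref{eqn:T4_action} by eliminating $\omega_2$ between the $T_2$-relation and the explicit $T_4(\omega_0)$ formula), and then propagates it over the lattice by applying ${T_1}^{l_1}{T_2}^{l_2}{T_3}^{l_3}{T_4}^{l_4}$ with the appropriate index offsets. The only cosmetic difference is that the paper's base relations are anchored at slightly shifted lattice points (since $\omega_1$ and $\omega_2$ correspond to $\omega_{1,0,0,0}$ and $\omega_{1,1,0,0}$ up to $\kappa$-factors), which is exactly the bookkeeping you flag.
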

\begin{proof}
First, we prove System \eqref{eqn:111H3}.
From the action of translations $T_1$, $T_2$ and $T_3$, it holds that
\begin{subequations}
\begin{align}
 &\cfrac{\omega_1}{\omega_2}
 =\cfrac{a_0\kappa_2\kappa_0 T_1(\omega_2) -\omega_0}
 {a_0\kappa_2\kappa_1 \omega_0 -\lambda\kappa_2 T_1(\omega_2) },\label{eqn:proof_lemma_H3_1}\\
 &\cfrac{\omega_2}{\omega_0}
 =\cfrac{a_1 \kappa_0\kappa_1 T_2(\omega_0) - \omega_1}
 {a_1 \kappa_0\kappa_2 \omega_1 -\lambda\kappa_0 T_2(\omega_0)},\label{eqn:proof_lemma_H3_2}\\
 &\cfrac{\omega_0}{\omega_1}
 =\cfrac{a_2 \kappa_1\kappa_2 T_3(\omega_1) -\omega_2}
 {a_2 \kappa_1\kappa_0 \omega_2 -\lambda\kappa_1 T_3(\omega_1)}.\label{eqn:proof_lemma_H3_3}
\end{align}
\end{subequations}
Applying ${T_1}^{l_1}{T_2}^{l_2}{T_3}^{l_3+1}{T_4}^{l_4}$, ${T_1}^{l_1}{T_2}^{l_2}{T_3}^{l_3}{T_4}^{l_4}$ 
and ${T_1}^{l_1}{T_2}^{l_2+1}{T_3}^{l_3+1}{T_4}^{l_4}$
on Equations \eqref{eqn:proof_lemma_H3_1}--\eqref{eqn:proof_lemma_H3_3},
we obtain Equations \eqref{eqn:111H3_1}--\eqref{eqn:111H3_3}, respectively.

We now consider the derivation of System \eqref{eqn:111H6}.
From the action of the translation $T_4$, it easily verified that
\begin{subequations}
\begin{align}
 &a_2\kappa_2\cfrac{T_4(\omega_1)}{\omega_0}
 -\cfrac{1}{\kappa_2}\,\cfrac{T_4(\omega_0)}{\omega_1}
  =\cfrac{q\lambda^2-1}{\lambda a_1},
 \label{eqn:proof_lemma_H6_1}\\
 &\cfrac{1}{a_1\kappa_1\kappa_2}\,\cfrac{T_4(\omega_2)}{\omega_1}
 -\cfrac{a_2\kappa_1\kappa_2}{q\lambda ^2}\,\cfrac{T_4(\omega_1)}{\omega_2}
 =\cfrac{q\lambda ^2-1}{q\lambda ^2},
 \label{eqn:proof_lemma_H6_2}\\
 &\kappa_1\cfrac{T_4(\omega_0)}{\omega_2}
 -\cfrac{1}{a_1\kappa_1}\,\cfrac{T_4(\omega_2)}{\omega_0}
 =\cfrac{a_2 (q\lambda ^2-1)}{q\lambda}.
 \label{eqn:proof_lemma_H6_3}
\end{align}
\end{subequations}
Applying ${T_1}^{l_1}{T_2}^{l_2}{T_3}^{l_3}{T_4}^{l_4}$, ${T_1}^{l_1-1}{T_2}^{l_2}{T_3}^{l_3}{T_4}^{l_4}$ 
and ${T_1}^{l_1-1}{T_2}^{l_2-1}{T_3}^{l_3}{T_4}^{l_4}$
on Equations \eqref{eqn:proof_lemma_H6_1}--\eqref{eqn:proof_lemma_H6_3},
we obtain Equations \eqref{eqn:111H6_1}--\eqref{eqn:111H6_3}, respectively.

Finally, we prove Equation \eqref{eqn:T4_action}.
By eliminating $\omega_2$ from Equation \eqref{eqn:proof_lemma_H3_2} and 
\begin{equation}
 T_4(\omega_0)
 =\cfrac{\omega_0\omega_1+a_1\kappa_0(\lambda a_2\omega_0+\kappa_2\omega_1)\omega_2}
 {a_1\kappa_0\kappa_1\omega_0},
\end{equation}
the following relation can be derived:
\begin{equation}\label{eqn:proof_lemma_T4_action}
 T_4(\omega_0)
 =\cfrac{({a_1}^2-1)\omega_1T_2(\omega_0)
   +a_1a_2(a_1\kappa_0\kappa_1T_2(\omega_0)-\omega_1)\omega_0}
 {a_1(a_1\omega_1-\kappa_0\kappa_1T_2(\omega_0))}.
\end{equation}
Applying ${T_1}^{l_1}{T_2}^{l_2}{T_3}^{l_3}{T_4}^{l_4}$ on Equation \eqref{eqn:proof_lemma_T4_action},
we obtain Equation \eqref{eqn:T4_action}.
This completes the proof.
\end{proof}

In \cite{JNS2014:MR3291391,JNS:paper3}, 
we showed the following proposition by using Lemma \ref{lemma:quad-eqn_on_omega_lattice}:
\begin{proposition}[\cite{JNS2014:MR3291391,JNS:paper3}]\label{prop:JNS_1}
The $\omega$-lattice can be obtained from 
an asymmetric 4D cube which has twelve $H3_{(\delta,\epsilon)=(0,0)}$-type equations 
and twelve $D4_{(\delta_1,\delta_2,\delta_3)=(1,0,0)}$-type equations
associated with each face.
\end{proposition}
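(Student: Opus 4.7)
The plan is to reassemble the twenty-four quad-equations listed in Lemma \ref{lemma:quad-eqn_on_omega_lattice} as the two-dimensional faces of a single asymmetric 4D hypercube. First I would fix a unit cell of the $\omega$-lattice, namely the sixteen vertices $\{\omega_{l_1+\varepsilon_1,\,l_2+\varepsilon_2,\,l_3+\varepsilon_3,\,l_4+\varepsilon_4} : \varepsilon_i\in\{0,1\}\}$ based at a chosen point $(l_1,l_2,l_3,l_4)\in\mathbb{Z}^4$. A 4D cube has $\binom{4}{2}\cdot 2^{2}=24$ two-dimensional faces, organised into six parallel families indexed by the choice of coordinate pair, with four parallel faces in each family.

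Next I would match each family to the appropriate equation of Lemma \ref{lemma:quad-eqn_on_omega_lattice}. The three coordinate pairs not involving $l_4$, namely $(l_1,l_2)$, $(l_1,l_3)$, $(l_2,l_3)$, contribute four parallel faces each, on which \eqref{eqn:111H3_1}--\eqref{eqn:111H3_3} are assigned, yielding the twelve $H3_{(\delta,\epsilon)=(0,0)}$-type equations. Symmetrically, the three coordinate pairs involving $l_4$, namely $(l_1,l_4)$, $(l_2,l_4)$, $(l_3,l_4)$, carry \eqref{eqn:111H6_1}--\eqref{eqn:111H6_3}, yielding the twelve $D4_{(\delta_1,\delta_2,\delta_3)=(1,0,0)}$-type equations. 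This exhausts all 24 faces of the 4D cube.

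The third step is to check that these assignments are genuinely consistent on the cube, rather than merely coincident on the $\omega$-lattice. I would verify this by considering the four 3D sub-cubes obtained by fixing one of the four coordinates to $0$ or $1$; each such sub-cube has six two-dimensional faces drawn from the six families identified above, and the vertex opposite to the base must be obtained in three independent ways. The 3D-consistency required here is precisely what was established in \cite{JNS2014:MR3291391,JNS:paper3}. The additional relation \eqref{eqn:T4_action}, although of $H3_{(0,1)}$-type, is not assigned to any face of the asymmetric 4D cube; rather, it serves as an auxiliary identity that reconciles the $D4$-type and $H3$-type faces sharing the $l_4$-axis, and so can be derived from the cube equations by eliminating interior variables.

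The hard part will be the parameter bookkeeping. In Lemma \ref{lemma:quad-eqn_on_omega_lattice} the coefficients of each quad-equation depend explicitly on the integer coordinates through powers of $q$, $\lambda$, and $a_i$, whereas the standard $H3$ and $D4$ polynomials in Section \ref{subsection:ABS} carry uniform parameters. Translating between the two descriptions amounts to tracking the action of $T_1,\ldots,T_4$ on the parameter set $\{a_i,\kappa_i,\lambda\}$ and confirming that at each vertex the coefficient structure matches the $(\delta,\epsilon)=(0,0)$ and $(\delta_1,\delta_2,\delta_3)=(1,0,0)$ specialisations. Once this is in place, the proposition follows by combining Lemma \ref{lemma:quad-eqn_on_omega_lattice} with the multi-dimensional consistency argument of \cite{JNS2014:MR3291391,JNS:paper3}, and tiling $\mathbb{Z}^4$ by translated copies of the unit cube then reproduces the whole $\omega$-lattice.
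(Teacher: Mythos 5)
Your combinatorial skeleton is the right one and agrees with what the cited works do: \eqref{eqn:111H3_1}--\eqref{eqn:111H3_3} live on the coordinate planes $(l_1,l_3)$, $(l_1,l_2)$, $(l_2,l_3)$ and \eqref{eqn:111H6_1}--\eqref{eqn:111H6_3} on the planes $(l_1,l_4)$, $(l_2,l_4)$, $(l_3,l_4)$, so with four parallel faces per family one accounts for all $24$ two-dimensional faces of a 4D cube as $12+12$; and you are right that \eqref{eqn:T4_action} joins a base vertex to three of its neighbours and therefore is not a face equation. Since the paper itself gives no argument beyond ``this follows from Lemma \ref{lemma:quad-eqn_on_omega_lattice}'' and the citations, deferring the multidimensional-consistency verification to \cite{JNS2014:MR3291391,JNS:paper3} is defensible.

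The genuine gap is that you never invoke $T_1T_2T_3=1$, equivalently $\omega_{l_1+1,l_2+1,l_3+1,l_4}=\omega_{l_1,l_2,l_3,l_4}$. Tiling $\mathbb{Z}^4$ by copies of a consistent asymmetric 4D cube produces a genuinely four-dimensional lattice; the $\omega$-lattice is not that object but its $(1,1,1,0)$-periodic reduction, and ``can be obtained from'' in the proposition refers precisely to this reduction. Your final ``tiling reproduces the whole $\omega$-lattice'' step silently identifies the two. To close the gap you must (i) impose the periodic identification and (ii) check that every face equation is compatible with it --- which holds because the coefficients in Lemma \ref{lemma:quad-eqn_on_omega_lattice} depend on $l_1,l_2,l_3$ only through the differences $l_1-l_3$, $l_1-l_2$, $l_2-l_3$ (together with $l_4$), hence are invariant under the simultaneous shift; this invariance is exactly the parameter bookkeeping you postponed, and it is where the periodicity enters. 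A further minor slip: a 4D cube has eight three-dimensional facets, not four, so your consistency check as stated does not enumerate all the sub-cubes.
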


It follows from Proposition \ref{prop:JNS_1} that the above quad-equations are the only ones
that relate four points on the $\omega$-variables.
Therefore, we have shown a part of Theorem \ref{maintheorem}.

\subsection{Relations to discrete Schwarzian KdV equation}
In this section, we show how to obtain the discrete Schwarzian KdV equation,
which is known as a discrete analogue of the Cauchy-Riemann relation (cross-ratio equation),
from the $\omega$-lattice.

In a recent work\cite{HKM2011:MR2788707}, Hay {\it et al.} showed that
by setting
\begin{equation}
 z_{l_1,l_2,l_3,l_4}={T_1}^{l_1}{T_2}^{l_2}{T_3}^{l_3}{T_4}^{l_4}(z),
\end{equation}
where
\begin{equation}
 z=c^{2\log{({a_1}^{-1}a_2)}/3\log{q}}\cfrac{{a_2}^{2/3}}{{a_1}^{2/3}}\,\cfrac{T_4(\bar{\tau}_0)}{\tau_0},
\end{equation}
one can obtain the discrete Schwarzian KdV equation:
\begin{equation}\label{eqn:111Q1_1}
 \cfrac{(z_{l_1,l_2,l_3,l_4}-z_{l_1+1,l_2,l_3,l_4})(z_{l_1,l_2,l_3+1,l_4}-z_{l_1+1,l_2,l_3+1,l_4})}
 {(z_{l_1,l_2,l_3,l_4}-z_{l_1,l_2,l_3+1,l_4})(z_{l_1+1,l_2,l_3,l_4}-z_{l_1+1,l_2,l_3+1,l_4})}
 =q^{2(l_1-l_3-1)}{a_0}^2.
\end{equation}
Equation \eqref{eqn:111Q1_1} can be found from the $\omega$-lattice
because of the following relation:
\begin{equation}\label{eqn:definition_z}
 z=T_4(\omega_0)\omega_0,\quad 
 (\text{or }\ z_{l_1,l_2,l_3,l_4}=\omega_{l_1,l_2,l_3,l_4+1}\omega_{l_1,l_2,l_3,l_4}).
\end{equation}
Furthermore, we can also obtain the following equations:
\begin{subequations}
\begin{align}
 &\cfrac{(z_{l_1,l_2,l_3,l_4}-q^{2l_4+1}\lambda^2z_{l_1,l_2+1,l_3,l_4})
   (z_{l_1+1,l_2,l_3,l_4}-q^{2l_4+1}\lambda^2z_{l_1+1,l_2+1,l_3,l_4})}
 {(z_{l_1,l_2,l_3,l_4}-z_{l_1+1,l_2,l_3,l_4})(z_{l_1,l_2+1,l_3,l_4}-z_{l_1+1,l_2+1,l_3,l_4})}
 =q^{-2l_1+2l_2+2l_4+1}\lambda^2{a_1}^2,
 \label{eqn:111Q1_2}\\
 &\cfrac{(z_{l_1,l_2,l_3,l_4}-z_{l_1,l_2,l_3+1,l_4})(z_{l_1,l_2+1,l_3,l_4}-z_{l_1,l_2+1,l_3+1,l_4})}
 {(z_{l_1,l_2,l_3,l_4}-q^{2l_4+1}\lambda^2z_{l_1,l_2+1,l_3,l_4})
  (z_{l_1,l_2,l_3+1,l_4}-q^{2l_4+1}\lambda^2z_{l_1,l_2+1,l_3+1,l_4})}
 =q^{-2l_2+2l_3-2l_4-1}\lambda^{-2}{a_2}^2,
 \label{eqn:111Q1_3}\\
 &\cfrac{(z_{l_1,l_2,l_3,l_4}-z_{l_1+1,l_2,l_3,l_4})(z_{l_1,l_2,l_3,l_4+1}-z_{l_1+1,l_2,l_3,l_4+1})}{z_{l_1,l_2,l_3,l_4}z_{l_1+1,l_2,l_3,l_4}}
 =\cfrac{(q^{2l_4+1}\lambda^2-1)(q^{2l_4+3}\lambda^2-1)}{q^{2(-l_1+l_2+l_4)+1}\lambda^2{a_1}^2},
 \label{eqn:A5_z_14}\\
 &\cfrac{(z_{l_1,l_2,l_3,l_4}-q^{2l_4+1}\lambda^2z_{l_1,l_2+1,l_3,l_4})(z_{l_1,l_2,l_3,l_4+1}-q^{2l_4+3}\lambda^2z_{l_1,l_2+1,l_3,l_4+1})}{z_{l_1,l_2,l_3,l_4}z_{l_1,l_2+1,l_3,l_4}}
 =(q^{2l_4+1}\lambda^2-1)(q^{2l_4+3}\lambda^2-1),
 \label{eqn:A5_z_24}\\
 &\cfrac{(z_{l_1,l_2,l_3,l_4}-z_{l_1,l_2,l_3+1,l_4})(z_{l_1,l_2,l_3,l_4+1}-z_{l_1,l_2,l_3+1,l_4+1})}{z_{l_1,l_2,l_3,l_4}z_{l_1,l_2,l_3+1,l_4}}
 =\cfrac{(q^{2l_4+1}\lambda^2-1)(q^{2l_4+3}\lambda^2-1)}{q^{2(l_2-l_3+l_4)+1}\lambda^2{a_2}^{-2}},
 \label{eqn:A5_z_34}
\end{align}
\end{subequations}
from the following relations:
\begin{subequations}
\begin{align}
 &\cfrac{(z-q\lambda^2 T_2(z))(T_1(z)-q\lambda^2 T_1T_2(z))}{(z-T_1(z))(T_2(z)-T_1T_2(z))}=q\lambda^2{a_1}^2,\\
 &\cfrac{(z-T_3(z))(T_2(z)-T_2T_3(z))}{(z-q\lambda^2 T_2(z))(T_3(z)-q\lambda^2T_2T_3(z))}=q^{-1}\lambda^{-2}{a_2}^2,\\
 &\cfrac{(z-T_1(z))(T_4(z)-T_1T_4(z))}{zT_1(z)}=\cfrac{(q\lambda^2-1)(q^3\lambda^2-1)}{q\lambda^2{a_1}^2},\\
 &\cfrac{(z-q\lambda^2T_2(z))(T_4(z)-q^3\lambda^2T_2T_4(z))}{zT_2(z)}=(q\lambda^2-1)(q^3\lambda^2-1),\\
 &\cfrac{(z-T_3(z))(T_4(z)-T_3T_4(z))}{zT_3(z)}=\cfrac{(q\lambda^2-1)(q^3\lambda^2-1)}{q\lambda^2{a_2}^{-2}}.
\end{align}
\end{subequations}
We note that 
Equations \eqref{eqn:111Q1_1}, \eqref{eqn:111Q1_2} and \eqref{eqn:111Q1_3} are $Q1_{\epsilon=0}$-type equations,
while Equations \eqref{eqn:A5_z_14}--\eqref{eqn:A5_z_34} are $H1_{\epsilon=0}$-type equations. 
\subsection{The restricted $\omega$-lattice}
In the case of the $f$-lattice, we showed that System \eqref{eqn:qP3_lattice}
can be rewritten as  the single equation \eqref{eqn:qp2} (or \eqref{eqn:R1_f0f1})
on the restricted lattice (projective reduction).
The concept of projective reduction applies not only for the $f$-lattice but also for the $\omega$-lattice.
Let 
\begin{equation}
 \omega_{l,l_4}={R_1}^l{T_4}^{l_4}(\omega_0),
\end{equation}
where
\begin{equation}
 \omega_{2l-1,l_4}=\cfrac{q^{l_4}\lambda}{{a_2}^{l_4}\kappa_2}\,\omega_{l,1,0,l_4},\quad
 \omega_{2l,l_4}=\omega_{l,0,0,l_4}.
\end{equation}
Considering the restricted $\omega$-lattice 
where $\omega_{l,l_4}$ are defined (see Figure \ref{fig:restricted_omega_lattice}),
Equations \eqref{eqn:111H3_1} and \eqref{eqn:111H3_2} can be rewritten as
\begin{subequations}
\begin{align}
 &\cfrac{\omega_{l+2,1,0,l_4}}{\omega_{l,0,0,l_4}}
 =\cfrac{\omega_{l+1,1,0,l_4}+q^{l}a_0\omega_{l+1,0,0,l_4}}
 {\omega_{l+1,0,0,l_4}+q^{l}a_0\omega_{l+1,1,0,l_4}},\label{eqn:111H3_1_2}\\
 &\cfrac{\omega_{l+1,0,0,l_4}}{\omega_{l,1,0,l_4}}
 =\cfrac{q^{l_4}\lambda \left(q^{-l}a_1 \omega_{l,0,0,l_4}+q^{l_4}\lambda \omega_{l+1,1,0,l_4}\right)}
 {q^{-l+l_4}\lambda a_1 \omega_{l+1,1,0,l_4}+\omega_{l,0,0,l_4}},
 \label{eqn:111H3_2_2}
\end{align}
\end{subequations}
respectively.
The system of equations \eqref{eqn:111H3_1_2} and \eqref{eqn:111H3_2_2} 
is expressed by the single equation
\begin{equation}\label{eqn:restricted_H3}
 \cfrac{\omega_{l+3,l_4}}{\omega_{l,l_4}}
  =\cfrac{q^{l_4}\lambda\left({R_1}^l{T_4}^{l_4}(\kappa_2)\omega_{l+1,l_4}+q^{l_4}\lambda{R_1}^l(a_0) \omega_{l+2,l_4}\right)}
  {{R_1}^l{T_4}^{l_4}(\kappa_2)
  \left(q^{l_4}\lambda \omega_{l+2,l_4}+{R_1}^l(a_0){R_1}^l{T_4}^{l_4}(\kappa_2) \omega_{l+1,l_4}\right)}.
\end{equation}
In a similar manner, Equation \eqref{eqn:111H6_1} becomes 
\begin{equation}\label{eqn:restricted_H6_1}
 \cfrac{\omega_{l+2,l_4+1}}{\omega_{l,l_4}}-\cfrac{\omega_{l,l_4+1}}{\omega_{l+2,l_4}}
 =\cfrac{q^{2l_4+1}\lambda^2-1}{q^{l_4}\lambda\, {R_1}^l(a_1)},
\end{equation}
and furthermore, Equations \eqref{eqn:111H6_2} and \eqref{eqn:111H6_3} 
can be expressed by 
\begin{equation}\label{eqn:restricted_H6_2}
 \cfrac{\omega_{l,l_4+1}}{\omega_{l+1,l_4}}
 -\cfrac{1}{{R_1}^{l-1}{T_4}^{l_4}(a_2{\kappa_2}^2)}\,\cfrac{\omega_{l+1,l_4+1}}{\omega_{l,l_4}}
 =\cfrac{q^{2l_4+1}\lambda^2-1}{q^{l_4}\lambda\, {R_1}^{l-1}{T_4}^{l_4}(a_2\kappa_2)},
\end{equation} 
on the restricted $\omega$-lattice.

\begin{figure}[t]
\begin{center}
\includegraphics[width=0.8\textwidth]{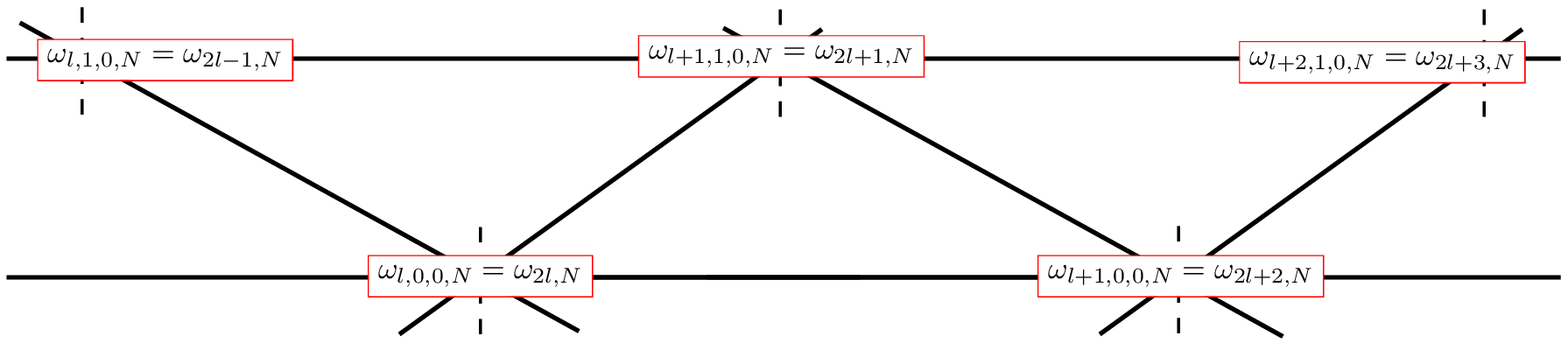}
\caption{Configuration of $\omega$-functions on the restricted $\omega$-lattice.}
\label{fig:restricted_omega_lattice}
\end{center}
\end{figure}

\begin{remark}[\cite{JNS:paper3}]
The restricted $\omega$-lattice can be also obtained from the asymmetric 4D cube.
\end{remark}

We note that on the restricted $\omega$-lattice, 
Equations \eqref{eqn:111Q1_1} and \eqref{eqn:111Q1_2},
Equation \eqref{eqn:A5_z_14} and
Equations \eqref{eqn:A5_z_24} and \eqref{eqn:A5_z_34} can be also rewritten as 
\begin{subequations}
\begin{align}
 &\cfrac{(\zeta_{l,l_4}-\zeta_{l+2,l_4})( \zeta_{l+1,l_4}- \zeta_{l+3,l_4})}
 {(q^{l_4+1/2}\lambda\, \zeta_{l,l_4}- \zeta_{l+1,l_4})(q^{l_4+1/2}\lambda\, \zeta_{l+2,l_4}- \zeta_{l+3,l_4})}
 =q^{-l_4-1/2}\lambda^{-1}{R_1}^{l}({a_0}^2),
 \label{eqn:2-1Q1}\\
 &\cfrac{(\zeta_{l,l_4}-\zeta_{l+2,l_4})(\zeta_{l,l_4+1}-\zeta_{l+2,l_4+1})}{\zeta_{l,l_4}\zeta_{l+2,l_4}}
 =\cfrac{(q^{2l_4+1}\lambda^2-1)(q^{2l_4+3}\lambda^2-1)}{q^{2l_4+1}\lambda^2{R_1}^{l}({a_1}^2a_2)},\\
 &\cfrac{(\zeta_{l,l_4}-q^{-(l_4+1)/2}\lambda^{-1}\zeta_{l+1,l_4})(\zeta_{l,l_4+1}-q^{-(l_4+3)/2}\lambda^{-1}\zeta_{l+1,l_4+1})}{\zeta_{l,l_4}\zeta_{l+1,l_4}}
 =\cfrac{(q^{2l_4+1}\lambda^2-1)(q^{2l_4+3}\lambda^2-1)}{q^{3l_4+5/2}\lambda^3{R_1}^{l}(a_0a_1)},
\end{align}
\end{subequations}
where
\begin{equation}\label{eqn:zeta_z}
 \zeta_{l,l_4}
 ={R_1}^{l}{T_4}^{l_4}({a_2}^{-1/2}{\kappa_2}^{-1})\,{R_1}^{l}{T_4}^{l_4}(z)
 ={R_1}^{l}{T_4}^{l_4}({a_2}^{-1/2}{\kappa_2}^{-1})\,\omega_{l,l_4}\omega_{l,l_4+1}.
\end{equation}

From periodic reduction of a partial difference equation,
we can obtain a quad-equation on the restricted $\omega$-lattice.
In fact, we obtain the following lemmas.

\begin{lemma}
Equation \eqref{eqn:restricted_H3} is equivalent to 
the $(1,-2)$-periodic reduction of $H3_{(\delta,\epsilon)=(0,0)}$ \eqref{eqn:intro_LMKdV_1}.
\end{lemma}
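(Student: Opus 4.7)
The approach is to start from Equation \eqref{eqn:intro_LMKdV_2}, namely the $(1,-2)$-periodic reduction of the lattice modified KdV equation \eqref{eqn:intro_LMKdV_1} derived in Section 1, and transform it into Equation \eqref{eqn:restricted_H3} by an explicit gauge of the dependent variable together with a matching of parameters. First I clear denominators in both equations: \eqref{eqn:intro_LMKdV_2} becomes a four-term bilinear relation in $\Omega,\widehat{\Omega},\widehat{\widehat{\Omega}},\widehat{\widehat{\widehat{\Omega}}}$, and \eqref{eqn:restricted_H3} becomes the analogous four-term relation in $\omega_{l,l_4},\omega_{l+1,l_4},\omega_{l+2,l_4},\omega_{l+3,l_4}$.

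I then introduce a gauge $\Omega_k=\mu^k\,\omega_{k,l_4}$ and equate the resulting four coefficients. Writing $A={R_1}^l(a_0)$ and $K={R_1}^l{T_4}^{l_4}(\kappa_2)$ for brevity, the matching of the two ``outer'' coefficients forces $\mu=\pm K/(q^{l_4}\lambda)$, while the remaining coefficients determine the reduction parameter $\lambda_{\text{red}}^2=(q^{l_4}\lambda/K)^2$ (consistent with $\lambda_{\text{red}}\mu^2=\pm 1$) and identify $\alpha/\beta=\pm A$, the sign being fixed by choosing the appropriate branch of $\mu$. Hence pointwise the two equations coincide under this gauge and parameter identification.

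The main obstacle is verifying that these identifications are consistent as $l$ varies along the restricted lattice. Under the $(1,-2)$-periodic reduction, the constraint $\overline{\alpha}/\alpha=\beta/\widehat{\widehat{\beta}}$ forces $\alpha_l=\alpha_0 q^l$, while $\beta_m$ retains two independent initial values $\beta_0,\beta_1$ on the even and odd sublattices of $m$. Using $R_1^2=T_1$ and the explicit actions of $R_1$ and $T_4$ on $a_0$ and $\kappa_2$, I would compute that the sequences $A(l)$ and $K(l)$ are constant within each parity class of $l$ and alternate between the two classes, e.g.\ $R_1^{2L}(a_0)=q^L a_0$ and $R_1^{2L+1}(a_0)=q^L a_0 a_2$. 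I would then confirm that this parity-alternating behaviour precisely reproduces the even/odd sublattice structure of $\alpha_l/\beta_m$ when the effective reduction coordinate $k=2l_{\text{LMKdV}}+m_{\text{LMKdV}}$ is identified with the restricted-lattice index $l$, with the identifications $\alpha_0/\beta_0=a_0$, $\beta_0/\beta_1=a_2$ (so that the common ratio $\eta$ equals $q$). Completing this parity book-keeping establishes the equivalence.
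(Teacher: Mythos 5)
Your overall strategy --- gauge the dependent variable, clear denominators and match the four coefficients, identify $\hat{\ }=R_1$, $\bar{\ }=T_1$, and check consistency via the parity of $l$ --- is the same as the paper's, and your book-keeping for the coefficient $A={R_1}^l(a_0)$ is correct (indeed $R_1(a_0)=a_0a_2$, ${R_1}^2=T_1$, so ${R_1}^{2L}(a_0)=q^La_0$ and ${R_1}^{2L+1}(a_0)=q^La_0a_2$, matching $\alpha_0/\beta_0=a_0$, $\beta_0/\beta_1=a_2$). The gap is in the gauge. The ansatz $\Omega_k=\mu^k\omega_{k,l_4}$ with a single constant $\mu$ cannot work: as you yourself compute, the matching forces $\mu=\pm K/(q^{l_4}\lambda)$ with $K={R_1}^l{T_4}^{l_4}(\kappa_2)$, and $K$ genuinely alternates with the parity of $l$ --- it equals ${a_2}^{l_4}\kappa_2$ for $l$ even and $(a_0a_1)^{l_4}\kappa_0\kappa_1$ for $l$ odd, since $R_1(\kappa_2)=\kappa_0\kappa_1$ while $T_1(\kappa_2)=\kappa_2$. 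So $\mu$ is forced to take two different values at consecutive $l$, contradicting the ansatz; equivalently the ``reduction parameter'' $\lambda_{\rm red}=\pm\mu^{-2}$ you extract oscillates with $l$, whereas Equation \eqref{eqn:intro_LMKdV_2} has a fixed $\lambda$. (There is also an internal inconsistency: $\lambda_{\rm red}\mu^2=\pm1$ together with $\mu=\pm K/(q^{l_4}\lambda)$ gives $\lambda_{\rm red}^2=(q^{l_4}\lambda/K)^4$, not $(q^{l_4}\lambda/K)^2$.) Your parity book-keeping in the final paragraph only reconciles $A(l)$ with $\alpha_l/\beta_m$; it cannot absorb the alternation of $K$, because $K$ enters the gauge and $\lambda_{\rm red}$, not the coefficient $\alpha/\beta$.

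The repair is precisely the paper's choice of gauge, $\Omega_l^{l_4}=G_l\,\omega_{l,l_4}$ with $G_l={R_1}^{l}{T_4}^{l_4}(\lambda^{2/3}\kappa_1)$. This is not a geometric sequence: its consecutive ratios satisfy $G_{l+1}/G_l={R_1}^l{T_4}^{l_4}(\kappa_2/\lambda)=K/(q^{l_4}\lambda)$, so they alternate between the two values ${a_2}^{l_4}\kappa_2/(q^{l_4}\lambda)$ and $1/({a_2}^{l_4}\kappa_2)$, with two-step ratio $G_{l+2}/G_l=(q^{l_4}\lambda)^{-1}$. One then checks $G_{l+1}/G_{l+2}=K$ and $G_{l+3}/G_l=K/(q^{2l_4}\lambda^2)$, which is exactly what is needed to cancel every occurrence of $K$ in \eqref{eqn:restricted_H3} and leave \eqref{eqn:intro_LMKdV_2} with the \emph{constant} parameter $q^{l_4}\lambda$ and coefficient ${R_1}^l(a_0)$. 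A pure $\mu^k$ gauge would only coincide with this under the non-generic constraint ${a_2}^{2l_4}{\kappa_2}^2=q^{l_4}\lambda$ on the free constant in $\kappa_2$. With the gauge corrected, the rest of your argument (parameter identification and parity matching) goes through as in the paper.
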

\begin{proof}
By setting
\begin{equation}
 \Omega_{l_1}^{l_4}={R_1}^{l}{T_4}^{l_4}(\lambda^{2/3}\kappa_1)\,\omega_{l,l_4},
\end{equation}
Equation \eqref{eqn:restricted_H3} can be rewritten as
\begin{equation}
 \cfrac{\Omega_{l+3}^{l_4}}{\Omega_l^{l_4}}
 =\cfrac{\Omega_{l+1}^{l_4}+q^{l_4}\lambda {R_1}^{l}(a_0)\Omega_{l+2}^{l_4}}
 {q^{l_4}\lambda\left(q^{l_4}\lambda\Omega_{l+2}^{l_4}+{R_1}^{l}(a_0)\Omega_{l+1}^{l_4}\right)},
\end{equation}
which is equivalent to Equation \eqref{eqn:intro_LMKdV_2} with the following correspondence:
\begin{equation}
 \Omega_{0,0}=\Omega_0^0,\quad
 \cfrac{\alpha_0}{\beta_0}=a_0,\quad
 \bar{}=T_1,\quad
 \hat{}=R_1.
\end{equation}
This completes the proof.
\end{proof}

\begin{lemma}
Equation \eqref{eqn:2-1Q1} can be obtained by a periodic reduction of $Q1_{\epsilon=0}$ \eqref{eqn:intro_DSKdV_1} and the reduction is defined by
\begin{equation}
 U_{l,m}=q^{-m/2}\lambda^{-m}\zeta_l^m,
\end{equation}
with the $(1,-2)$-periodic condition
\begin{equation}\label{eqn:Q1_zeta1-2}
 \zeta_{l+1}^{m-2}=\zeta_l^m.
\end{equation}
\end{lemma}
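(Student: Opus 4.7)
The plan is to substitute the reduction $U_{l,m}=q^{-m/2}\lambda^{-m}\zeta_l^m$ into the discrete Schwarzian KdV equation \eqref{eqn:intro_DSKdV_1}, impose the $(1,-2)$-periodic condition \eqref{eqn:Q1_zeta1-2} to collapse the resulting partial difference equation to an ordinary one, and then identify the outcome with \eqref{eqn:2-1Q1} through the dictionary between the ABS-type shifts and the translations $T_1$, $R_1$, $T_4$ on the $\omega$-lattice that was used in the previous lemma.

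First I would compute each of the four differences in the cross-ratio of \eqref{eqn:intro_DSKdV_1}. The two differences taken in the $l$-direction factor out the scalar $q^{-m/2}\lambda^{-m}$, while the two differences taken in the $m$-direction factor out $q^{-(m+1)/2}\lambda^{-(m+1)}$ together with an inner expression of the form $q^{1/2}\lambda\,\zeta-\zeta$. The prefactor residues conspire to a single factor $q^{1/2}\lambda$ in the cross-ratio, so the substitution turns \eqref{eqn:intro_DSKdV_1} into
\begin{equation*}
q^{1/2}\lambda\,\frac{(\zeta_l^m-\zeta_{l+1}^m)(\zeta_l^{m+1}-\zeta_{l+1}^{m+1})}{(q^{1/2}\lambda\,\zeta_l^m-\zeta_l^{m+1})(q^{1/2}\lambda\,\zeta_{l+1}^m-\zeta_{l+1}^{m+1})}=\frac{\alpha_l}{\beta_m}.
\end{equation*}
Applying \eqref{eqn:Q1_zeta1-2} in the equivalent form $\zeta_{l+1}^{m}=\zeta_l^{m+2}$ and $\zeta_{l+1}^{m+1}=\zeta_l^{m+3}$ eliminates the $l$-shifts in favour of $m$-shifts, giving a scalar three-step recurrence for the single sequence that $\zeta_l^m$ reduces to (the periodicity forces $\zeta_l^m$ to depend on $(l,m)$ only through $2l+m$). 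Consistency of the reduction simultaneously forces $\alpha_l/\beta_m$ to be a function of $2l+m$ alone.

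To complete the matching with \eqref{eqn:2-1Q1}, I would identify the reduced sequence with $\zeta_{l,l_4}$ on the restricted $\omega$-lattice at a fixed level $l_4$, and identify the bar and hat shifts of \eqref{eqn:intro_notation_1} with $T_1$ and $R_1$ respectively, in parallel with the preceding lemma. The factor $q^{l_4+1/2}\lambda$ that distinguishes \eqref{eqn:2-1Q1} from the $l_4=0$ case of the reduced equation is then produced by the $T_4$-action ($T_4:\lambda\mapsto q\lambda$), which commutes with $T_1$ and $R_1$, so that the entire $l_4$-family in \eqref{eqn:2-1Q1} is obtained from the base reduction by applying ${T_4}^{l_4}$ to both sides; simultaneously, setting $\alpha_l/\beta_m={R_1}^{2l+m}({a_0}^2)$ reproduces the right-hand side $q^{-l_4-1/2}\lambda^{-1}{R_1}^l({a_0}^2)$ of \eqref{eqn:2-1Q1}. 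The main technical burden will be the bookkeeping of the powers of $q$ and $\lambda$ across the $T_4$-levels, and verifying that the consistency constraint on $(\alpha_l,\beta_m)$ imposed by the periodicity is exactly the one compatible with the $R_1$- and $T_4$-actions on $a_0$; once this is in place, the reduced equation and \eqref{eqn:2-1Q1} become manifestly identical.
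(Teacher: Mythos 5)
Your proposal is correct and follows essentially the same route as the paper: substitute $U_{l,m}=q^{-m/2}\lambda^{-m}\zeta_l^m$ into $Q1_{\epsilon=0}$, observe that the prefactors leave a single factor $q^{1/2}\lambda$ so that the equation reduces to the paper's intermediate form \eqref{eqn:proof_2-1Q1} once the periodicity (and the induced constraint $\overline{\alpha}/\alpha=\beta/\widehat{\widehat{\beta}}=q^2$) is imposed, and then identify with \eqref{eqn:2-1Q1} via $\bar{}=T_1$, $\hat{}=R_1$, $\alpha_0/\beta_0={a_0}^2$. Your explicit handling of the $l_4$-levels through the $T_4$-action is a detail the paper leaves implicit, but it is consistent with, not different from, the paper's argument.
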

\begin{proof}
The periodic condition \eqref{eqn:Q1_zeta1-2} implies the condition on the parameters
\begin{equation}
 \cfrac{\overline{\alpha}}{~\alpha~}=\cfrac{~\beta~}{\widehat{\widehat{\beta}}}=q^2.
\end{equation}
Therefore, Equation \eqref{eqn:intro_DSKdV_1} can be reduced to
\begin{equation}\label{eqn:proof_2-1Q1}
 \cfrac{(\zeta-\widehat{\widehat{\zeta}})(\widehat{\zeta}-\widehat{\widehat{\widehat{\zeta}}})}
 {(q^{1/2}\lambda\,\zeta-\widehat{\zeta})(q^{1/2}\lambda\,\widehat{\widehat{\zeta}}-\widehat{\widehat{\widehat{\zeta}}})}
 =\cfrac{\alpha}{q^{1/2}\lambda\,\beta}\,,
\end{equation}
where
\begin{equation}
 \zeta=\zeta_l^m,\quad
 \alpha=\alpha_l,\quad
 \beta=\beta_m.
\end{equation}
Then, the statement holds since 
Equation \eqref{eqn:2-1Q1} is equivalent to Equation \eqref{eqn:proof_2-1Q1}
with the following correspondence:
\begin{equation}
 \zeta_0^0=\zeta_{0,0},\quad
 \cfrac{\alpha_0}{\beta_0}={a_0}^2,\quad
 \bar{}=T_1,\quad 
 \hat{}=R_1.
\end{equation}
\end{proof}

\section{$\omega$-lattice for the $(A_1+A_1')^{(1)}$ case}
\label{section:omegaA1A1}
In this section, we construct the $\omega$-lattice associated with the extended affine Weyl group of type $(A_1+A_1)^{(1)}$.
\subsection{The $\tau$-lattice, $f$-lattice and $\omega$-lattice}
In this section, we first construct the $\tau$ functions of $A_6^{(1)}$-surface $q$-Painlev\'e systems
with the extended affine Weyl group of type $(A_1+A_1)^{(1)}$.
Then, we construct the $\tau$-lattice, discrete Painlev\'e lattice ($f$-lattice) and $\omega$-lattice.
The actions \eqref{Yamada_tau} were first obtained by Yamada \cite{YamadaY:A1A1tau}.

The transformation group $\widetilde{W}((A_1+A_1')^{(1)})$ 
has 5 generators $s_0$, $s_1$, $w_0$, $w_1$, $\pi$.
Below, we describe their actions on parameters: $a_0$, $a_1$, $b$, 
and on variables: $\tau_i$, $i=-3,\dots, 1$.  
\begin{lemma}\label{lemma:tau_A6}
The action of $\widetilde{W}((A_1+A_1')^{(1)})$ on parameters are given by 
\begin{align*}
 &s_0:(a_0,a_1,b)
 \mapsto
 \left(\cfrac{1}{a_0}, {a_0}^2 a_1, \cfrac{b}{a_0}\right),
 &&s_1:(a_0,a_1,b)
 \mapsto
 \left(a_0 {a_1}^2, \cfrac{1}{a_1}, a_1 b\right),\\
 &w_0:(a_0,a_1,b)
 \mapsto
 \left(\cfrac{1}{a_0}, \cfrac{1}{a_1}, \cfrac{b}{a_0}\right),
 &&w_1:(a_0,a_1,b)
 \mapsto
 \left(\cfrac{1}{a_0}, \cfrac{1}{a_1}, \cfrac{b}{{a_0}^2 a_1}\right),\\
 &\pi:(a_0,a_1,b)
 \mapsto
 \left(\cfrac{1}{a_1}, \cfrac{1}{a_0}, \cfrac{b}{a_0 a_1}\right),
\end{align*}
while its actions on variables are given by
\begin{subequations}\label{Yamada_tau}
\begin{align}
 &s_0:(\tau_{-3},\tau_{-1},\tau_1)
 \mapsto
 \left(\cfrac{a_0 \tau_1 {\tau_{-2}}^2+\tau_{-1} \tau_0 \tau_{-2}+\tau_{-3} {\tau_0}^2}{a_0 \tau_{-1} \tau_1},
 \cfrac{a_0 {\tau_0}^2+b \tau_{-2} \tau_2}{a_0 \tau_1},
 \cfrac{b \tau_{-2} \tau_2+{\tau_0}^2}{\tau_{-1}}\right),\\
 &s_1:(\tau_{-2},\tau_0)
 \mapsto
 \left(\cfrac{a_0 a_1 {\tau_{-1}}^2+b \tau_{-3} \tau_1}{a_0 a_1 \tau_0},
 \cfrac{a_0 {\tau_{-1}}^2+b \tau_{-3} \tau_1}{a_0 \tau_{-2}}
 \right),\\
 &w_0:(\tau_{-3},\tau_{-2},\tau_{-1},\tau_1)
 \mapsto
 \left(\tau_3,\tau_2,\tau_1,\tau_{-1}\right),\\
 &w_1:(\tau_{-3},\tau_{-2},\tau_0,\tau_1)
 \mapsto
 \left(\tau_1,\tau_0,\tau_{-2},\tau_{-3}\right),\\
 &\pi:(\tau_{-3},\tau_{-2},\tau_{-1},\tau_0,\tau_1)
 \mapsto
 \left(\tau_2, \tau_1, \tau_0, \tau_{-1}, \tau_{-2}\right),
\end{align}
\end{subequations}
where
\begin{equation}
 \tau_2=\cfrac{a_0 \left(\tau_{-1} \tau_0+\tau_{-2} \tau_1\right)}{b \tau_{-3}},\quad
 \tau_3=\cfrac{\tau_0 \tau_1+\tau_{-1} \tau_2}{b \tau_{-2}}.
\end{equation}
For each element $w\in\widetilde{W}((A_1+A_1')^{(1)})$ and function $F=F(a_i,b,\tau_j)$, 
we use the notation $w.F$ to mean $w.F=F(w.a_i,w.b,w.\tau_j)$, that is, 
$w$ acts on the arguments from the left. 
\end{lemma}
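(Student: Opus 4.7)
The plan is to verify that the formulas in Lemma~\ref{lemma:tau_A6} define a consistent birational action of $\widetilde{W}((A_1+A_1')^{(1)})$, in two stages: first extend the specified actions to all $\tau$-variables, then check the defining group relations.

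First I would complete the specification of the action. The formulas list only the images of those $\tau_j$ that actually move under each generator; the remaining initial $\tau$-variables (namely $\tau_{-2},\tau_0$ under $s_0$, and $\tau_{-3},\tau_{-1},\tau_1$ under $s_1$) are fixed, while $w_0,w_1,\pi$ act by explicit reindexing. The bilinear identities
\begin{equation*}
 \tau_2=\cfrac{a_0(\tau_{-1}\tau_0+\tau_{-2}\tau_1)}{b\tau_{-3}},\qquad
 \tau_3=\cfrac{\tau_0\tau_1+\tau_{-1}\tau_2}{b\tau_{-2}},
\end{equation*}
combined with the parameter actions, then determine $g(\tau_2)$ and $g(\tau_3)$ for each generator $g$, and by iterating analogous bilinear recursions one obtains $g(\tau_n)$ for every $n\in\mathbb{Z}$. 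The consistency check at this stage is that any $\tau_n$ reached by two different chains of bilinear relations must receive the same value.

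Next I would verify the defining relations of $\widetilde{W}((A_1+A_1')^{(1)})$: the involutions $s_0^2=s_1^2=w_0^2=w_1^2=1$, the appropriate order of $\pi$, and the commutation relations between $\pi$ and the $s_i,w_i$. On parameters these are immediate arithmetic checks. On the $\tau$-side, the involutivity of $w_0,w_1,\pi$ is clear from the reindexing. For $s_0$ and $s_1$, one substitutes the images back into the defining bilinear formulas; after clearing denominators with the help of the identities for $\tau_2,\tau_3$, the resulting rational expression should collapse to $\tau_j$.

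The main obstacle is the involutivity of $s_0$ and $s_1$ on the nonlinear bilinear expressions; for example, showing $s_0^2(\tau_{-3})=\tau_{-3}$ requires evaluating $s_0(\tau_2)$ via its definition, substituting into the formula for $s_0(\tau_{-3})$, and simplifying a rational expression of high joint degree in $\tau_{-3},\tau_{-2},\tau_{-1},\tau_0,\tau_1$. Rather than repeat this by hand, the most economical route is to invoke Yamada~\cite{YamadaY:A1A1tau}, where these formulas were first shown to realise $\widetilde{W}((A_1+A_1')^{(1)})$, and to perform only those small verifications that are needed for the subsequent construction of the $f$- and $\omega$-lattices in this section.
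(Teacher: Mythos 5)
Your strategy---extend the generators to all $\tau_n$ via the bilinear recursions and then verify the fundamental relations \eqref{eqns:A1A1_fundamental} directly---is a legitimate route in principle, but as written it has a genuine gap exactly at the point you identify as the main obstacle. The content of Lemma \ref{lemma:tau_A6} is precisely that these birational formulas realise $\widetilde{W}((A_1+A_1')^{(1)})$, i.e.\ that $s_0,s_1$ are involutions, that the two $A_1^{(1)}$ factors commute, and that the extension to all $\tau_n$ is consistent. Deferring that verification to Yamada's unpublished private communication \cite{YamadaY:A1A1tau} is not a proof: the paper cites that reference only as the source of the formulas and then supplies its own argument in Appendix \ref{section:proof_tauA6}. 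If you want to follow the direct route you must actually carry out (or mechanise) the rational-function identities such as ${s_0}^2(\tau_{-3})=\tau_{-3}$, including the consistency check that $\tau_2,\tau_3$ (and their images) are compatible with the bilinear relations used to define them; none of this is ``immediate'' and it is the whole substance of the lemma.

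The paper's proof avoids this brute-force computation by a structurally different argument. It realises the configuration geometrically: blow up $\mathbb{P}^1\times\mathbb{P}^1$ at the eight base points \eqref{eqns:appendix_bps}, identify the surface as type $A_6^{(1)}$ in Sakai's classification, and define $s_0,s_1$ as reflections in the roots $\alpha_0,\alpha_1$ of the orthogonal lattice and $w_0,w_1,\pi$ as diagram automorphisms acting on $\mathrm{Pic}(X)$. At the level of the Picard lattice the fundamental relations \eqref{eqns:A1A1_fundamental} and the commutativity of the two $A_1^{(1)}$ actions are immediate from the intersection form, with no rational-function manipulation. The explicit birational formulas of the lemma are then \emph{derived}, not checked: the $\tau$-function is defined on the orbit $M=\{w(E_i)\}$ by the compatibility conditions (i)--(iii), and evaluating $F_\Lambda$ for the relevant divisor classes produces the stated expressions, with the normalisation chosen so that the lattice relations descend to the $\tau$ variables. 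What the geometric approach buys is that the group structure is guaranteed a priori and only the translation into coordinates requires computation; what your approach would buy, if completed, is independence from the surface theory, at the cost of heavy polynomial identities that you have not supplied.
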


The proof of Lemma \ref{lemma:tau_A6} is given in Appendix \ref{section:proof_tauA6}.
We note that the group of transformations $\widetilde{W}((A_1+A_1')^{(1)})=\langle s_0,s_1,w_0,w_1,\pi\rangle$ 
forms the extended affine Weyl group of type $(A_1+A_1)^{(1)}$. 
Namely, the transformations satisfy the fundamental relations
\begin{subequations}\label{eqns:A1A1_fundamental}
\begin{align}
 &{s_0}^2={s_1}^2=(s_0s_1)^\infty=1,\quad
 {w_0}^2={w_1}^2=(w_0w_1)^\infty=1,\\
 &\pi^2=1,\quad
 \pi s_0=s_1\pi,\quad 
 \pi w_0=w_1\pi,
\end{align}
\end{subequations}
and the action of $W(A_1^{(1)})=\langle s_0,s_1\rangle$ and that of $W(A_{1,|\beta|^2=14}^{(1)})=\langle w_0,w_1\rangle$ commute.
We note that
the relation $(ww')^\infty=1$ for transformations $w$ and $w'$ means that
there is no positive integer $N$ such that $(ww')^N=1$.

To iterate each variable $\tau_i$, we need the following translations $T_i$, $i=1,2,3$, defined by
\begin{equation}
 T_1=w_0 w_1,\quad
 T_2=\pi s_1 w_0,\quad
 T_3=\pi s_0 w_0.
\end{equation}
Note that $T_i$, $i =1,2,3$, commute with each other and $T_1T_2T_3=1$.
The actions of these on the parameters are given by
\begin{subequations}
\begin{align}
 &T_1:(a_0,a_1,b)\to(a_0,a_1,qb),\\
 &T_2:(a_0,a_1,b)\to(qa_0,q^{-1}a_1,b),\\
 &T_3:(a_0,a_1,b)\to(q^{-1}a_0,qa_1,q^{-1}b),
\end{align}
\end{subequations}
where the parameter 
\begin{equation}
 q=a_0a_1
\end{equation}
is invariant under the action of translations.
We define $\tau$ functions by
\begin{equation}
 \tau^n_N={T_1}^n{T_2}^N(\tau_{-3}),
\end{equation}
where $n,N\in\mathbb{Z}$ 
and the $\tau$-lattice is as shown in Figure \ref{fig:A1A1_tau_lattice}.
We note that 
\begin{equation}
 \tau_{-3}=\tau^0_0,\quad
 \tau_{-2}=\tau^1_1,\quad
 \tau_{-1}=\tau^1_0,\quad
 \tau_0=\tau^2_1,\quad
 \tau_1=\tau^2_0,\quad
 \tau_2=\tau^3_1,\quad
 \tau_3=\tau^3_0.
\end{equation}

\begin{figure}[t]
\begin{center}
\includegraphics[width=0.8\textwidth]{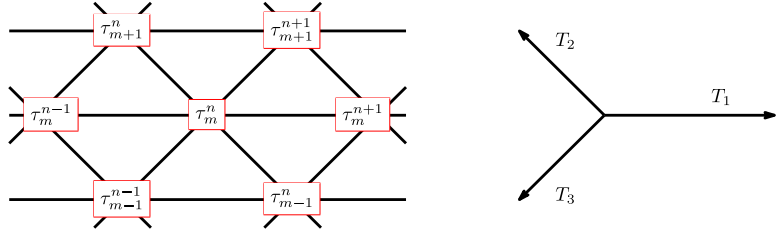}
\caption{Configuration of $\tau$ functions on the $\tau$-lattice.}
\label{fig:A1A1_tau_lattice}
\end{center}
\end{figure}

Next, we consider the $f$-lattice.
Let
\begin{equation}
 f_0=\cfrac{\tau _{-2} \tau _1}{\tau _{-1} \tau _0},\quad
 f_1=\cfrac{\tau _{-3} \tau _0}{\tau _{-2} \tau _{-1}},\quad
 f_2=\cfrac{{\tau_{-1}}^2}{\tau_{-3} \tau_1},
\end{equation}
where
\begin{equation}
 f_0f_1f_2=1.
\end{equation}
The action of $\widetilde{W}((A_1+A_1')^{(1)})$ on the variables $f_i$, $i=0,1,2$, is given by
\begin{align*}
 &s_0:(f_0,f_1,f_2)\mapsto
 \left(\cfrac{f_0 (a_0 f_0+a_0+f_1)}{f_0+f_1+1},\cfrac{f_1 (a_0 f_0+f_1+1)}{a_0 (f_0+f_1+1)},
 \cfrac{a_0 f_2(f_0+f_1+1)^2}{(a_0 f_0+a_0+f_1)(a_0 f_0+f_1+1)}\right),\\
 &s_1:(f_0,f_1)\mapsto
 \left(\cfrac{f_0 (a_0 a_1+b f_0 f_1)}{a_1 (a_0+b f_0 f_1)},\cfrac{a_1 f_1 (a_0+b f_0 f_1)}{a_0 a_1+b f_0 f_1}\right),\\
 &w_0:(f_0,f_1,f_2)\mapsto
 \left(\cfrac{a_0 (f_0+1)}{b f_0 f_1},\cfrac{a_0 f_0+a_0+b f_0 f_1}{a_0 b f_0 (f_0+1)},
 \cfrac{b^2 f_0}{f_2 (a_0 f_0+a_0+b f_0 f_1)}\right),\\
 &w_1:(f_0,f_1)\mapsto\left(f_1,f_0\right),\\
 &\pi:(f_1,f_2)\mapsto\left(\cfrac{a_0 (f_0+1)}{b f_0 f_1},\cfrac{b f_1}{a_0(f_0+1)}\right).
\end{align*}
Define $f$-functions by
\begin{equation}\label{eqn:def_f0f1}
 f_0^{l_1,l_2,l_3}={T_1}^{l_1}{T_2}^{l_2}{T_3}^{l_3}(f_0),\quad
 f_1^{l_1,l_2,l_3}={T_1}^{l_1}{T_2}^{l_2}{T_3}^{l_3}(f_1),\quad
 f_2^{l_1,l_2,l_3}={T_1}^{l_1}{T_2}^{l_2}{T_3}^{l_3}(f_2),
\end{equation}
where $n,N\in\mathbb{Z}$. 
These form the edges of a lattice, which we refer to as the $f$-lattice, shown in Figure \ref{fig:A1A1_f_lattice}. 
The relations in the $T_1$-direction on the lattice:
\begin{equation}
 T_1(f_1)=\cfrac{a_0(f_0+1)}{b f_0f_1},\quad
 T_1(f_0)=\cfrac{T_1(f_1)+1}{b T_1(f_1)f_0}
\end{equation}
lead to a system of first-order ordinary difference equations\cite{KTGR2000:MR1789477}:
\begin{equation}\label{eqn:A1A1_qp2_2}
 f_1^{l_1+1,l_2,l_3}=\cfrac{q^{l_2-l_3}a_0(f_0^{l_1,l_2,l_3}+1)}{q^{l_1-l_3}b f_0^{l_1,l_2,l_3} f_1^{l_1,l_2,l_3}},\quad
 f_0^{l_1+1,l_2,l_3}=\cfrac{f_1^{l_1+1,l_2,l_3}+1}{q^{l_1-l_3} b f_1^{l_1+1,l_2,l_3}f_0^{l_1,l_2,l_3}},
\end{equation}
or a single second-order ordinary difference equation\cite{RG1996:MR1399286,RGTT2001:MR1838017,SakaiH2001:MR1882403}:
\begin{equation}
 \left(f_0^{l_1+1,l_2,l_3}f_0^{l_1,l_2,l_3}-\cfrac{1}{q^{l_1-l_3}b}\right)
 \left(f_0^{l_1-1,l_2,l_3}f_0^{l_1,l_2,l_3}-\cfrac{1}{q^{l_1-l_3-1}b}\right)
 =\cfrac{q^{-l_1-l_2+2l_3}a_1}{b}\,\cfrac{f_0^{l_1,l_2,l_3}}{1+f_0^{l_1,l_2,l_3}},
\end{equation}
which is known as a $q$-discrete analogue of Painlev\'e II equation.
Moreover, from $T_2$- and $T_3$-directions we obtain the following systems of first-order ordinary difference equations:
\begin{align}
 &\begin{cases}
 f_2^{l_1,l_2+1,l_3}f_2^{l_1,l_2,l_3}
 =\cfrac{q^{l_1-l_3-1}b}{f_1^{l_1,l_2,l_3}(1+f_1^{l_1,l_2,l_3})},\\
 f_1^{l_1,l_2+1,l_3}f_1^{l_1,l_2,l_3}
 =\cfrac{q^{l_2-l_3}a_0(qf_2^{l_1,l_2+1,l_3}+q^{l_1-l_3}b)}{f_2^{l_1,l_2+1,l_3}(q^{l_2-l_3+1}a_0f_2^{l_1,l_2+1,l_3}+q^{l_1-l_3}b)},
 \end{cases}\\
 &\begin{cases}
 f_0^{l_1,l_2,l_3+1}f_0^{l_1,l_2,l_3}
 =\cfrac{qf_2^{l_1,l_2,l_3}+q^{l_1-l_2}a_1b}{f_2^{l_1,l_2,l_3}(qf_2^{l_1,l_2,l_3}+q^{l_1-l_3}b)},\\
 f_2^{l_1,l_2,l_3+1}f_2^{l_1,l_2,l_3}
 =\cfrac{q^{l_1-l_2-1}a_1b}{f_0^{l_1,l_2,l_3+1}(f_0^{l_1,l_2,l_3+1}+1)},
 \end{cases}
\end{align}
respectively.

\begin{figure}[t]
\begin{center}
\includegraphics[width=0.5\textwidth]{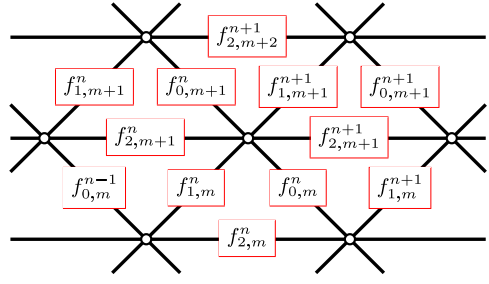}
\caption{Configuration of $f$-functions on the $f$-lattice.}
\label{fig:A1A1_f_lattice}
\end{center}
\end{figure}

Finally, we consider the $\omega$-lattice.
Letting
\begin{equation}
 \omega_0=\cfrac{\tau_{-1}}{\tau_{-3}},\quad
 \omega_1=\cfrac{\tau_1}{\tau_{-1}},\quad
 \omega_2=\cfrac{\tau_0}{\tau_{-2}},
\end{equation}
we obtain the action of $\widetilde{W}((A_1+A_1')^{(1)})$ on the variables $\omega_i$:
\begin{align*}
 &s_0:(\omega_0,\omega_1)
 \mapsto
 \left(
 \cfrac{a_0\omega_0 ({\omega_2}^2+\omega_0 \omega_2+\omega_0 \omega_1)}{a_0 \omega_0 \omega_1+{\omega_2}^2+\omega_0 \omega_2},
 \cfrac{\omega_1 (a_0 \omega_0 \omega_2+a_0 \omega_0 \omega_1+{\omega_2}^2)}{{\omega_2}^2+\omega_0 \omega_2+\omega_0 \omega_1}
 \right),\\
 &s_1:\omega_2
 \mapsto
 \cfrac{a_1\omega_2 (a_0 \omega_0+b \omega_1)}{a_0 a_1 \omega_0+b \omega_1},\\
 &w_0:(\omega_0,\omega_1,\omega_2)
 \mapsto
 \left(
 \cfrac{b^2 \omega_1}{a_0 \omega_0 \omega_1+a_0 \omega_0 \omega_2+b \omega_2 \omega_1},
 \cfrac{1}{\omega_1},
 \cfrac{b \omega_2}{a_0 (\omega_0 \omega_1+\omega_0 \omega_2)}
 \right),\\
 &w_1:(\omega_0,\omega_1,\omega_2)
 \mapsto
 \left(
 \cfrac{1}{\omega_1},
 \cfrac{1}{\omega_0},
 \cfrac{1}{\omega_2}
 \right),\\
 &\pi:(\omega_0,\omega_1,\omega_2)
 \mapsto
 \left(
 \cfrac{b \omega_2}{a_0 (\omega_0 \omega_1+\omega_0 \omega_2)},
 \cfrac{1}{\omega_2},
 \cfrac{1}{\omega_1}
 \right).
\end{align*}
We define $\omega$-functions by
\begin{equation}\label{eqn:A6_omegafun}
 \omega_{l_1,l_2,l_3}={T_1}^{l_1}{T_2}^{l_2}{T_3}^{l_3}(\omega_0),
\end{equation}
where $l_1,l_2,l_3\in\mathbb{Z}$ and the $\omega$-lattice is as shown in Figure \ref{fig:A1A1_omega_lattice}.
We note that
\begin{equation}
 \omega_0=\omega_{0,0,0},\quad
 \omega_1=\omega_{1,0,0},\quad
 \omega_2=\omega_{1,1,0}.
\end{equation}

\begin{lemma}
Since for all $w\in \widetilde{W}((A_1+A_1')^{(1)})$,
\begin{equation}
 w(\omega_i)\in \mathcal{L}\quad (i=0,1,2),
\end{equation}
where $\mathcal{L}=\mathcal{K}(\omega_0,\omega_1,\omega_2)$ is the field of rational functions 
in $\omega_i$\,, $i=0,1,2$, with coefficient field $\mathcal{K}=\mathbb{C}(a_0,a_1,b)$,
every point on the $\omega$-lattice is determined by three initial points.
This implies that quad-equations appear the relations on the $\omega$-lattice.
Moreover, relations on the $f$-lattice can be expressed by those on the $\omega$-lattice
because of the following correspondence:
\begin{subequations}
\begin{align}
 &f_0=\cfrac{\omega_1}{\omega_2},\qquad
 \left(\text{or }\ f_0^{l_1,l_2,l_3}=\cfrac{\omega_{l_1+1,l_2,l_3}}{\omega_{l_1+1,l_2+1,l_3}}\right),\\
 &f_1=\cfrac{\omega_2}{\omega_0},\qquad
 \left(\text{or }\ f_1^{l_1,l_2,l_3}=\cfrac{\omega_{l_1+1,l_2+1,l_3}}{\omega_{l_1,l_2,l_3}}\right),\\
 &f_2=\cfrac{\omega_0}{\omega_1},\qquad
 \left(\text{or }\ f_2^{l_1,l_2,l_3}=\cfrac{\omega_{l_1,l_2,l_3}}{\omega_{l_1+1,l_2,l_3}}\right).
\end{align}
\end{subequations}
\end{lemma}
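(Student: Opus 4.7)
The plan is to verify the three assertions of the lemma in order. For the closure property $w(\omega_i)\in\mathcal{L}$, I would argue by induction on the length of a word representing $w$ in the generators $s_0,s_1,w_0,w_1,\pi$. The base case is precisely the list of generator actions on $(\omega_0,\omega_1,\omega_2)$ displayed immediately above the lemma: each generator sends every $\omega_j$ to a rational function of $(\omega_0,\omega_1,\omega_2)$ with coefficients in $\mathcal{K}=\mathbb{C}(a_0,a_1,b)$, under the convention that any $\omega_j$ not listed as being transformed is left fixed. These implicit fixings can be cross-checked by applying the $\tau$-action of Lemma~\ref{lemma:tau_A6} to the defining ratios; for instance, $s_0$ leaves both $\tau_0$ and $\tau_{-2}$ untouched, hence $s_0(\omega_2)=\omega_2$. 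Since each generator also acts on $(a_0,a_1,b)$ by Laurent monomials, the coefficient field $\mathcal{K}$ is itself stable, and the induction step $(sw)(\omega_i)=s(w(\omega_i))$ goes through immediately.

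The second assertion is an instant consequence: the translations $T_1=w_0w_1$, $T_2=\pi s_1 w_0$, $T_3=\pi s_0 w_0$ all lie in $\widetilde{W}((A_1+A_1')^{(1)})$, so every $\omega_{l_1,l_2,l_3}=T_1^{l_1}T_2^{l_2}T_3^{l_3}(\omega_0)$ is a rational function of $\omega_0,\omega_1,\omega_2$ over $\mathcal{K}$, and therefore the three initial values (together with the parameters) determine every lattice point. Any relation among four lattice points sitting on a quadrilateral face thus becomes a purely algebraic identity in three variables, which is the sense in which quad-equations appear on the $\omega$-lattice.

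For the third assertion, I would substitute the $\tau$-function definitions directly:
\begin{equation*}
 \frac{\omega_1}{\omega_2}=\frac{\tau_1/\tau_{-1}}{\tau_0/\tau_{-2}}=\frac{\tau_{-2}\tau_1}{\tau_{-1}\tau_0}=f_0,
\end{equation*}
and analogously $\omega_2/\omega_0=f_1$, $\omega_0/\omega_1=f_2$. To obtain the lattice versions, apply $T_1^{l_1}T_2^{l_2}T_3^{l_3}$ to each identity; using $\omega_1=T_1(\omega_0)$ and $\omega_2=T_1T_2(\omega_0)$ together with the commutativity of the $T_i$, one obtains for example
\begin{equation*}
 f_0^{l_1,l_2,l_3}=T_1^{l_1}T_2^{l_2}T_3^{l_3}\!\left(\frac{\omega_1}{\omega_2}\right)=\frac{\omega_{l_1+1,l_2,l_3}}{\omega_{l_1+1,l_2+1,l_3}},
\end{equation*}
matching the stated formula, and the $f_1,f_2$ versions are identical computations.

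The main obstacle is essentially bookkeeping once the generator formulas are accepted; the subtle point to watch is the consistency of lattice indices in the presence of the relation $T_1T_2T_3=1$, which makes the three-index parametrization redundant but never contradictory. A secondary concern is confirming the implicit ``$\omega_j$ fixed'' conventions in the generator table by direct calculation with the $\tau$-action of Lemma~\ref{lemma:tau_A6}, but each such check reduces to noting which $\tau_i$ appear in the defining ratio of $\omega_j$.
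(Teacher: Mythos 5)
Your proposal is correct and follows the same route the paper intends: the paper states this lemma without a separate proof, treating it as an immediate consequence of the displayed generator actions on $(\omega_0,\omega_1,\omega_2)$ (closure under word-length induction), the membership of $T_1,T_2,T_3$ in the group, and the direct substitution $\omega_1/\omega_2=\tau_{-2}\tau_1/(\tau_{-1}\tau_0)=f_0$ etc., followed by application of $T_1^{l_1}T_2^{l_2}T_3^{l_3}$ using $\omega_1=T_1(\omega_0)$, $\omega_2=T_1T_2(\omega_0)$. Your cross-check of the implicit ``fixed'' conventions against the $\tau$-action of Lemma~\ref{lemma:tau_A6} is a sound and welcome addition, but it does not change the argument.
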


\begin{figure}[t]
\begin{center}
\includegraphics[width=0.8\textwidth]{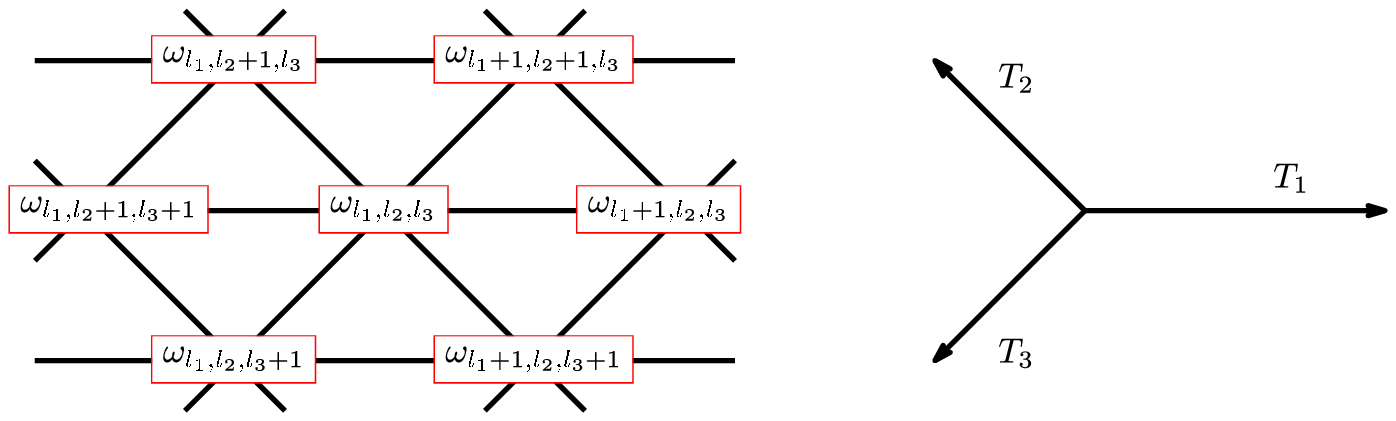}
\caption{Configuration of $\omega$-functions on the $\omega$-lattice.}
\label{fig:A1A1_omega_lattice}
\end{center}
\end{figure}

To show the relation between the $\omega$-lattice and a cube associated with ABS equations,
we derive relations \eqref{eqn:A1A1_H6H3} on the $\omega$-lattice.
\begin{lemma}\label{lemma:A1A1_quad-eqn_on_omega_lattice}
The following equations hold on the $\omega$-lattice:
\begin{subequations}\label{eqn:A1A1_H6H3}
\begin{align}
 &\cfrac{\omega_{l_1+1,l_2,l_3+1}}{\omega_{l_1,l_2,l_3}}
 -q^{l_1-l_2}\cfrac{b}{a_0}\,\cfrac{\omega_{l_1+1,l_2,l_3}}{\omega_{l_1,l_2,l_3+1}}=-1,
 \label{eqn:A1A1_H6_1}\\
 &\cfrac{\omega_{l_1+1,l_2+1,l_3}}{\omega_{l_1,l_2,l_3}}
 -q^{l_1-l_3-1}b\,\cfrac{\omega_{l_1+1,l_2,l_3}}{\omega_{l_1,l_2+1,l_3}}=-1,
 \label{eqn:A1A1_H6_2}\\
 &\cfrac{\omega_{l_1,l_2+1,l_3+1}}{\omega_{l_1,l_2,l_3}}
 =q^{l_1-l_2-1}\cfrac{b}{a_0}\,\cfrac{\omega_{l_1,l_2+1,l_3}-q^{l_2-l_3-1}a_0\omega_{l_1,l_2,l_3+1}}{\omega_{l_1,l_2,l_3+1}-\omega_{l_1,l_2+1,l_3}}.
 \label{eqn:A1A1_H3_1}
\end{align}
\end{subequations}
We note that 
Equations \eqref{eqn:A1A1_H6_1} and \eqref{eqn:A1A1_H6_2} are 
$D4_{(\delta_1,\delta_2,\delta_3)=(1,0,0)}$-type equations,
while Equation \eqref{eqn:A1A1_H3_1} is a $H3_{(\delta,\epsilon)=(0,0)}$-type equation.
\end{lemma}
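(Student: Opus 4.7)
The plan is to mirror the strategy used in Lemma \ref{lemma:quad-eqn_on_omega_lattice}: first derive three ``base'' identities among $\omega_0$, $\omega_1$, $\omega_2$ and one-step translates under $T_1$, $T_2$, $T_3$, and then lift each base identity to an equation on the whole $\omega$-lattice by applying an appropriate power ${T_1}^{l_1}{T_2}^{l_2}{T_3}^{l_3}$. Since each of the three quad-equations in the lemma involves four $\omega$-values at the corners of a unit square in one of the three coordinate planes, it suffices at the base level to verify a relation among $\omega_0$, $T_i(\omega_0)$, $T_j(\omega_0)$, $T_iT_j(\omega_0)$ for the pairs $(i,j)=(1,3)$, $(1,2)$, $(2,3)$ corresponding to \eqref{eqn:A1A1_H6_1}, \eqref{eqn:A1A1_H6_2}, \eqref{eqn:A1A1_H3_1} respectively. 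Because of the relation $T_1T_2T_3=1$, these three base identities are essentially constraints that live on the three ``faces'' of the basic cell.

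Concretely, I would first compute $T_i(\omega_0)$ for $i=1,2,3$ using the factorisations $T_1=w_0w_1$, $T_2=\pi s_1 w_0$, $T_3=\pi s_0 w_0$ together with the generator actions on $(\omega_0,\omega_1,\omega_2)$ listed just before the lemma, and similarly compute $T_iT_j(\omega_0)$. Substituting these expressions into the left-hand side of each proposed base identity (for example, for \eqref{eqn:A1A1_H6_1} at $l_1=l_2=l_3=0$, the statement
\begin{equation*}
\frac{T_1T_3(\omega_0)}{\omega_0}-\frac{b}{a_0}\,\frac{T_1(\omega_0)}{T_3(\omega_0)}=-1
\end{equation*}
should collapse to an identity in $\mathcal{L}=\mathcal{K}(\omega_0,\omega_1,\omega_2)$) and simplifying, the identity should reduce either to $0=0$ or to a straightforward rational simplification. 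Once the three base identities are checked, applying ${T_1}^{l_1}{T_2}^{l_2}{T_3}^{l_3}$ yields equations \eqref{eqn:A1A1_H6_1}--\eqref{eqn:A1A1_H3_1}; the explicit factors $q^{l_1-l_2}$, $q^{l_1-l_3-1}$ etc.\ arise from the translation action on the parameters $a_0$, $a_1$, $b$ given earlier.

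The main obstacle is the computational one: the generator $\pi$, $s_0$, $s_1$, $w_0$, $w_1$ all act on $\omega$'s via non-trivial rational transformations, so forming $T_i(\omega_0)$ and $T_iT_j(\omega_0)$ directly in the $\omega$-variables is messy. The cleanest route, and the one I would take, is instead to translate each base identity into a bilinear identity among the $\tau$-functions using the definitions $\omega_0=\tau_{-1}/\tau_{-3}$, $\omega_1=\tau_1/\tau_{-1}$, $\omega_2=\tau_0/\tau_{-2}$ together with Lemma \ref{lemma:tau_A6}, where the actions reduce to the Hirota-like bilinear expressions involving $\tau_2$, $\tau_3$; these bilinear identities either match the explicit formulas for $\tau_2$, $\tau_3$ or follow from them by a one-line manipulation. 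Classifying the resulting quad-equations by ABS type is then immediate from their form, confirming the $D4_{(1,0,0)}$ and $H3_{(0,0)}$ labels stated in the lemma.
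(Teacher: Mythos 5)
Your proposal matches the paper's proof in structure: the paper likewise establishes three base identities at the origin of the lattice (each relating $\omega_0$, $\omega_1$, $\omega_2$ to a single new translate $T_1(\omega_2)$, $T_2(\omega_0)$ or $T_3(\omega_1)$, which via $\omega_1=T_1(\omega_0)$, $\omega_2=T_1T_2(\omega_0)$ and $T_1T_2T_3=1$ are exactly your face relations) and then applies suitable powers ${T_1}^{l_1}{T_2}^{l_2}{T_3}^{l_3}$ to obtain the $q$-power coefficients from the parameter action. Your remark that the base computations can be organised at the $\tau$-function level is a sensible implementation detail but does not change the argument, so this is essentially the same proof.
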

\begin{proof}
From the action of translations $T_1$, $T_2$ and $T_3$, it follows that
\begin{subequations}
\begin{align}
 &\cfrac{\omega_1}{\omega_2}-\cfrac{b}{a_0}\,\cfrac{T_1(\omega_2)}{\omega_0}=-1,
 \label{eqn:A1A1_T1omega2}\\
 &\cfrac{\omega_2}{\omega_0}-q^{-1}b\,\cfrac{\omega_1}{T_2(\omega_0)}=-1,
 \label{eqn:A1A1_T2omega0}\\
 &\cfrac{\omega_0}{\omega_1}=\cfrac{b}{a_0}\,\cfrac{\omega_2-q^{-1}a_0T_3(\omega_1)}{T_3(\omega_1)-\omega_2}.
 \label{eqn:A1A1_T3omega1}
\end{align}
\end{subequations}
Applying ${T_1}^{l_1}{T_2}^{l_2}{T_3}^{l_3+1}$, ${T_1}^{l_1}{T_2}^{l_2}{T_3}^{l_3}$ and ${T_1}^{l_1}{T_2}^{l_2+1}{T_3}^{l_3+1}$
on Equations \eqref{eqn:A1A1_T1omega2}--\eqref{eqn:A1A1_T3omega1},
we obtain Equations \eqref{eqn:A1A1_H6_1}--\eqref{eqn:A1A1_H3_1}, respectively.
This completes the proof.
\end{proof}

In \cite{JNS:paper4}, we show that the following proposition follows from Lemma \ref{lemma:A1A1_quad-eqn_on_omega_lattice}.
\begin{proposition}[\cite{JNS:paper4}]\label{prop:JNS_2}
The $\omega$-lattice can be obtained from 
an asymmetric 3D cube which has two $H3_{(\delta,\epsilon)=(0,0)}$-type equations 
and four $D4_{(\delta_1,\delta_2,\delta_3)=(1,0,0)}$-type equations
associated with each face.
\end{proposition}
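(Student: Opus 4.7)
The plan is to exhibit an explicit asymmetric 3D cube, each face carrying an ABS-type quad-equation, by promoting the three equations of Lemma \ref{lemma:A1A1_quad-eqn_on_omega_lattice} to three ``bottom'' faces and then translating to obtain three ``top'' faces. Label the cube's vertices by $\omega_{l_1+\epsilon_1,l_2+\epsilon_2,l_3+\epsilon_3}$ with $\epsilon_i\in\{0,1\}$; Equation \eqref{eqn:A1A1_H6_1} serves as the quad-equation on the $(l_1,l_3)$-face at $\epsilon_2=0$, Equation \eqref{eqn:A1A1_H6_2} on the $(l_1,l_2)$-face at $\epsilon_3=0$, and Equation \eqref{eqn:A1A1_H3_1} on the $(l_2,l_3)$-face at $\epsilon_1=0$. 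Lemma \ref{lemma:A1A1_quad-eqn_on_omega_lattice} already identifies these as of $D4_{(1,0,0)}$-, $D4_{(1,0,0)}$- and $H3_{(0,0)}$-type, respectively.

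For the three opposite (``top'') faces, I would apply $T_2$ to \eqref{eqn:A1A1_H6_1}, $T_3$ to \eqref{eqn:A1A1_H6_2} and $T_1$ to \eqref{eqn:A1A1_H3_1}. Since $T_1,T_2,T_3$ act on the parameters $a_0,a_1,b$ only by multiplication with powers of $q$, each translated equation has the same polynomial form in its four vertex variables as its parent, only with shifted numerical coefficients; thus the top faces are again of $D4_{(1,0,0)}$, $D4_{(1,0,0)}$ and $H3_{(0,0)}$ type. This produces the proposition's count of two $H3$- and four $D4$-type equations across the six faces of the cube. Tiling $\mathbb{Z}^3$ by translates of this single cube then recovers the entire $\omega$-lattice, realising the reverse direction of the proposition.

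The remaining step is to check that the six face equations are 3D-consistent in the sense of ABS, so that the value $\omega_{l_1+1,l_2+1,l_3+1}$ is determined unambiguously by the four initial data $\omega_{l_1,l_2,l_3},\omega_{l_1+1,l_2,l_3},\omega_{l_1,l_2+1,l_3},\omega_{l_1,l_2,l_3+1}$. This is essentially automatic: the eight corners of the cube all arise as images of one initial $\omega$ under compositions of the commuting translations $T_1,T_2,T_3$, and the redundancy $T_1T_2T_3=1$ identifies antipodal corners consistently. The principal obstacle I anticipate is purely bookkeeping, namely the careful matching of each (shifted) face equation to the canonical ABS polynomial listed in Section \ref{subsection:ABS}: one must choose a labelling of the four vertex variables as $(x_1,x_2,x_3,x_4)$, identify the spectral parameters $\alpha_1,\alpha_2$ as appropriate products of $a_0,a_1,b,q$ together with the lattice indices $l_i$, and verify that the resulting linear and biquadratic coefficients agree with $H3_{(0,0)}$ or $D4_{(1,0,0)}$ up to a nonzero scalar.
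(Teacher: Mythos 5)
First, note that the paper does not actually prove Proposition \ref{prop:JNS_2}: it only asserts that the result ``follows from Lemma \ref{lemma:A1A1_quad-eqn_on_omega_lattice}'' and defers the details to \cite{JNS:paper4}, which is in preparation. Your combinatorial setup is certainly the intended skeleton of that derivation: placing \eqref{eqn:A1A1_H6_1}, \eqref{eqn:A1A1_H6_2}, \eqref{eqn:A1A1_H3_1} on the three faces through the base vertex, obtaining the opposite faces by applying $T_2$, $T_3$, $T_1$ respectively, and observing that the translations only rescale coefficients by powers of $q$ so the types $D4_{(1,0,0)}$, $D4_{(1,0,0)}$, $H3_{(0,0)}$ are preserved. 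This correctly accounts for the two $H3$- and four $D4$-type equations.

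The genuine gap is in your consistency step, which is the actual content of the proposition. Your argument that $3$D consistency is ``essentially automatic'' because all eight corners are images of one $\omega$ under the commuting translations only shows that the particular family of $\omega$-functions satisfies all six face equations simultaneously. That family depends on just three initial values $\omega_0,\omega_1,\omega_2$, whereas $3$D consistency in the ABS/Boll sense is a statement about \emph{generic} initial data $x_0,x_1,x_2,x_3$ at four vertices: one must compute $x_{123}$ via the three top faces and verify that the three rational expressions coincide identically. Exhibiting one lower-dimensional solution family does not establish this; the verification is a finite but nontrivial computation and is precisely what is deferred to \cite{JNS:paper4}. Moreover, the relation $T_1T_2T_3=1$ means that on your proposed cube $\omega_{l_1+1,l_2+1,l_3+1}=\omega_{l_1,l_2,l_3}$, so the object you have built is a $(1,1,1)$-periodically folded cube with only seven distinct vertices, not a genuine $3$D cube. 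The proposition (compare Proposition \ref{prop:JNS_1}, where the effectively three-dimensional $\omega$-lattice is obtained from a $4$D cube) asserts the existence of an honest consistent $3$D cube of which the $\omega$-lattice is the periodic reduction; unfolding your degenerate cube into that object, and checking its consistency for generic data, is the missing step rather than a bookkeeping formality.
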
 

It follows from Proposition \ref{prop:JNS_2} that the above quad-equations are the only ones
that relate four points on the $\omega$-variables.
This provides a part of Theorem \ref{maintheorem}.

In a similar manner as the case $(A_2+A_1)^{(1)}$, we can also obtain the discrete Schwarzian KdV equation
from the $\omega$-lattice.
Actually, letting
\begin{equation}
 z_{l_1,l_2,l_3}={T_1}^{l_1}{T_2}^{l_2}{T_3}^{l_3}(z),
\end{equation}
where
\begin{equation}
 z=\omega_0\omega_1,
\end{equation}
we obtain 
\begin{subequations}
\begin{align}
 &\cfrac{(z_{l_1,l_2,l_3}-q^{-l_1+l_3+1}b^{-1}z_{l_1,l_2+1,l_3})(z_{l_1+1,l_2,l_3}-q^{-l_1+l_3}b^{-1}z_{l_1+1,l_2+1,l_3})}
 {z_{l_1,l_2,l_3}z_{l_1,l_2+1,l_3}}
 =q^{2(-l_1+l_3)+1}b^{-2},
 \label{eqn:A6_z_12}\\
 &\cfrac{(z_{l_1,l_2,l_3}-q^{-l_1+l_2}a_0b^{-1}z_{l_1,l_2,l_3+1})(z_{l_1+1,l_2,l_3}-q^{-l_1+l_2-1}a_0b^{-1}z_{l_1+1,l_2,l_3+1})}
 {z_{l_1,l_2,l_3}z_{l_1,l_2,l_3+1}}
 =q^{2(-l_1+l_2)-1}{a_0}^2b^{-2},
 \label{eqn:A6_z_13}\\
 &\cfrac{(z_{l_1,l_2,l_3}-q^{-l_1+l_2}a_0b^{-1}z_{l_1,l_2,l_3+1})(z_{l_1,l_2+1,l_3}-q^{-l_1+l_2+1}a_0b^{-1}z_{l_1,l_2+1,l_3+1})}
 {(z_{l_1,l_2,l_3}-q^{-l_1+l_3+1}b^{-1}z_{l_1,l_2+1,l_3})(z_{l_1,l_2,l_3+1}-q^{-l_1+l_3+2}b^{-1}z_{l_1,l_2+1,l_3+1})}
 =q^{2(l_2-l_3)}{a_1}^{-2},
 \label{eqn:A6_z_23}
\end{align}
\end{subequations}
from the following relations:
\begin{subequations}
\begin{align}
 &\cfrac{(z-q b^{-1}T_2(z))(T_1(z)-b^{-1}T_1T_2(z))}{z T_2(z)}=q b^{-2},\\
 &\cfrac{(z-a_0b^{-1}T_3(z))(T_1(z)-q^{-1}a_0b^{-1}T_1T_3(z))}{z T_3(z)}=q^{-1}{a_0}^2b^{-2},\\
 &\cfrac{(z-a_0b^{-1}T_3(z))(T_2(z)-q a_0b^{-1}T_2T_3(z))}{(z-q b^{-1}T_2(z))(T_3(z)-q^2 b^{-1}T_2T_3(z))}={a_1}^{-2},
\end{align}
\end{subequations}
respectively.
We note that Equations \eqref{eqn:A6_z_12} and \eqref{eqn:A6_z_13} are $H1_{\epsilon=0}$-type equations,
while Equation \eqref{eqn:A6_z_23} is a $Q1_{\epsilon=0}$-type equation.
\subsection{The restricted cases}
In order to consider the restricted cases, we introduce the half-translation $R_1$ defined by
\begin{equation}
 R_1=\pi w_1,
\end{equation}
which satisfies
\begin{equation}
 {R_1}^2=T_1.
\end{equation}
The action of $R_1$ on the parameters is given by
\begin{equation*}
 R_1:(a_0,a_1,b)\mapsto (a_1,a_0,a_1b),
\end{equation*}
while its action on variables are given by
\begin{align*}
 &R_1:(\tau_{-3},\tau_{-2},\tau_{-1},\tau_0,\tau_1)\mapsto(\tau_{-2},\tau_{-1},\tau_0,\tau_1,\tau_2),\\
 &R_1:(f_0,f_1)\mapsto\left(\cfrac{a_0(f_0+1)}{b f_0 f_1},f_0\right),\\
 &R_1:(\omega_0,\omega_1,\omega_2)\mapsto\left(\omega_2,\cfrac{a_0 \omega_0(\omega_1+\omega_2)}{b \omega_2},\omega_1\right).
\end{align*}
The restricted functions are defined by
\begin{equation}
 \tau^{(l)}={R_1}^l(\tau_{-3}),\quad
 f^{(l)}={R_1}^l(f_0),\quad
 \omega^{(l)}={R_1}^l(\omega_0),
\end{equation}
where
\begin{subequations}
\begin{align}
 &\tau^{(2l)}=\tau_0^l,\quad
 \tau^{(2l+1)}=\tau_1^{l+1},\\
 &f^{(2l)}=f_0^{l,0,0},\quad
 f^{(2l+1)}=f_1^{l+1,0,0},\\
 &\omega^{(2l)}=\omega_{l,0,0},\quad
 \omega^{(2l+1)}=\omega_{l+1,1,0}.
\end{align}
\end{subequations}

System \eqref{eqn:A1A1_qp2_2} on the restricted $f$-lattice where $f^{(l)}$ are defined
can be rewritten as the following single equation
\begin{equation}\label{eqn:A1A1_R1}
 f^{(l+1)}f^{(l-1)}=\cfrac{{R_1}^l(a_0)(f^{(l)}+1)}{{R_1}^l(b)f^{(l)}}.
\end{equation}
We note that when 
\begin{equation}
 a_0=q^{1/2},
\end{equation}
transformation $R_1$ becomes the translational motion in the parameter subspace:
\begin{equation}
 R_1:b\mapsto q^{1/2}b,
\end{equation}
and then Equation \eqref{eqn:A1A1_R1} can be regarded as the single second-order ordinary difference equation:
\begin{equation}
 f^{(l+1)}f^{(l-1)}=\cfrac{f^{(l)}+1}{q^{(l-1)/2}bf^{(l)}},
\end{equation}
which is known as a $q$-discrete analogue of Painlev\'e I equation\cite{RG1996:MR1399286}.

Furthermore, Equations \eqref{eqn:A1A1_H6_1} and \eqref{eqn:A1A1_H6_2} 
on the restricted $\omega$-lattice where $\omega^{(l)}$ are defined
can be expressed by the following single equation:
\begin{equation}\label{eqn:restricted_H6}
 \cfrac{\omega^{(l+3)}}{\omega^{(l)}}
  -{R_1}^l\left(\cfrac{a_0}{b}\right)\cfrac{\omega^{(l+2)}}{\omega^{(l+1)}}
 ={R_1}^l\left(\cfrac{a_0}{b}\right).
\end{equation}
We note that Equations \eqref{eqn:A6_z_12} and \eqref{eqn:A6_z_13} on the restricted $\omega$-lattice
can be expressed by the following single equation:
\begin{equation}\label{eqn:restricted_H1}
 \cfrac{(z^{(l)}-{R_1}^l({a_0}^{-1}b)z^{(l+1)})(z^{(l+2)}-{R_1}^l(a_1b)z^{(l+3)})}{z^{(l)}z^{(l+1)}}=1,
\end{equation}
where
\begin{equation}
 z^{(l)}={R_1}^l(z).
\end{equation}

In a similar manner as the case $(A_2+A_1)^{(1)}$,
from periodic reductions of partial difference equations,
we can obtain the quad-equations on the restricted $\omega$-lattice as the following lemma.

\begin{lemma}
Equations \eqref{eqn:restricted_H6} and \eqref{eqn:restricted_H1}
can be respectively obtained by periodic reductions of $D4_{(\delta_1,\delta_2,\delta_3)=(1,0,0)}$- and $H1_{\epsilon=0}$-type equations:
\begin{align}
 &D4(-\alpha\beta^{-1}U,\widehat{\overline{U}},\overline{U},\widehat{U};1,0,0)=0\notag\\
 &\Leftrightarrow~
 \cfrac{\widehat{\overline{U}}}{~U~}-\cfrac{\alpha\,\overline{U}}{\beta\,\widehat{U}}
 =\cfrac{\alpha}{\beta}\,,
 \label{eqn:H6}\\
 &H1(U^{-1},\overline{\alpha}\beta\widehat{\overline{U}},\alpha^{-1}\beta^{-1}\widehat{U}^{-1},\overline{U};\alpha^{-2},\alpha^{-1}(\alpha^{-1}-\beta^{-1});0)=0\notag\\
 &\Leftrightarrow~
 \cfrac{(U-\alpha\beta\widehat{U})(\overline{U}-\overline{\alpha}\beta\widehat{\overline{U}})}{U\widehat{U}}=1,
 \label{ean:H1}
\end{align}
where we have used the notation \eqref{eqn:intro_notation_1}, with the $(1,-2)$-periodic condition
\begin{equation}\label{eqn:condition_H6}
U_{l+1,m-2}=U_{l,m}.
\end{equation}
\end{lemma}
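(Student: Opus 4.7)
The lemma consists of two independent claims, and for each I would reduce matters to a routine substitution by exploiting the relation ${R_1}^2 = T_1$. The unifying device is the identification $\bar{\,} = T_1$, $\hat{\,} = R_1$ of the lattice shifts with Weyl-group actions on the $\omega$-lattice: under this identification, any quantity of the form ${T_1}^l {R_1}^m(\text{seed}) = {R_1}^{2l+m}(\text{seed})$ is automatically $(1,-2)$-periodic in the sense of \eqref{eqn:condition_H6}, and the combined index $k := 2l+m$ parameterizes its orbit. This mirrors the strategy used earlier in the proof of the $Q1_{\epsilon=0}$ reduction leading to \eqref{eqn:2-1Q1}.

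For the reduction yielding \eqref{eqn:restricted_H6}, I would substitute $U_{l,m} = \omega^{(2l+m)}$ into \eqref{eqn:H6}. With $k = 2l+m$ this gives
\begin{equation*}
\frac{\omega^{(k+3)}}{\omega^{(k)}} - \frac{\alpha_l}{\beta_m}\,\frac{\omega^{(k+2)}}{\omega^{(k+1)}} = \frac{\alpha_l}{\beta_m}.
\end{equation*}
Consistency under the identification forces $\alpha_{l+1}/\alpha_l = \beta_{m-2}/\beta_m$, and I would fix this common ratio to be $q^{-1}$. Taking $\alpha_l = q^{-l}$ and then defining $\beta_m$ so that $\alpha_l/\beta_m = {R_1}^{2l+m}(a_0/b)$ is consistent; one must verify that ${R_1}^{2l+m}(a_0/b)$ factors as a function of $l$ alone divided by a function of $m$ alone, which it does once $\beta_m$ is allowed to alternate in form with the parity of $m$. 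Relabeling $k$ as $l$ then produces \eqref{eqn:restricted_H6}.

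For the reduction yielding \eqref{eqn:restricted_H1}, I would substitute $U_{l,m} = z^{(2l+m)}$ into \eqref{ean:H1}, obtaining
\begin{equation*}
\frac{\bigl(z^{(k)} - \alpha_l\beta_m\,z^{(k+1)}\bigr)\bigl(z^{(k+2)} - \alpha_{l+1}\beta_m\,z^{(k+3)}\bigr)}{z^{(k)}\,z^{(k+1)}} = 1.
\end{equation*}
Matching against \eqref{eqn:restricted_H1} requires $\alpha_l\beta_m = {R_1}^k(a_0^{-1}b)$ and $\alpha_{l+1}\beta_m = {R_1}^k(a_1 b)$, whose quotient forces $\alpha_{l+1}/\alpha_l = {R_1}^k(a_0 a_1) = {R_1}^k(q) = q$ by the $R_1$-invariance of $q$. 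The natural choice $\alpha_l = q^l$ then determines $\beta_m$ uniquely, and a short direct computation using the action of ${R_1}^m$ on parameters confirms that $\beta_m$ is indeed a function of $m$ alone, so that the substitution reproduces \eqref{eqn:restricted_H1}.

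The main obstacle is precisely the half-translation character of $R_1$: because $R_1$ swaps $a_0$ and $a_1$, the iterates ${R_1}^k$ acquire parity-dependent values depending on whether $k$ is even or odd. Matching these against the tensor-product structure $\alpha_l\beta_m$ (or $\alpha_l/\beta_m$) of the ABS parameters, where $\alpha_l$ depends only on $l$ and $\beta_m$ only on $m$, therefore requires treating the two parity classes of $m$ separately and allowing $\beta_m$ to take an alternating form. Once this is permitted, both claims reduce to transparent parameter bookkeeping, and the reductions go through by direct substitution.
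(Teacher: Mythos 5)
Your proof is correct and follows essentially the same route as the paper's: impose the $(1,-2)$-periodic condition, identify $\bar{\,}=T_1={R_1}^{2}$ and $\hat{\,}=R_1$, and match the lattice parameters (the paper fixes $\overline{\alpha}/\alpha=\beta/\widehat{\widehat{\beta}}=q^{-1}$, resp.\ $\overline{\alpha}/\alpha=q$, together with the seeds $\alpha_0/\beta_0=a_0/b$, resp.\ $\alpha_0\beta_0=b/a_0$). Your explicit check that ${R_1}^{2l+m}(a_0/b)$ and ${R_1}^{2l+m}(b/a_0)$ factor into an $l$-dependent part times a parity-alternating $m$-dependent part is a detail the paper leaves implicit, but the underlying argument is identical.
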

\begin{proof}
Equation \eqref{eqn:H6} and periodic condition \eqref{eqn:condition_H6} imply the condition on the parameters
\begin{equation}
 \cfrac{\overline{\alpha}}{~\alpha~}=\cfrac{~\beta~}{\widehat{\widehat{\beta}}}=q^{-1}.
\end{equation}
Therefore, Equation \eqref{eqn:H6} can be reduced to
\begin{equation}\label{eqn:2-1H6}
 \cfrac{\widehat{\widehat{\widehat{U}}}}{~U~}
  -\cfrac{\alpha\,\widehat{\widehat{U}}}{\beta\,\widehat{U}}
 =\cfrac{\alpha}{\beta}.
\end{equation}
Then, the statement holds since 
Equation \eqref{eqn:2-1H6} is equivalent to Equation \eqref{eqn:restricted_H6}
with the following correspondence:
\begin{equation}
 U_{0,0}=\omega^{(0)},\quad
 \cfrac{\alpha_0}{\beta_0}=\cfrac{a_0}{b},\quad
 \bar{}=T_1,\quad 
 \hat{}=R_1.
\end{equation}

In a similar manner, we obtain Equation \eqref{eqn:restricted_H1} from 
the reduction of Equation \eqref{ean:H1} with the following correspondence:
\begin{equation}
 U_{0,0}=z^{(0)},\quad
 \alpha_0\beta_0=\cfrac{b}{a_0},\quad
 \cfrac{\overline{\alpha}}{\alpha}=q,\quad
 \bar{}=T_1,\quad 
 \hat{}=R_1.
\end{equation}
Therefore, we have completed the proof.
\end{proof}

\section{Concluding remarks}
\label{ConcludingRemarks}
In this paper, we constructed $\omega$-lattices associated with the extended affine Weyl groups of types $(A_2+A_1)^{(1)}$
and $(A_1+A_1)^{(1)}$.

More general $\omega$-lattices are possible. 
They share certain fundamental properties with the $\omega$-lattice constructed in Section \ref{section:omega}. 
In particular, all $f$-functions arise as rational combinations of $\omega$-functions and 
all $\omega$-functions in each connected component of an $\omega$-lattice 
are determined by three initial variables in that component. 
We will explore the general constructions of $\omega$-lattices in subsequent works. 
An interesting future project is to construct various $\omega$-lattices 
associated with Painlev\'e systems of other surface types in Sakai's classification\cite{SakaiH2001:MR1882403}.
\subsection*{Acknowledgment}
The authors would like to express their sincere thanks to Prof. Y. Yamada, Dr. J. Atkinson and Dr. P. Kassotakis
for fruitful discussions and valuable suggestions.
In particular, we are grateful to Prof. Y. Yamada for providing us the details of \cite{YamadaY:A1A1tau} before publication.

This research was supported by an Australian Laureate Fellowship \# FL120100094 and grant \# DP130100967 from the Australian Research Council.
\appendix
\section{Proof of Lemma \ref{lemma:tau_A6}}
\label{section:proof_tauA6}
In this section, we prove Lemma \ref{lemma:tau_A6}.

We consider the following eight base points:
\begin{subequations}\label{eqns:appendix_bps}
\begin{align}
 &P_1: (f,g)=(-1,0),\\
 &P_2: (f,g)=(0,-1),\\
 &P_3: (f,g)=(\infty,0),
 &&P_6: (f,g;fg)=(\infty,0;-q^{-1}b^{-1}),\\
 &P_4: (f,g)=(0,\infty),
 &&P_7: (f,g;fg)=(0,\infty;-a_0b^{-1}),\\
 &P_5: (f,g)=(\infty,\infty),
 &&P_8: (f,g;f/g)=(\infty,\infty;-{a_0}^{-1}),
\end{align}
\end{subequations}
where $a_0$, $b$, $q$ are complex parameters.
These base points are of the following $q$-difference equation:
\begin{equation}
 \bar{g}=\cfrac{a_0(f+1)}{b f g},\quad
 \bar{f}=\cfrac{\bar{g}+1}{b \bar{g} f},
\end{equation}
where $\,\bar{}\,$ means $t\mapsto qt$,
which is equivalent to System \eqref{eqn:A1A1_qp2_2}.
Let $\epsilon: X \to \mathbb{P}^1\times\mathbb{P}^1$ denotes blow up of $\mathbb{P}^1\times\mathbb{P}^1$ at the points \eqref{eqns:appendix_bps}.
The linear equivalence classes of the total transform of the coordinate lines $f$=constant and $g$=constant are denoted by $H_f$ and $H_g$, respectively. 
The Picard group of $X$, denoted by Pic$(X)$, is given by
\begin{equation}\label{eqn:pic_lat}
 {\rm Pic}(X)=\mathbb{Z}H_f\bigoplus\mathbb{Z}H_g\bigoplus_{i=1}^8\mathbb{Z}E_i,
\end{equation}
where $E_i=\epsilon^{-1}(P_i)$ is the total transform of the point of the $i$-th blow up. 
The intersection form $(|)$ is defined by 
\begin{equation}
 (H_f|H_g)=1,\quad
 (H_f|H_f)=(H_g|H_g)=(H_f|E_i)=(H_g|E_i)=0,\quad
 (E_i|E_j)=-\delta_{ij},
\end{equation}
where $1\le i\le 8$, $1\le j\le 8$ are integers. 
The anti-canonical divisor of $X$, denoted by $-K_X$, is uniquely decomposed into the prime divisors:
\begin{displaymath}
 \delta=-K_X=2H_f+2H_g-\sum_{i=1}^8E_i=\sum_{i=1}^7D_i,
\end{displaymath}
where
\begin{subequations}
\begin{align}
 &D_1=H_f-E_2-E_4,\quad
 D_2=E_4-E_7,\quad
 D_3=H_g-E_4-E_5,\quad
 D_4=E_5-E_8,\\
 &D_5=H_f-E_3-E_5,\quad
 D_6=E_3-E_6,\quad
 D_7=H_g-E_1-E_3.
\end{align}
\end{subequations}
We can show that the corresponding Cartan matrix
\begin{equation}
 (d_{ij})_{i,j=1}^7
 =\begin{pmatrix}
 2&-1&0&0&0&0&-1\\
 -1&2&-1&0&0&0&0\\
 0&-1&2&-1&0&0&0\\
 0&0&-1&2&-1&0&0\\
 0&0&0&-1&2&-1&0\\
 0&0&0&0&-1&2&-1\\
 -1&0&0&0&0&-1&2
 \end{pmatrix},
\end{equation}
where
\begin{equation}
 d_{ij}=\cfrac{2(D_i|D_j)}{(D_j|D_j)},
\end{equation}
and Dynkin diagram (see Figure \ref{fig:A6_dynkin}) are of type $A_6^{(1)}$.
Thus, we can set the root lattice as
\begin{equation}
 Q(A_6^{(1)})=\bigoplus_{i=1}^7\mathbb{Z}D_i,
\end{equation}
and identify the surface $X$ as being type $A_6^{(1)}$ in Sakai's list.

\begin{figure}[t]
\begin{tabular}{cc}
\begin{minipage}{0.5\hsize}
{\unitlength 0.1in%
\begin{picture}(27.1500,31.0000)(16.7000,-33.4000)%
%
\special{pn 20}%
\special{pa 2616 1000}%
\special{pa 3480 1000}%
\special{pa 4019 1675}%
\special{pa 3827 2518}%
\special{pa 3048 2893}%
\special{pa 2270 2518}%
\special{pa 2078 1675}%
\special{pa 2616 1000}%
\special{pa 3480 1000}%
\special{fp}%
%
\special{sh 0}%
\special{ia 2590 1000 90 90 0.0000000 6.2831853}%
\special{pn 8}%
\special{ar 2590 1000 90 90 0.0000000 6.2831853}%
\put(26.0000,-8.3000){\makebox(0,0){$D_1$}}%
%
\special{sh 0}%
\special{ia 3490 1000 90 90 0.0000000 6.2831853}%
\special{pn 8}%
\special{ar 3490 1000 90 90 0.0000000 6.2831853}%
%
\special{sh 0}%
\special{ia 4020 1670 90 90 0.0000000 6.2831853}%
\special{pn 8}%
\special{ar 4020 1670 90 90 0.0000000 6.2831853}%
%
\special{sh 0}%
\special{ia 2070 1670 90 90 0.0000000 6.2831853}%
\special{pn 8}%
\special{ar 2070 1670 90 90 0.0000000 6.2831853}%
\put(35.0000,-8.3000){\makebox(0,0){$D_2$}}%
%
\special{pn 8}%
\special{pa 3050 240}%
\special{pa 3050 3340}%
\special{dt 0.045}%
%
\special{pn 8}%
\special{pa 1705 880}%
\special{pa 4172 2757}%
\special{dt 0.045}%
%
\special{pn 8}%
\special{pa 4385 865}%
\special{pa 1940 2771}%
\special{dt 0.045}%
%
\special{sh 0}%
\special{ia 3050 2885 90 90 0.0000000 6.2831853}%
\special{pn 8}%
\special{ar 3050 2885 90 90 0.0000000 6.2831853}%
%
\special{sh 0}%
\special{ia 2280 2490 90 90 0.0000000 6.2831853}%
\special{pn 8}%
\special{ar 2280 2490 90 90 0.0000000 6.2831853}%
%
\special{sh 0}%
\special{ia 3830 2490 90 90 0.0000000 6.2831853}%
\special{pn 8}%
\special{ar 3830 2490 90 90 0.0000000 6.2831853}%
\put(40.7500,-14.9000){\makebox(0,0){$D_3$}}%
\put(40.1000,-23.8000){\makebox(0,0){$D_4$}}%
\put(32.1500,-30.3000){\makebox(0,0){$D_5$}}%
\put(21.1000,-23.8000){\makebox(0,0){$D_6$}}%
\put(20.7500,-14.9000){\makebox(0,0){$D_7$}}%
%
\special{pn 8}%
\special{ar 3050 750 300 300 3.7295953 5.6951827}%
%
\special{pn 8}%
\special{pa 3291 571}%
\special{pa 3300 584}%
\special{fp}%
\special{sh 1}%
\special{pa 3300 584}%
\special{pa 3278 518}%
\special{pa 3270 540}%
\special{pa 3246 541}%
\special{pa 3300 584}%
\special{fp}%
%
\special{pn 8}%
\special{pa 2809 571}%
\special{pa 2800 584}%
\special{fp}%
\special{sh 1}%
\special{pa 2800 584}%
\special{pa 2854 541}%
\special{pa 2830 540}%
\special{pa 2822 518}%
\special{pa 2800 584}%
\special{fp}%
%
\special{pn 8}%
\special{ar 4005 1140 300 300 4.6568905 0.3392926}%
%
\special{pn 8}%
\special{pa 4293 1225}%
\special{pa 4288 1240}%
\special{fp}%
\special{sh 1}%
\special{pa 4288 1240}%
\special{pa 4328 1183}%
\special{pa 4305 1189}%
\special{pa 4290 1170}%
\special{pa 4288 1240}%
\special{fp}%
%
\special{pn 8}%
\special{pa 4004 840}%
\special{pa 3988 840}%
\special{fp}%
\special{sh 1}%
\special{pa 3988 840}%
\special{pa 4055 860}%
\special{pa 4041 840}%
\special{pa 4055 820}%
\special{pa 3988 840}%
\special{fp}%
%
\special{pn 8}%
\special{ar 2080 1160 300 300 2.8514660 4.8208996}%
%
\special{pn 8}%
\special{pa 2097 860}%
\special{pa 2113 862}%
\special{fp}%
\special{sh 1}%
\special{pa 2113 862}%
\special{pa 2049 834}%
\special{pa 2060 855}%
\special{pa 2044 874}%
\special{pa 2113 862}%
\special{fp}%
%
\special{pn 8}%
\special{pa 1789 1231}%
\special{pa 1793 1246}%
\special{fp}%
\special{sh 1}%
\special{pa 1793 1246}%
\special{pa 1795 1176}%
\special{pa 1779 1194}%
\special{pa 1756 1187}%
\special{pa 1793 1246}%
\special{fp}%
\put(31.4000,-3.8500){\makebox(0,0){$\pi$}}%
\put(42.6000,-8.4500){\makebox(0,0){$w_0$}}%
\put(18.4500,-8.4000){\makebox(0,0){$w_1$}}%
\end{picture}}%
\end{minipage}&
\begin{minipage}{0.5\hsize}
\hspace{4em}
{\unitlength 0.1in%
\begin{picture}(14.2700,15.0000)(10.8000,-17.0000)%
%
\special{pn 13}%
\special{pa 1400 900}%
\special{pa 2400 900}%
\special{fp}%
%
\special{pn 13}%
\special{pa 1400 1000}%
\special{pa 2400 1000}%
\special{fp}%
%
\special{pn 8}%
\special{pa 1900 200}%
\special{pa 1900 1700}%
\special{dt 0.045}%
%
\special{sh 0}%
\special{ia 2427 948 80 80 0.0000000 6.2831853}%
\special{pn 8}%
\special{ar 2427 948 80 80 0.0000000 6.2831853}%
\put(24.2700,-7.7800){\makebox(0,0){$\alpha_1$}}%
%
\special{sh 0}%
\special{ia 1382 948 80 80 0.0000000 6.2831853}%
\special{pn 8}%
\special{ar 1382 948 80 80 0.0000000 6.2831853}%
\put(13.8200,-7.7800){\makebox(0,0){$\alpha_0$}}%
%
\special{pn 13}%
\special{pa 1400 1400}%
\special{pa 2400 1400}%
\special{fp}%
%
\special{pn 13}%
\special{pa 1400 1500}%
\special{pa 2400 1500}%
\special{fp}%
%
\special{sh 0}%
\special{ia 2427 1448 80 80 0.0000000 6.2831853}%
\special{pn 8}%
\special{ar 2427 1448 80 80 0.0000000 6.2831853}%
\put(24.2700,-12.7800){\makebox(0,0){$\beta_1$}}%
%
\special{sh 0}%
\special{ia 1382 1448 80 80 0.0000000 6.2831853}%
\special{pn 8}%
\special{ar 1382 1448 80 80 0.0000000 6.2831853}%
\put(13.8200,-12.7800){\makebox(0,0){$\beta_0$}}%
%
\special{pn 8}%
\special{ar 1900 630 200 200 3.4633432 5.9614348}%
%
\special{pn 8}%
\special{pa 2084 553}%
\special{pa 2090 567}%
\special{fp}%
\special{sh 1}%
\special{pa 2090 567}%
\special{pa 2082 498}%
\special{pa 2069 518}%
\special{pa 2045 514}%
\special{pa 2090 567}%
\special{fp}%
%
\special{pn 8}%
\special{pa 1716 553}%
\special{pa 1710 567}%
\special{fp}%
\special{sh 1}%
\special{pa 1710 567}%
\special{pa 1755 514}%
\special{pa 1731 518}%
\special{pa 1718 498}%
\special{pa 1710 567}%
\special{fp}%
\put(19.7500,-3.7500){\makebox(0,0){$\pi$}}%
\end{picture}}%
\end{minipage}
\end{tabular}
\caption{Dynkin diagrams for the root lattices. Left: $Q(A_6^{(1)})$, right: $Q(A_6^{(1)\bot})$.}
\label{fig:A6_dynkin}
\end{figure}
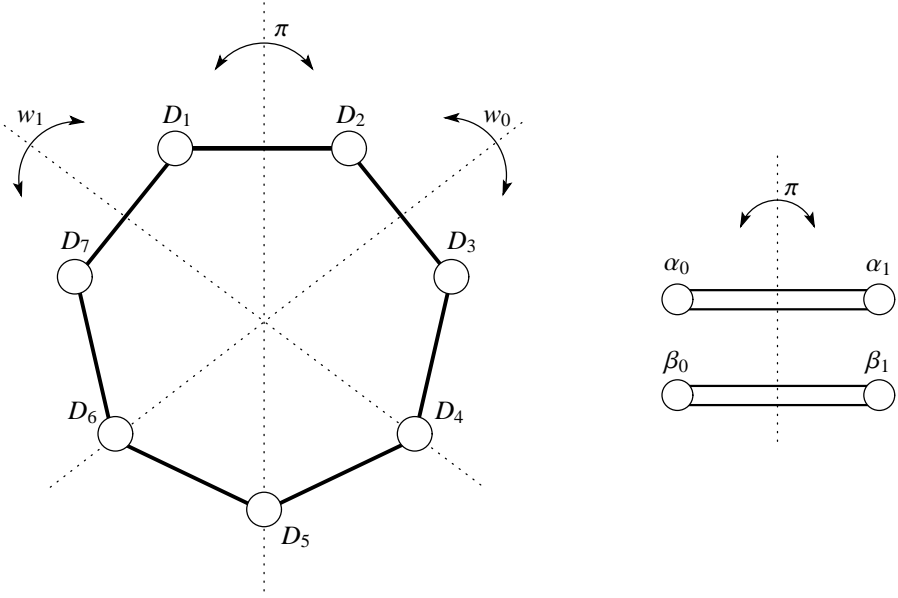

Moreover, we obtain the orthogonal root lattice 
\begin{equation}
 Q(A_6^{(1)\bot})
 =\mathbb{Z}\alpha_0\bigoplus\mathbb{Z}\alpha_1
 \bigoplus\mathbb{Z}\beta_0\bigoplus\mathbb{Z}\beta_1,
\end{equation}
where
\begin{subequations}
\begin{align}
 &\alpha_0=H_f+H_g-E_1-E_2-E_5-E_8,\\
 &\alpha_1=H_f+H_g-E_3-E_4-E_6-E_7,\\
 &\beta_0=3H_f+H_g-3E_1+E_2-2E_4-E_5-2E_7-E_8,\\
 &\beta_1=-H_f+H_g+2E_1-2E_2-E_3+E_4-E_6+E_7,\\
 &\delta=\alpha_0+\alpha_1=\beta_0+\beta_1,
\end{align}
\end{subequations}
by searching for elements of Pic$(X)$ that are orthogonal to all divisors $D_i$, $i=1,\dots,7$. 
The root lattice $Q(A_6^{(1)\bot})$ can be divided into the following two root lattice
\begin{equation}
 \mathbb{Z}\alpha_0\bigoplus\mathbb{Z}\alpha_1,\quad
 \mathbb{Z}\beta_0\bigoplus\mathbb{Z}\beta_1,
\end{equation}
since $\alpha_i$, $i=0,1$, and $\beta_i$, $i=0,1$, are orthogonal to each other: $(\alpha_i|\beta_j)=0$ where $i=0,1$ and $j=0,1$.
Moreover, their corresponding Cartan matrices
\begin{equation}
 (a_{ij})_{i,j=0}^1=(b_{ij})_{i,j=0}^1=\begin{pmatrix}2&-2\\-2&2\end{pmatrix},
\end{equation}
where
\begin{equation}
 a_{ij}=\cfrac{2(\alpha_i|\alpha_j)}{(\alpha_j|\alpha_j)},\quad
 b_{ij}=\cfrac{2(\beta_i|\beta_j)}{(\beta_j|\beta_j)},
\end{equation}
and Dynkin diagram (see Figure \ref{fig:A6_dynkin}) are of type $A_1^{(1)}$.
Therefore, we can set the root lattices as follows:
\begin{equation}
 Q(A_{1,|\alpha|^2=2}^{(1)})=\mathbb{Z}\alpha_0\bigoplus \mathbb{Z}\alpha_1,\quad
 Q(A_{1,|\beta|^2=14}^{(1)})=\mathbb{Z}\beta_0\bigoplus \mathbb{Z}\beta_1,
\end{equation}

Let us consider the Cremona isometries for this setting.
A Cremona isometry is defined by an automorphism of Pic$(X)$ which preserves 
\begin{description}
\item[(i)]
the intersection form on Pic$(X)$;
\item[(ii)]
the canonical divisor $K_X$;
\item[(iii)]
effectiveness of each effective divisor of Pic$(X)$.
\end{description}
It is well known that 
automorphisms of the Dynkin diagram corresponding to the divisors
and reflections for simple roots which orthogonal to all divisors
are Cremona isometries and form (extended) affine Weyl group\cite{SakaiH2001:MR1882403,DO1988:MR1007155,LooijengaE1981:MR632841}.
We define the reflections $s_i$, $i=0,1$, respectively across the hyperplane orthogonal to the root $\alpha_i$, $i=0,1$, by
\begin{equation}
 s_i(v)=v-\cfrac{2(v|\alpha_i)}{(\alpha_i|\alpha_i)}\,\alpha_i
\end{equation}
for all $v\in {\rm Pic}(X)$.
We can easily verify that the actions of $W(A_1^{(1)})=\langle s_0,s_1\rangle$ on Pic$(X)$ 
satisfy the fundamental relations of the affine Weyl group of type $A_1^{(1)}$:
\begin{equation}
 {s_0}^2={s_1}^2=(s_0s_1)^\infty=1.
\end{equation}
Note that the reflections corresponding to the roots $\beta_i$, $i=0,1$, cannot be constructed by this way 
since their self-intersection numbers are $-14$. 
We also define the group of the diagram automorphisms Aut$(A_6^{(1)})=\langle w_0,w_1,\pi\rangle$ by
{\allowdisplaybreaks
\begin{subequations}
\begin{align}
 &w_0:\begin{pmatrix}H_f\\H_g\\E_1\\E_2\\E_3\\E_4\\E_5\\E_6\\E_7\\E_8\end{pmatrix}
 \mapsto
 \begin{pmatrix}
 1&1&-1&0&0&-1&0&0&0&0\\
 2&1&-1&0&0&-1&-1&0&-1&0\\
 1&1&-1&0&0&-1&0&0&-1&0\\
 0&0&0&0&0&0&0&0&0&1\\
 0&0&0&0&1&0&0&0&0&0\\
 1&1&-1&0&0&-1&-1&0&0&0\\
 1&0&0&0&0&-1&0&0&0&0\\
 0&0&0&0&0&0&0&1&0&0\\
 1&0&-1&0&0&0&0&0&0&0\\
 0&0&0&1&0&0&0&0&0&0
 \end{pmatrix}
 \begin{pmatrix}H_f\\H_g\\E_1\\E_2\\E_3\\E_4\\E_5\\E_6\\E_7\\E_8\end{pmatrix},\\
 &w_1:\begin{pmatrix}H_f\\H_g\\E_1\\E_2\\E_3\\E_4\\E_5\\E_6\\E_7\\E_8\end{pmatrix}
 \mapsto
 \begin{pmatrix}
 0&1&0&0&0&0&0&0&0&0\\
 1&0&0&0&0&0&0&0&0&0\\
 0&0&0&1&0&0&0&0&0&0\\
 0&0&1&0&0&0&0&0&0&0\\
 0&0&0&0&0&1&0&0&0&0\\
 0&0&0&0&1&0&0&0&0&0\\
 0&0&0&0&0&0&1&0&0&0\\
 0&0&0&0&0&0&0&0&1&0\\
 0&0&0&0&0&0&0&1&0&0\\
 0&0&0&0&0&0&0&0&0&1
 \end{pmatrix}
 \begin{pmatrix}H_f\\H_g\\E_1\\E_2\\E_3\\E_4\\E_5\\E_6\\E_7\\E_8\end{pmatrix},\\
 &\pi:\begin{pmatrix}H_f\\H_g\\E_1\\E_2\\E_3\\E_4\\E_5\\E_6\\E_7\\E_8\end{pmatrix}
 \mapsto
 \begin{pmatrix}
 1&0&0&0&0&0&0&0&0&0\\
 1&1&-1&0&0&-1&0&0&0&0\\
 1&0&-1&0&0&0&0&0&0&0\\
 0&0&0&0&0&0&0&0&1&0\\
 0&0&0&0&0&0&1&0&0&0\\
 1&0&0&0&0&-1&0&0&0&0\\
 0&0&0&0&1&0&0&0&0&0\\
 0&0&0&0&0&0&0&0&0&1\\
 0&0&0&1&0&0&0&0&0&0\\
 0&0&0&0&0&0&0&1&0&0
 \end{pmatrix}
 \begin{pmatrix}H_f\\H_g\\E_1\\E_2\\E_3\\E_4\\E_5\\E_6\\E_7\\E_8\end{pmatrix},
\end{align}
\end{subequations}
}
where their actions on the divisors are given by
\begin{subequations}
\begin{align}
 w_0&:(D_1,D_2,D_3,D_4,D_5,D_6,D_7)\mapsto(D_4,D_3,D_2,D_1,D_7,D_6,D_5),\\
 w_1&:(D_1,D_2,D_3,D_4,D_5,D_6,D_7)\mapsto(D_7,D_6,D_5,D_4,D_3,D_2,D_1),\\
 \pi&:(D_1,D_2,D_3,D_4,D_5,D_6,D_7)\mapsto(D_2,D_1,D_7,D_6,D_5,D_4,D_3).
\end{align}
\end{subequations}

Since transformations $w_i$, $i=0,1$, respectively correspond to the reflections of the roots $\beta_i$, $i=0,1$, as follows:
\begin{subequations}
\begin{align}
 &w_0:(\alpha_0,\alpha_1,\beta_0,\beta_1)
 \mapsto(\alpha_0,\alpha_1,-\beta_0,\beta_1+2\beta_0),\\
 &w_1:(\alpha_0,\alpha_1,\beta_0,\beta_1)
 \mapsto(\alpha_0,\alpha_1,\beta_0+2\beta_1,-\beta_1),
\end{align}
\end{subequations}
and satisfy the fundamental relations of the affine Weyl group of type $A_1^{(1)}$:
\begin{equation}
 {w_0}^2={w_1}^2=(w_0w_1)^\infty=1,
\end{equation}
we can set $W(A_{1,|\beta|^2=14}^{(1)})=\langle w_0,w_1\rangle$.
Note that the action of $W(A_1^{(1)})=\langle s_0,s_1\rangle$ and 
that of $W(A_{1,|\beta|^2=14}^{(1)})=\langle w_0,w_1\rangle$ commute.
Moreover, since the action of $\pi$ on the roots are given by
\begin{equation}
 \pi:(\alpha_0,\alpha_1,\beta_0,\beta_1)\mapsto(\alpha_1,\alpha_0,\beta_1,\beta_0),
\end{equation}
we can set Aut$((A_1+A_{1,|\beta|^2=14})^{(1)})=\langle \pi\rangle$.
Note that the transformation $\pi$ satisfies the following relations
\begin{equation}
 \pi^2=1,\quad \pi s_0=s_1 \pi,\quad \pi w_0=w_1 \pi.
\end{equation}
Therefore, the group of Cremona isometries 
\begin{equation}
 W(A_1^{(1)})\rtimes {\rm Aut}(A_6^{(1)})
 =W((A_1+A_{1,|\beta|^2=14})^{(1)})\rtimes {\rm Aut}((A_1+A_{1,|\beta|^2=14})^{(1)})
\end{equation}
form the extended affine Weyl group of type $(A_1+A_1)^{(1)}$, denoted by $\widetilde{W}((A_1+A_1')^{(1)})$.

Finally, we construct the $\tau$ functions after 
\cite{TsudaT2008:MR2425662,TsudaT2006:MR2247459,TT2009:MR2511044,KMNOY2003:MR1984002}. 
We define the variables $f_u$, $f_d$, $g_u$ and $g_d$ by
\begin{equation}
 f=\cfrac{f_u}{f_d},\quad
 g=\cfrac{g_u}{g_d},
\end{equation}
and their polynomial $F_\Lambda$ by
\begin{equation}
 F_\Lambda=F_\Lambda(f_u,f_d,g_u,g_d),
\end{equation}
where $\Lambda=mH_f+nH_g-\sum_{i=1}^8\mu_iE_i$
which corresponds to a curve of bi-degree $(m,n)$ on $\mathbb{P}^1\times\mathbb{P}^1$
passing through base points $P_i$ with multiplicity $\mu_i$.
For example,
\begin{equation}
 F_{H_f+H_g-E_2-E_5-E_8}=\gamma(a_0 f_ug_d+f_dg_u+f_dg_d),
\end{equation}
where $\gamma$ is an arbitrary non-zero complex parameter.
\begin{definition}
We define a mapping $\tau$ on the set
\begin{equation}
 M=\set{w(E_i)}{w\in\widetilde{W}((A_1+A_1')^{(1)}),~ i=1,\dots,8}
\end{equation}
by the following conditions:
\begin{description}
\item[(i)]
\begin{equation}
 w.\tau(\Lambda)=\tau(w^{-1}(\Lambda)),
\end{equation}
where $w\in\widetilde{W}((A_1+A_1')^{(1)})$ and $\Lambda\in M$;
\item[(ii)]
\begin{equation}
 \tau(\Lambda)=\cfrac{F_\Lambda(f_u,f_d,g_u,g_d)}{\tau(E_1)^{\mu_1}\cdots\tau(E_8)^{\mu_8}},
\end{equation}
for $\Lambda=mH_f+nH_g-\sum_{i=1}^8\mu_iE_i\in M$;
\item[(iii)]
\begin{equation}
 \cfrac{F_\Lambda(f_u,f_d,g_u,g_d)}{F_\Lambda(1,1,1,1)}=\tau(E_1)^{\mu_1}\cdots\tau(E_8)^{\mu_8},
\end{equation}
for $\Lambda=mH_f+nH_g-\sum_{i=1}^8\mu_iE_i\in\{D_1,D_3,D_5,D_7\}$,
which are equivalent to
\begin{equation}
 f_u=\tau(E_2)\tau(E_4),\quad
 f_d=\tau(E_3)\tau(E_5),\quad
 g_u=\tau(E_1)\tau(E_3),\quad
 g_d=\tau(E_4)\tau(E_5).
\end{equation}
\end{description}
\end{definition}

Finally, setting
\begin{subequations}
\begin{align}
 &\tau_{-3}=\tau(E_1),\quad
 \tau_{-2}=\tau(E_4)=\tau(E_7),\quad
 \tau_{-1}=\tau(E_5)=\tau(E_8),\\
 &\tau_0=\tau(E_3)=\tau(E_6),\quad
 \tau_1=\tau(E_2)
\end{align}
\end{subequations}
and normalizing the polynomials $F_\Lambda$ to be designed to hold the fundamental relations \eqref{eqns:A1A1_fundamental},
we have completed the proof of Lemma \ref{lemma:tau_A6}.
\def\cprime{$'$} \def\cprime{$'$}

\end{document}